\let\csname equation*\endcsname=\relax
\let\csname endequation*\endcsname=\relax
\setlist{noitemsep} \usepackage{qcircuit}
\newcommand{\cptp}[1]{\mathsf{#1}} \newcommand{\pd}[1]{\mathcal{#1}}  \newcommand{\sch}[1]{\bm{#1}}
\newcommand{\Id}{\cptp{I}}
\newcommand{\X}{\cptp{X}}
\newcommand{\Z}{\cptp{Z}}
\newcommand{\Y}{\cptp{Y}}
\newcommand{\Ha}{\cptp{H}}
\renewcommand{\1}{\mathds{1}}
\newcommand{\CZ}{\mathsf{CZ}}
\newcommand{\CR}{\mathsf{CR}}
\newcommand{\acc}{\mathsf{Acc}}
\newcommand{\rej}{\mathsf{Rej}}
\newcommand{\TP}{C\cup T}
\newcommand{\Deco}{\cptp D_{O, C}}
\newcommand{\Cdev}[1]{\tilde{\cptp C}_{T, #1}}
\newtheorem{theorem}{Theorem}
\newtheorem{lemma}{Lemma}
\newtheorem{definition}{Definition}
\newtheorem{example}{Example}
\newtheorem{remark}{Remark}
\let\ifcomment\iffalse 
	\newcommand{\mn}[2]{\textsuperscript{\textcolor{red}{\textsf{\textbf{#1}}}}\marginnote{\tiny \textcolor{red}{\textsf{{\textbf{#1:}} #2}}}}
	\newcommand{\pn}[2]{\textcolor{red}{\textsf{\textbf{#1:} #2}}}
\newcommand{\mn}[2]{}
	\newcommand{\pn}[2]{}
\renewcommand\paragraph{\@startsection{paragraph}{4}{\z@}{1ex \@plus.5ex \@minus.1ex}{-0.7em}{\normalfont\normalsize\itshape}}
\begin{document}

\pagestyle{plain}

\title{Unifying Quantum Verification and Error-Detection: Theory and Tools for Optimisations}

\author{Theodoros Kapourniotis$^1$, Elham Kashefi$^{2,3}$, 
Dominik Leichtle$^{2, 3}$, Luka Music$^4$ and Harold Ollivier$^5$}

\address{$^1$Department of Physics, University of Warwick, Coventry CV4 7AL, United Kingdom}
\address{$^2$Laboratoire d’Informatique de Sorbonne Université, CNRS, Sorbonne Université, 75005 Paris, France}
\address{$^3$School of Informatics, University of Edinburgh, Edinburgh EH8 9AB, United Kingdom}
\address{$^4$Quandela, 91300 Massy, France}
\address{$^5$DI-ENS, Ecole Normale Supérieure, PSL University, CNRS, INRIA, 75005 Paris, France}
\ead{luka.music@quandela.com}

\begin{abstract}
With the advent of cloud-based quantum computing, it has become vital to provide strong guarantees that computations delegated by clients to quantum service providers have been executed faithfully. Secure -- blind and verifiable -- Delegated Quantum Computing (SDQC) has emerged as one of the key approaches to address this challenge, yet current protocols lack at least one of the following three ingredients: composability, noise-robustness and modularity.

To tackle this question, our paper lays out the fundamental structure of SDQC protocols, namely mixing two components: the computation which the client would like the server to perform and tests that are designed to detect a server's malicious behaviour. Using this abstraction, our main technical result is a set of sufficient conditions on these components which imply the security and noise-robustness of generic SDQC protocols in the composable Abstract Cryptography framework. This is done by establishing a correspondence between these security properties and the error-detection capabilities of the test computations. Changing the types of tests and how they are mixed with the client's computation automatically yields new SDQC protocols with different security and noise-robustness capabilities.

This approach thereby provides the desired modularity as our sufficient conditions on test computations simplify the steps required to prove the security of the protocols and allows to focus on the design and optimisation of test rounds to specific situations. We showcase this by systematising the search for improved SDQC protocols for Bounded-error Quantum Polynomial-time ($\mathsf{BQP}$) computations. The resulting protocols do not require more hardware on the server's side than what is necessary to blindly delegate the computation without verification, and they outperform all previously known results.

\noindent{\it Keywords\/}: Quantum Verification, Secure Delegated Computation, Error-Detection
\end{abstract}

\maketitle

\newpage

\section{Introduction}
\label{sec:intro}
\subsection{Context}
Secure delegation of computation is a long-standing topic of research where a client wants to perform a computation on a remote server, without necessarily trusting that it operates as the client desires~\cite{G09:fully,G17:verifiable}. This question was first considered in the quantum context to understand if one could trust the result provided by a quantum machine whose behaviour would be intractable to simulate by classical means~\cite{G04:conference}. It was formalised shortly after by~\cite{A07:scott} as: ``If a quantum computer can efficiently solve a problem, can it also efficiently convince [a classical user] that the solution is correct? More formally, does every language in the class $\mathsf{BQP}$ admit an interactive protocol where the prover is in $\mathsf{BQP}$ and the verifier is in [Bounded-Error Probabilistic Polynomial time] $\mathsf{BPP}$?''. A positive answer would both provide a definitive argument to convince quantum computation sceptics, and have fundamental implications regarding the falsifiability of quantum mechanics~\cite{V07:conference,AV12:is}. This lead to the development of two lines of research directed at finding such protocols.

They first considered a $\mathsf{BPP}$ verifier augmented with a constant number of qubits. These qubits are used either to encrypt the instructions delegated to the server or to perform the computation on a complex resource state provided by the server~\cite{ABE10:interactive,ABEM17:interactive,FK17:unconditionally,B15:how,HM15:verifiable}. More recent protocols map $\mathsf{BQP}$ computations onto the $2$-local Hamiltonian problem. \cite{FHM18:post,HKSE17:direct} remove the need for encryption, but the client must perform $\X$ and $\Z$ measurements. The second line of protocols, more recent yet, is based on the ground breaking work of~\cite{M18:classical}, in which the entirely classical client relies on various post-quantum secure computational assumptions for encryption.

More pragmatically, secure delegated quantum computing protocols provide a way to establish trust between a remote client and a quantum computing service provider. In this context, a delegated computation is deemed (i) blind if neither the data nor the algorithm are revealed to the server, and (ii) verified if a non-aborted computation is guaranteed to have been performed by the server as instructed even if no assumptions are made on the server's behaviour. Ever larger remotely accessible quantum computers are being developed, able to handle more and more complex computations. This has consequently increased the interest for this question: the absence of these cryptographic guarantees will eventually become a major pain point for this nascent industry, precisely when clients will want to use these machines to solve economically interesting problems using specialised algorithms on sensitive data.

Yet, any truly practical protocol must posses at least the following three vital properties. First, the protocol must be composably secure so that it can be safely used in larger applications. Second, it needs to have inherent noise-robustness and a low hardware overhead. Otherwise, clients of a (possibly noisy) quantum computer would have to sacrifice most of its computational power simply to achieve security. And third, the protocol needs to be modular so that its components may either be tailored to the client's and server's needs or optimised to specific use-cases, independently of one another. Such changes should not require an entirely new security proof.

None of the analysed approaches satisfy fully the criteria above. There is in fact a lack of theoretical understanding regarding the requirements for constructing robust and efficient secure delegation protocols, as well as a lack of tools to systematise their optimisation. While several independent protocols optimise either the qubit communication~\cite{KDK15:optimising,Z21:succinct}, the server's hardware overhead~\cite{KW17:optimised,XTH20:improved,LMKO21:verifying}, the set of operations that the client must wield in the protocol~\cite{FKD18:reducing}, or the amount of tolerable noise~\cite{LMKO21:verifying}, none of them provide general methods that could be readily reused and all require security to be proved from scratch.

In this paper, we lay out the foundations for building protocols that provide all three properties presented above. We start by abstracting the main ingredients of prepare-and-send protocols that provide secure delegated quantum computing in the Measurement-Based Quantum Computing (MBQC) framework. Intuitively, the traps -- single qubits in states known only to the client -- used in previous such protocols in fact implement an error-detection scheme. Building on that, we extend the type of traps to cover any computation that can be efficiently simulated by the client, and introduce the concept of trappified schemes -- computations mixed with randomised traps. We then reduce composable security and robustness of protocols using these trappified schemes to simple but powerful sufficient conditions: (i) \emph{detecting deviations from the client's instructions which are potentially harmful for the computation yields verification}, while (ii) \emph{being insensitive to those that are not harmful provides noise-robustness}. Modularity follows naturally from this approach as we show that (i) and (ii) relate to independent properties of trappified schemes. By formally connecting the verification capabilities of protocols to the error-detecting capabilities of their traps, we broaden considerably the sources of inspiration for designing new trappified schemes and therefore lower their overhead.

As a concrete application, we construct a generic compiler for verifying $\mathsf{BQP}$ computations without any overhead of physical resources compared to the blind delegation of the computation while accepting a constant level of global noise. Its efficiency is optimised thanks to the introduction of new tests inspired by syndrome measurements of error-correcting codes. The resulting protocol beats the current state-of-the-art robust SDQC protocol~\cite{LMKO21:verifying} in terms of detection efficiency, which is furthermore independent of the client's desired computation.

\subsection{Overview of Results}
\label{subsec:sum}

In this paper, we express our results in the prepare-and-send model trading generality for simplicity, whereas we rely on the equivalence with the receive-and-measure model to extend their applicability~\cite{WEP22:equivalence}. We chose to focus on protocols with weak quantum clients since protocols with fully classical clients impose a large overhead on the server to ensure the security of the post-quantum scheme they use. In addition, the various initiatives aiming at building metropolitan quantum networks together with the rapid development of photonic quantum computing hold the promise of cheap quantum communications.

In our model, the client prepares a small subset of quantum states, performs limited single-qubit operations and sends its prepared states to a server via a quantum communication channel.
The server then executes the client's instructions and possibly returns some quantum output via the same quantum channel.
As we seek not only verification but also blindness, we will use extensively the simple obfuscation technique put forth in the Universal Blind Quantum Computation (UBQC) protocol (see Section~\ref{sec:prelims} for basics about UBQC) and consisting in randomly rotating each individual qubit sent by the client to the server.

The main idea that has been put at work in previous verification protocols is that, in such case, the client can chose to insert some factitious computations alongside the one it really intends to delegate.
Because the client can choose factitious computations whose results are easy to compute classically and therefore to test, and because the server does not know whether the computation is genuine or factitious, these ensure that the server is non-malicious.

\paragraph{Analysing Deviations with Traps (Section \ref{sec:detect}).}
Here, we lay out a series of concepts that formally define theses factitious computations, or traps, as probabilistic error-detecting schemes.
More precisely, we define \emph{trappified canvases} as subcomputations on an MBQC graph with a fixed input state and classical outputs which follow a probability distribution that is efficiently computable classically.
This is paired to a decision function which, depending on the output of this subcomputation, returns whether the trap accepts or rejects.
The term canvas refers to the fact that there is still empty space on the graph alongside the factitious computation for the client's computation to be ``painted into''.
This task is left to an \emph{embedding algorithm}, which takes a computation and a trappified canvas and fills in the missing parts so that the output is a computation containing both the client's computation and a trap.

Because we aim at blindly delegating the execution of trappified canvases to a possibly fully malicious server that can deviate adaptively, a single trappified canvas will not be enough to constraint its behaviour significantly.
Instead we randomise the construction of trappified canvases, and in particular the physical location of the trap.
This gives rise to the concept of \emph{trappified schemes} (Definition~\ref{def:trap-scheme}) which are sets of trappified canvases from which the client can sample efficiently.

Additionally, for these constructs to be useful in blind protocols they need to satisfy two properties.
First, no information should leak to the server when it is using one trappified canvas over another.
This means that executing one trappified canvas or another must be indistinguishable to the server.
If this is the case, we say that they are \emph{blind-compatible}. In particular, it implies that all trappified canvases in a trappified scheme are supported by the same graph.
Second, no information should leak to the server about the computation in spite of being embedded into a larger computation that contains a trap.
This implies that the decision to accept or reject the computation should not be depending on the client's desired computation.
If this is the case, we call the embedding a \emph{proper embedding}.

Finally, we examine the effect of deviations on individual trappified canvases as well as on trappified schemes. More precisely, we categorise adversarial deviations with the help of trappified schemes as follows: (i) if the scheme rejects with probability $(1-\epsilon)$, then it $\epsilon$-detects the deviation; (ii) if the scheme accepts with probability $(1-\delta)$, it is $\delta$-insensitive to the deviation; and finally (iii) if the result of all possible computations of interest is correct with probability $(1-\nu)$, then the scheme is $\nu$-correct for this deviation.

\paragraph{Secure Verification from Trap Based Protocols (Section \ref{sec:verif}).}

Here, we prove a series of theorems that give sufficient conditions for constructing secure, efficient and robust verification protocols based on the detection, insensitivity and correctness properties of trappified schemes. These results may then serve as design guidelines for tailoring verification protocols to specific needs while removing the burden of proving anew the security for each concrete protocol obtained in this way.

We start by constructing a natural Prepare-and-send protocol from any trappified scheme (see informal Protocol~\ref{proto:informal}). 
\begin{protocol}[ht]
  \caption{Trappified Delegated Blind Computation Protocol (Informal)}
  \label{proto:informal}
    \begin{enumerate}
    	\item The Client samples a trappified canvas from the trappified scheme and embeds its computation, yielding a trappified pattern.
	    \item The Client blindly delegates this trappified pattern to the Server using the UBQC Protocol, after which the Client obtains the output of the trappified pattern.
	    \item The Client decides whether to abort or not based on the result of the decision function of the trappified canvas.
	    \item If it didn't abort, the Client performs some simple classical or quantum post-processing on the output.
    \end{enumerate}
\end{protocol}

We then address the following question: what are the conditions required for these error-detection mechanisms to provide security against arbitrarily malicious servers? Thanks to the blindness of the UBQC Protocol, we show that any strategy of the adversary can be expressed as: following the protocol correctly but applying a convex combination of Pauli operators right before any measurement or before sending back a qubit to the client. We therefore only need to analyse the effect of these strategies -- henceforth called \emph{Pauli deviations} -- on the trappified canvases.
The following theorem states that, if we want the protocol to be secure, the trappified scheme should detect with high probability at least all errors for which the computation is not correct. In other words, it is acceptable to not detect a deviation so long as it has only little effect on the result of the computation of interest.
\begin{theorem}[Detection Implies Verifiability, Informal]
Let $\mathcal E_\epsilon, \mathcal E_\nu \subset \mathcal{P}_V$, where $\mathcal{P}_V$ is the set of Pauli operators on the qubits indexed by the graph vertices (deviations), such that:
\begin{itemize}
\item $\mathcal{P}_V \setminus \mathcal E_\epsilon \subseteq \mathcal{E}_\nu$;
\item $\Id \in \mathcal E_\nu$.
\end{itemize}
If the Trappified Delegated Blind Computation Protocol uses a trappified scheme which:
\begin{itemize}
\item $\epsilon$-detects $\mathcal E_\epsilon$;
\item is $\delta$-insensitive to at least $\{\Id\}$;
\item is $\nu$-correct on $\mathcal E_\nu$;
\end{itemize}
then the protocol is $\max(\epsilon, \delta+\nu)$-secure against an arbitrarily malicious unbounded server.
\end{theorem}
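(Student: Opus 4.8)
The plan is to work directly in the Abstract Cryptography framework and exhibit a single simulator $\sigma$ such that, for every adversarial server, the real protocol is $\max(\epsilon,\delta+\nu)$-indistinguishable from the ideal blind-and-verifiable delegation resource composed with $\sigma$. The first step is to invoke the blindness reduction already established for the UBQC obfuscation: since the randomised rotations hide the delegated pattern, any strategy of the unbounded server can be rewritten as honestly running the protocol and then applying a convex combination of Pauli operators $P \in \mathcal{P}_V$ on the registers just before each measurement or return. It therefore suffices to analyse the protocol on a fixed Pauli deviation $P$ and, at the end, average over the adversary's distribution $\{p_P\}$ using the convexity of the trace distance, $d(\sum_P p_P \rho_P, \sum_P p_P \tau_P) \le \sum_P p_P\, d(\rho_P,\tau_P)$.

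Next I would construct the simulator. Using blindness, $\sigma$ generates a UBQC transcript perfectly indistinguishable from the real one without knowing the client's computation, and reads off the effective deviation $P$ from the adversary's actions. Because the trappified scheme is blind-compatible and the embedding is proper, the trap's output distribution and its decision function are independent of the hidden computation; hence $\sigma$ can sample the trap outcome and evaluate accept/reject exactly as the real client would, \emph{coupling} its decision to the real protocol's. On reject, $\sigma$ instructs the resource to abort; on accept, it instructs the resource to release the correct output. With this coupling the accept/abort events are identically distributed in the real and ideal worlds for every $P$, so the only possible discrepancy between the two worlds is the \emph{state} delivered upon acceptance.

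Finally I would bound the per-$P$ advantage using the partition $\mathcal{P}_V = \mathcal{E}_\epsilon \cup \mathcal{E}_\nu$ guaranteed by $\mathcal{P}_V \setminus \mathcal{E}_\epsilon \subseteq \mathcal{E}_\nu$. For $P \in \mathcal{E}_\epsilon$, $\epsilon$-detection forces acceptance with probability at most $\epsilon$, so the possibly-wrong real output is produced with probability $\le \epsilon$ and this branch contributes at most $\epsilon$. For $P \in \mathcal{E}_\nu$, $\nu$-correctness guarantees the accepted output is $\nu$-close to the resource's correct output, contributing at most $\nu$; the extra $\delta$ enters only through the completeness requirement on the honest component $P=\Id$ (which lies in $\mathcal{E}_\nu$), where $\delta$-insensitivity to $\{\Id\}$ bounds the spurious-abort probability that the ideal resource does not exhibit, yielding $\delta+\nu$ on that branch. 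Averaging over $\{p_P\}$ by convexity then gives the claimed bound $\max(\epsilon,\delta+\nu)$. The main obstacle I anticipate is twofold: rigorously showing, via blind-compatibility and proper embedding, that the simulator's trap-based abort decision can be coupled exactly to the real client's while remaining oblivious to the computation; and pinning down the ideal resource's completeness behaviour precisely enough that $\delta$ enters \emph{additively} only on the honest branch rather than across all of $\mathcal{E}_\nu$, so that the two regimes combine as a maximum rather than a sum.
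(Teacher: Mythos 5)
Your proposal follows essentially the same route as the paper's proof: reduce arbitrary behaviour of the unbounded server to a convex combination of Pauli deviations via blindness/twirling, build a simulator that runs the client's trap machinery obliviously of the computation (justified by the independence of trap outcomes from the embedded computation under a proper embedding, which couples accept/reject in the two worlds), and split deviations into the $\epsilon$-detected set (contributing $\epsilon$) and the $\nu$-correct set (contributing $\nu$), with $\delta$ entering additively only through the honest-case correctness condition, yielding $\max(\epsilon,\delta+\nu)$ overall. One minor remark: the simulator neither needs to nor can ``read off'' the effective deviation $P$ from the adversary's actions --- in the paper it simply runs the client's side of the protocol on a dummy computation $\cptp C_\emptyset$ and decides accept/abort from the computation-independent trap statistics, which is all your coupling argument actually uses.
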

This seemingly intuitive result is proved in the composable framework of Abstract Cryptography~\cite{MR11:abstract-cryptography}. A protocol is secure in this framework if it is a good approximation of an ideal protocol in which a trusted third party, called a resource, receives inputs privately, performs the client's desired computation and returns the outputs privately. The server can only choose whether to make the client abort or not -- independently of the client's computation or input -- and receives a controlled amount of information as a leak. In particular, the probability that the server makes the client accept an incorrect outcome should be low. We show the result above with a novel technique that allows us to derive the protocol's composable security directly -- i.e.~without resorting to local criteria as in~\cite{DFPR14:composable}. To do so, we construct a simulator that is able to correctly guess whether to accept or reject its interaction with the server without ever knowing what the client's computation is, thereby reproducing the behaviour of the concrete protocol even though it can only access the secure-by-design ideal delegated quantum computation resource. As a corollary of this, this theorem provides the first direct proof of composable security of the original Verifiable Blind Quantum Computation (VBQC) protocol~\cite{FK17:unconditionally}.

We next examine the conditions under which the protocol is robust against honest noise. We show that it is sufficient for the trappified scheme to be both insensitive to and correct on likely errors generated by the noise model.

\begin{theorem}[Robust Detection Implies Robust Verifiability, Informal]
  We assume now that the server in the Trappified Delegated Blind Computation Protocol is honest-but-noisy: the error applied is in $\mathcal E_\delta$ with probability $(1-p_\delta)$ and $\mathcal{P}_V \setminus \mathcal E_\delta$ with probability $p_\delta$. 
  Then, the client accepts with probability at least $(1-p_\delta)(1-\delta)$. If furthermore $\mathcal{E}_\delta \subseteq \mathcal{E}_\nu$, then the total correctness error is $p_\delta + \delta + \nu$.
\end{theorem}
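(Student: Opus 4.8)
The plan is to condition on the two branches of the stated noise model and, on each branch, apply the relevant defining property of the trappified scheme together with elementary union and convexity bounds. The key structural fact I would exploit is that the honest-but-noisy server applies a genuine probabilistic mixture of Pauli operators, so that both the acceptance probability and the probability of an incorrect output depend affinely on the error distribution. This linearity lets me separately bound the conditional behaviour when the error lands in $\mathcal{E}_\delta$ (where the standing assumptions of $\delta$-insensitivity and $\nu$-correctness apply) and when it lands in $\mathcal{P}_V \setminus \mathcal{E}_\delta$ (where I make no assumption and bound things trivially). Notably, the detection parameter $\epsilon$ plays no role here: robustness against honest noise is governed entirely by insensitivity and correctness.

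For the acceptance bound I would first invoke $\delta$-insensitivity to $\mathcal{E}_\delta$. By definition the scheme accepts with probability at least $(1-\delta)$ on each deviation in $\mathcal{E}_\delta$, and by convexity this persists for any mixture supported on $\mathcal{E}_\delta$; hence, conditioned on the error falling in $\mathcal{E}_\delta$ (which occurs with probability $1-p_\delta$), the client accepts with probability at least $1-\delta$. The complementary branch contributes a non-negative amount to the acceptance probability, so discarding it already yields the claimed lower bound $(1-p_\delta)(1-\delta)$.

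For the correctness error I would analyse the complementary ``bad'' event — the client either rejects or accepts an incorrect output — by a union bound on each branch. The hypothesis $\mathcal{E}_\delta \subseteq \mathcal{E}_\nu$ transfers $\nu$-correctness from $\mathcal{E}_\nu$ down to $\mathcal{E}_\delta$, so conditioned on the error lying in $\mathcal{E}_\delta$ the output is wrong with probability at most $\nu$, while insensitivity bounds the rejection probability by $\delta$; the union of these two events therefore has probability at most $\delta+\nu$ on this branch. On the complementary branch, of probability $p_\delta$, I bound the bad event by $1$. Combining the branches gives a total correctness error of at most $(1-p_\delta)(\delta+\nu) + p_\delta \le p_\delta + \delta + \nu$, as claimed.

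The main subtlety, rather than a genuine obstacle, is definitional: I must pin down precisely what ``total correctness error'' measures in the composable setting — here, the probability that the honest-but-noisy execution fails to both accept and return the correct output — and verify that the per-deviation notions of $\delta$-insensitivity and $\nu$-correctness aggregate correctly when the deviation is sampled from a distribution. This aggregation is exactly the affine dependence noted above, and once it is in place both bounds follow immediately from conditioning and the union bound with no delicate estimation required. This cleanliness is the point of the theorem: it shows that noise-robustness of the protocol reduces transparently to the insensitivity and correctness parameters of the underlying trappified scheme.
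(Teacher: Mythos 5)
Your proof is correct and follows essentially the same route as the paper's: the acceptance bound comes from conditioning on the noise landing in $\mathcal{E}_\delta$ and applying $\delta$-insensitivity, and the correctness error comes from adding the off-branch probability $p_\delta$ to the rejection bound and invoking $\nu$-correctness transferred via $\mathcal{E}_\delta \subseteq \mathcal{E}_\nu$. The only cosmetic difference is that you re-derive the correctness bound by a self-contained conditioning/union-bound argument, whereas the paper updates Equation~\ref{eq:cor} from its proof of Theorem~\ref{thm:verif}; your bound $(1-p_\delta)(\delta+\nu)+p_\delta$ is marginally tighter than, and implies, the stated $p_\delta+\delta+\nu$.
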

Combined with the security above, if the client accepts then the result of its computation is also correct with high probability.

We conclude this abstraction of verification protocols by exploring the necessary conditions for obtaining a security error which is exponentially close to zero without blowing up the server's memory requirements.
We show that efficient trappified schemes must incorporate some error-correction mechanism.
\begin{theorem}[Error-Correction Prevents Resource Blow-up, Informal]
\label{thm:encoding-inf}
If the size of the output in the trappified pattern is the same as in an unprotected execution of the Client's computation for a non-negligible fraction of trappified canvases in the trappified scheme used in the protocol, then the security error of the Trappified Delegated Blind Computation Protocol will scale at least as an inverse polynomial of the size of the graph common to all trappified canvases in the trappified scheme. 
\end{theorem}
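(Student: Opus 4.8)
The plan is to prove this lower bound directly by exhibiting an explicit adversarial deviation that corrupts the delegated output while evading every trap with inverse-polynomial probability. Write $N = |V|$ for the size of the common graph, and let $\gamma \geq 1/\mathrm{poly}(N)$ be the non-negligible fraction of trappified canvases whose output register has the same size $m_0 \geq 1$ as the unprotected computation. The key structural fact I would establish first is that, on any such canvas, \emph{no} trap can detect a Pauli error localised on a single output vertex and applied at read-out time. Indeed, by the proper-embedding assumption the decision function is independent of the client's computation, and honest runs must be accepted no matter which admissible computation -- hence which output string -- is embedded; since the output register carries $m_0$ logical bits on $m_0$ physical qubits with no redundancy, the accept/reject verdict cannot depend on the read-out value of any individual output qubit, for otherwise some honest computation producing the flipped value would be wrongly rejected. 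Consequently, flipping such a qubit alters the logical output but leaves every trap outcome -- all computed from the other, already-measured vertices -- untouched.

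I would then fix the adversary. By the blindness reduction recalled above, any server strategy may be taken to be a convex combination of Pauli deviations inserted just before each measurement or return, so the following is a legitimate deviation: select a vertex $v \in V$ uniformly at random and apply a uniformly random non-identity Pauli on $v$ at read-out. I next lower-bound the probability of the bad event ``the client does not abort but the delegated output is wrong''. Conditioning on the sampled canvas being one of the unprotected ones (probability $\gamma$), on $v$ landing on one of its $m_0$ output vertices (probability $m_0/N \geq 1/N$), and on the drawn Pauli anticommuting with the computational-basis read-out (probability at least $2/3$), the locality argument of the previous paragraph forces the verdict to be accept while the logical output is flipped. Fixing a computation whose result genuinely changes under this flip, the bad event occurs with probability at least $\tfrac{2}{3}\,\gamma/N = 1/\mathrm{poly}(N)$.

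Finally I would convert this into a bound on the composable security error. In the ideal delegated-computation resource the output returned on a non-aborting run is correct by construction, so the probability of the bad event is identically zero in the ideal world for \emph{every} simulator. A distinguisher handed the client's computation (and hence its correct output) that simply tests the predicate ``no abort and wrong output'' therefore separates the real execution under the above adversary from any ideal execution with advantage at least $1/\mathrm{poly}(N)$. By the Abstract Cryptography definition of security this advantage lower-bounds the protocol's security error, which thus cannot be made negligible in $N$, yielding the claim.

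I expect the main obstacle to be the first step: making watertight the assertion that an error on an \emph{unprotected} output qubit is information-theoretically undetectable by the traps. This requires combining the proper-embedding hypothesis with the requirement that honest runs be accepted uniformly over all admissible output values, and then checking carefully within the MBQC formalism that a read-out-time Pauli on an output vertex neither propagates to nor perturbs the measurement statistics of any trap vertex. Once this is secured, the probabilistic counting and the reduction to the composable security error are routine.
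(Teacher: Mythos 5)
Your proposal is correct and reaches the stated bound, but it packages the argument differently from the paper. The paper proves the formal counterpart (Theorem~\ref{thm:encoding}) by a pigeonhole argument: for any \emph{fixed} vertex $o$, a bit-flip at $o$ is undetectable whenever $o$ is a computation-output vertex (trap outcomes are computed only on $V_T$, Lemma~\ref{lem:indep-t}) and is harmful by minimality of the output pattern; hence negligible security error forces each vertex to serve as the first computation-output position for at most a negligible fraction $\eta$ of canvases, and summing over vertices, the non-negligible fraction $\gamma$ of no-overhead canvases yields $|V| \geq \gamma/\eta$, i.e.\ super-polynomial overhead. You instead exhibit an explicit attack (uniformly random vertex, uniformly random non-identity Pauli at read-out) and an explicit Abstract Cryptography distinguisher, lower-bounding the security error by roughly $\frac{2}{3}\gamma/|V|$, which is the informal statement verbatim. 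The two proofs rest on exactly the same two pillars -- undetectability of output-vertex flips, and harmfulness via minimality -- and are quantitatively equivalent: the paper's implicit best-fixed-position attack achieves advantage at least $\gamma/|V|$ times the harm of the flip, by a max-versus-average argument. Your version buys a direct, constructive error bound in precisely the form the informal theorem is phrased; the paper's version buys the structural blow-up formulation ($|G|$ must grow super-polynomially to preserve negligible security), which is the form invoked later to motivate the compiler of Section~\ref{sec:rvbqc}.

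Two points to tighten. First, the deviation does not ``force the verdict to be accept'': since the attacked vertex lies outside $V_T$, the verdict is distributed exactly as in an honest run, so it accepts with probability $1-\delta$, where $\delta$ is the scheme's insensitivity error on $\Id$. Your bound thus carries a factor $(1-\delta)$, and the claim is only meaningful under the standing assumption (explicit in the paper's setup preceding Theorem~\ref{thm:encoding}) that $\delta$ is small -- an always-rejecting scheme has zero security error (its failure is one of correctness) and would otherwise falsify the literal statement. Second, your detour for the structural fact (acceptance must be uniform over all embeddable output strings) is unnecessary and harder to make watertight than the direct route: the decision function $\tau$ takes as input only the measurement outcomes of trap qubits in $O_T \subseteq V_T$, which are disjoint from the computation's output vertices, so the independence you need is immediate from the definitions and is exactly Lemma~\ref{lem:indep-t}.
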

This in particular implies that it is impossible to attain negligible security errors without adding redundancy in the computation one way or another if we want to keep the protocol efficient to execute for a polynomial-time server.

These results reveal the strong interplay between the deviation detection properties of trappified schemes and the properties of the corresponding prepare-and-send verification protocol.  As a consequence, the optimisation of verification protocols translates into tailoring the deviation detection properties of trappified schemes to specific needs, for which the rich tools of error-correction can be used.  This is the focus of the rest of the paper.

\paragraph{Correctness and Security Amplification for Classical Input-Output Computations (Section \ref{sec:rvbqc}).}

Here, we construct a general compiler for obtaining trappified schemes.
It interleaves separate computations and test rounds in a way inspired by~\cite{LMKO21:verifying}.
As a consequence, the overhead for protocols based on such schemes is simply a repetition of operations of the same size as the client's original computation, meaning that verification comes for free so long as the client and server can run the blind protocol.
Using our correspondence between error-detection and verification, we then show that this compiler's parameters can be chosen to boost the constant detection and insensitivity rates of the individual test rounds to exponential levels after compilation.
\begin{theorem}[From Constant to Exponential Detection and Insensitivity Rates, Informal]
	Let $\sch P$ be a trappified scheme and $\sch P'$ be the compiled version described above for $n$ rounds combining a number of tests and computations which are both linear in $n$.
	If $\sch P$ $\epsilon$-detects error set $\mathcal{E}_\epsilon$ and is $\delta$-insensitive to $\mathcal{E}_\delta$, then there exists $k_\epsilon, k_\delta$ linear in $n$ and $\epsilon', \delta'$ exponentially-low in $n$ such that $\sch P'$ $\epsilon'$-detects errors with more than $k_\epsilon$ errors on all rounds from set $\mathcal{E}_\epsilon$ and is $\delta'$-insensitive to errors with less than $k_\delta$ errors from set $\mathcal{E}_\delta$.
\end{theorem}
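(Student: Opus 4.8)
The plan is to reduce the statement to a pair of tail bounds on the number of test rounds that fail, exploiting that the decision function of $\sch P'$ is a threshold rule: the client aborts if and only if the number of failed test rounds exceeds a threshold $w$ to be fixed later. First I would invoke blindness (as already used to justify the reduction to Pauli deviations in the Detection Implies Verifiability theorem) to argue that the server's choice of deviation is independent of which rounds the client secretly designates as test rounds. Conditioning on a fixed Pauli deviation — the general case following by averaging over the convex combination — the pattern of per-round errors is fixed, while the set of test positions is a uniformly random subset of the $n$ rounds of size linear in $n$. Hence the number $M$ of errored rounds coinciding with test rounds is hypergeometrically distributed, and each such test round independently fails its local decision function with the probability guaranteed by the base scheme $\sch P$.

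For the detection direction, suppose the deviation places errors from $\mathcal{E}_\epsilon$ on more than $k_\epsilon$ rounds. Since the test positions form a random linear-size subset, $M$ concentrates around a linear-in-$n$ mean, so with probability $1-e^{-\Omega(n)}$ at least a constant fraction of the $>k_\epsilon$ errored rounds are test rounds; each of these fails with probability at least $1-\epsilon$ by $\epsilon$-detection of $\sch P$, so the total number of failures is overwhelmingly likely to sit strictly above $w$. Two applications of Hoeffding's inequality — one for the hypergeometric placement of errors onto test rounds, one for the per-test detection Bernoulli trials — then give rejection except with probability $\epsilon'=e^{-\Omega(n)}$. For the insensitivity direction, suppose fewer than $k_\delta$ rounds carry errors, all from $\mathcal{E}_\delta$. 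Then at most $k_\delta$ test rounds can see an error at all, and each accepts with probability at least $1-\delta$ by $\delta$-insensitivity, so the expected number of failures is at most $\delta k_\delta$, which I would arrange to lie strictly below $w$. The same Hoeffding estimate bounds the upward fluctuation, yielding acceptance except with probability $\delta'=e^{-\Omega(n)}$.

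The delicate point, and the step I expect to be the main obstacle, is choosing the single threshold $w$ together with the cut-offs $k_\epsilon,k_\delta$ — all linear in $n$ — so that both tail bounds hold simultaneously. This requires a strictly positive, linear-in-$n$ gap between the typical number of failures in the detection regime (roughly $(1-\epsilon)$ times the fraction of the $>k_\epsilon$ errored rounds landing on tests) and in the insensitivity regime (at most $\delta k_\delta$), into which $w$ can be inserted with linear margin on both sides. Establishing this gap forces a separation condition relating $k_\epsilon$, $k_\delta$, the test-to-computation ratio, and the base constants $\epsilon,\delta$; verifying that such a choice exists, and that it survives the compounding of the two independent sources of randomness, is where the care lies. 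A secondary subtlety is using Hoeffding's inequality in its sampling-without-replacement form for the hypergeometric count $M$, and a union bound to pass from a fixed Pauli to the adaptively chosen convex combination permitted to the adversary; once the gap is secured, both exponential rates follow by routine estimates.
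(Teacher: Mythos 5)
Your proposal is correct and follows essentially the same route as the paper's proof: a threshold decision function at $w$ failed tests, a hypergeometric tail bound (sampling without replacement) for the number of errored rounds landing on test positions, a conditional binomial/Hoeffding bound for the number of those tests that actually fail, and linear-in-$n$ choices of $k_\epsilon$, $k_\delta$, $w$ separated by the gap condition $k_\epsilon > nw/(s(1-\epsilon))$ and $k_\delta < nw/(s\delta)$. The only differences are cosmetic: your insensitivity bound uses the cruder deterministic count of at most $k_\delta$ errored test rounds rather than its hypergeometric concentration to roughly $k_\delta s/n$ (costing only constants), and note that your implicit assumption that error-free test rounds never reject is exactly the paper's added hypothesis that $\sch P$ is perfectly insensitive to $\Id$.
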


This alone is not enough to obtain a negligible security error. To achieve this, as per Theorem \ref{thm:encoding-inf}, we must recombine the results of the computation rounds to correct for these low-weight errors which are not detected.
This is done by using a simple majority vote on the computation round outcomes, so that correctness can be independently amplified to an exponential level by using polynomially many computation rounds.
\begin{theorem}[Exponential Correctness from Majority Vote, Informal]
There exists $k_\nu$ linear in $n$ and $\nu'$ exponentially-low in $n$ such that $\sch P'$ is $\nu'$-correct so long as there are no more than $k_\nu$ errors.
\end{theorem}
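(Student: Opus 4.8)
The plan is to reduce the statement to a standard concentration (Chernoff--Hoeffding) argument on the outcomes of the computation rounds, treating the at most $k_\nu$ rounds hit by the deviation as fully adversarial. Recall that $\sch P'$ outputs the bitwise majority vote over its $m = \Theta(n)$ computation rounds, each of which delegates one run of the client's $\mathsf{BQP}$ computation. Fixing any Pauli deviation in $\mathcal{P}_V$ that acts nontrivially on at most $k_\nu$ of the rounds, I would partition the computation rounds into the (at most $k_\nu$) \emph{corrupted} rounds, whose outputs I allow to be arbitrary, and the remaining $m-k_\nu$ \emph{untouched} rounds, which run the honest delegated computation and hence each return the correct outcome with probability at least $2/3$ by the bounded-error property of $\mathsf{BQP}$ (after standard single-round amplification if needed). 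The crucial observation is that, conditioned on the fixed deviation, the untouched rounds produce mutually independent outputs: each round uses freshly sampled UBQC randomness on a disjoint register of qubits, so a Pauli confined to one round cannot correlate the honest outputs of the others.

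The key steps, in order, are as follows. First, fix $k_\nu = \beta m$ for a sufficiently small constant $\beta < 1/4$, so that $k_\nu$ is linear in $n$ and leaves a constant-fraction margin below the majority threshold $m/2$. Second, observe that a given output bit of the majority vote is wrong only if more than $m/2$ computation rounds disagree with the correct value; since at most $k_\nu$ of these are corrupted rounds, this forces at least $m/2 - k_\nu$ of the untouched rounds to be individually wrong. Third, since each untouched round is wrong with probability at most $1/3$ and there are at least $m-k_\nu$ of them, the threshold $m/2 - k_\nu$ exceeds the mean $(m-k_\nu)/3$ by a constant fraction $m(1/6 - \tfrac{2}{3}\beta)$ of $m$; a Hoeffding bound on the sum of the independent indicators then bounds the probability of this event by $\exp(-cm)$ for some constant $c>0$, i.e.\ exponentially small in $n$. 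Fourth, union-bounding over the polynomially many output bits costs only a polynomial factor and preserves the exponential smallness, yielding $\nu' = \mathrm{poly}(n)\cdot\exp(-cm)$, which is the claimed exponentially-low correctness error.

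The main obstacle I expect is not the concentration inequality but justifying the ``at most $k_\nu$ corrupted rounds'' reduction and the independence claim rigorously within the trappified-scheme formalism. Concretely, I must argue that an arbitrary deviation of round-weight at most $k_\nu$ indeed corrupts at most $k_\nu$ computation-round outcomes -- a deviation supported on test-round qubits or on inconsequential positions leaves the corresponding computation outputs intact -- and that, after conditioning on this fixed deviation, the residual randomness driving the honest rounds factorises across rounds. Care is also needed to phrase the guarantee uniformly over all such deviations rather than merely in expectation, so that it plugs directly into the $\nu$-correctness definition; this is achieved by the worst-case treatment of the corrupted rounds above, which makes the bound independent of how the adversary distributes its errors. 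Finally, I would emphasise that this majority-vote post-processing is orthogonal to the accept/reject decision analysed for detection and insensitivity, which is precisely why correctness can be amplified independently of the security error.
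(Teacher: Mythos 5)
Your proof is correct and establishes the informal statement, but it takes a genuinely different and more conservative route than the paper's proof of Theorem~\ref{thm:exp-cor}. The paper's argument is a two-stage concentration bound: it first exploits the fact that the interleaving of test and computation rounds is random and hidden from the server, so that a deviation which is harmful on at most $k_\nu$ of the $n$ rounds hits, with high probability, only about $k_\nu d/n$ of the $d$ computation rounds (a hypergeometric tail bound on the variable $Z_1$); it then applies a binomial Hoeffding bound to the inherent $\mathsf{BQP}$ failures $Z_2$ among the unaffected computation rounds, and finally optimises over the parameter $\chi$ splitting the two bounds. You collapse the first stage entirely by assuming the worst case that \emph{all} $k_\nu$ corrupted rounds land on computation rounds, which leaves a single Hoeffding bound and no optimisation. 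This is sound — the worst-case bound holds for every placement of the computation rounds, hence also on average as required by the $\nu$-correctness definition — and it has the conceptual advantage that correctness no longer relies on the blindness of the round interleaving. The price is a smaller admissible threshold: you need $k_\nu < \frac{1-2c}{2-2c}\,d$, whereas the paper obtains $k_\nu < \frac{1-2c}{2-2c}\,n$. Both are linear in $n$, so the informal statement is satisfied either way, but the paper's larger threshold buys more room when the correctness threshold must later be matched against the detection threshold in Theorem~\ref{thm:comb}, where one needs $\frac{w}{s(1-\epsilon)} < \frac{k_\epsilon}{n} \leq \frac{k_\nu}{n}$; with your bound the feasible parameter region shrinks by a factor $d/n$. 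Two minor remarks: your union bound over output bits is unnecessary — the paper counts rounds whose \emph{entire} output string is wrong, so that $Z_1 + Z_2 < d/2$ already makes every bitwise majority correct, avoiding the polynomial prefactor; and your independence concern is resolved exactly as you suggest, since the rounds occupy disjoint copies of $G$ with independently sampled randomness, which is what licenses the binomial bound in the paper as well.
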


Furthermore, it is possible to choose the values $k_\epsilon, k_\delta, k_\nu$ such that $k_\epsilon \leq k_\nu$ and $k_\delta \leq k_\nu$, such that the error is either too large and in that case detected (if beyond $k_\epsilon$), or it is corrected by the classical repetition code (if below $k_\nu$) and in that case accepted also with high probability (if below $k_\delta$).\footnote{The ideal case is taking $k_\epsilon = k_\delta = k_\nu$.}

In doing so, we have effectively untangled what drives correctness, security and robustness, thereby greatly simplifying the task of designing and optimising new protocols. More precisely, we can now focus only on the design of the test rounds as their  performance  drives the value of exponents in the exponentials from the two previous theorems.

\paragraph{New Optimised Trappified Schemes from Stabiliser Testing (Section \ref{sec:new_traps}).}
In this section, we design test rounds and characterise their error-detection and insensitivity properties. This allows to recover the standard traps used in several other protocols, while also uncovering new traps that correspond to syndrome measurements of stabiliser generators -- hence once again fruitfully exploiting the correspondence between error-detection and verification.

Finally, we combine all of the above into an optimisation of the deviation detection capability of the obtained trappified schemes that not only beats the current state-of-the-art, but more importantly provides an end-to-end application of our theoretical results.

\begin{figure}[ht]
\centering
\includegraphics[width=0.8\textwidth]{./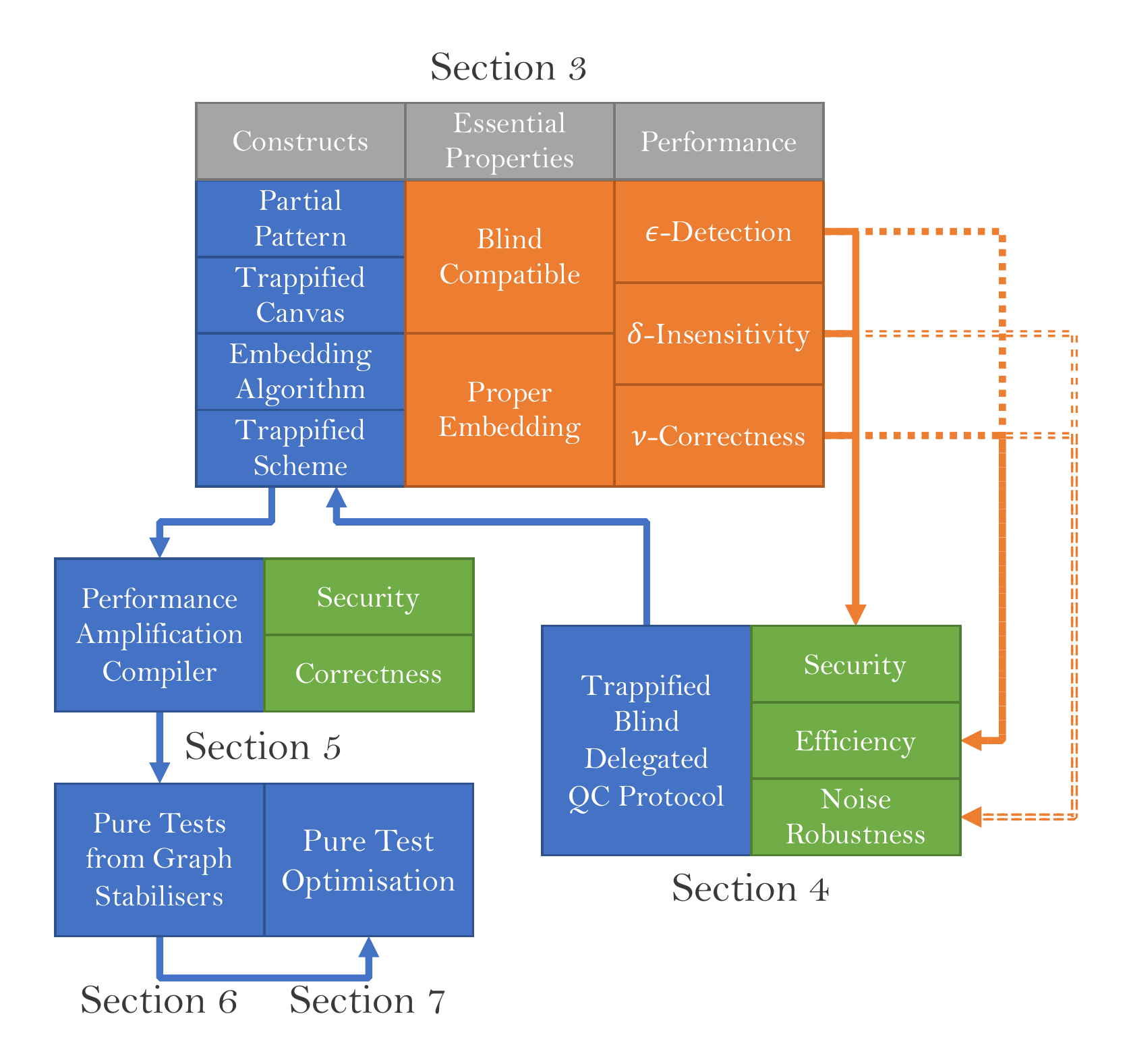}
\caption{Structure of the paper. The blue boxes represent the main objects which we construct, the orange ones are the main properties and the green the main theorems. The blue arrows go towards a higher level of granularity, meaning that an object can be simplified by using the next construction. The orange arrows indicate which property plays a role in the proof of each theorem.}
\end{figure}

\subsection{Future Work and Open Questions}
First, the uncovered connection between error-detection and verification raises further questions such as the extent to which it is possible to infer from the failed traps what the server has been performing.

Second, Theorem~\ref{thm:encoding-inf} implies that some form of error-correction is necessary to obtain exponential correctness. Yet, our protocol shows that sometimes classical error-correction is enough, thereby raising the question of understanding what are the optimal error-correction schemes for given classes of computation that are to be verified.

Third, the design of multi-party quantum computations can be greatly simplified by using the presented approach, making it a versatile tool that extends outside of the single client paradigm~\cite{KKLM+23:quantum}.
   
\section{Preliminaries}
\label{sec:prelims}
\subsection{Notation}
Throughout this work we will use the following notations:
\begin{itemize}
  \item For a set $V$, $\wp(V)$ is the powerset of $V$, the set of all subsets of $V$.
  \item For a set $B \subseteq A$, we denote by $B^c$ the complement of $B$ in $A$, where $A$ will often be the vertex set of a graph and $B$ a subset of vertices, usually input or output locations.
  \item For $n\in \mathbb N$, the set of all integers from $0$ to $n$ included is denoted $[n]$.
  \item For a real function $\epsilon(\eta)$, we say that $\epsilon(\eta)$  is \emph{negligible in $\eta$} if, for all polynomials $p(\eta)$ and $\eta$ sufficiently large, we have $\epsilon(\eta) \leq \frac{1}{p(\eta)}$.
  \item For a real function $\mu(\eta)$, we say that $\mu(\eta)$ is \emph{overwhelming in $\eta$} if there exists a negligible $\epsilon(\eta)$ such that $\mu(\eta) = 1 - \epsilon(\eta)$.  
  \item We denote by $\Theta$ the set of angles $\qty{\frac{k\pi}{4}}_{k \in \qty{0, \ldots, 7}}$.
  \item We note $\X, \Y, \Z$ the Pauli operators. Then $\mathcal{P}_1 = \langle \X, \Y, \Z \rangle$ is the single-qubit Pauli group and $\mathcal{P}_n = \{\cptp P_1 \otimes \ldots \otimes \cptp P_n \mid \cptp P_j \in \mathcal{P}_1\}$ is the $n$-qubit Pauli group.
  \item The rotation operator around the $\Z$-axis of the Bloch sphere by an angle $\theta$ is noted $\Z({\theta}) = \begin{pmatrix} 1 & 0 \\ 0 & e^{i\theta} \end{pmatrix}$.
  \item The states in the $\X - \Y$ plane of the Bloch sphere are noted $\ket{+_{\theta}} = \Z(\theta)\ket{+} = \frac{1}{\sqrt{2}}(\ket{0} + e^{i\theta}\ket{1})$.
  \item Given a set of qubits indexed by elements in set $V$, for all $i \in V$ and any single-qubit unitary $\cptp U$, we denote $\cptp U_i$ the unitary applying $\cptp U$ to qubit $i$ and identity to the rest of the qubits in $V$. If multiple sets of qubits are indexed by $V$, this is easily extended to multi-qubit gates.
  \item For a measurement in the basis $\ket{\pm_\theta}$, we associated the value $0$ to outcome $\ket{+_{\theta}}$ and $1$ to outcome $\ket{-_{\theta}}$.
  \item For an $n$-qubit operator $\cptp U$ and $n$-qubit mixed state $\rho$, we write $\cptp U[\rho]$ for $\cptp U\rho\cptp U^\dagger$.
  \item For two operators $\cptp U$ and $\cptp V$ acting on the same number of qubits, we write $\cptp U \circ \cptp V$ for the composition of the two operators if there may be an ambiguity.
\end{itemize}

\subsection{Measurement-Based Quantum Computation}
\label{sec:mbqc}
The MBQC model of computation emerged from the gate teleportation principle.
It was introduced in~\cite{RB01:one} where it was shown that universal quantum computing can be implemented using graphs-states as resources and adaptive single-qubit measurements.
Therefore MBQC and gate-based quantum computations have the same power.
The measurement calculus expresses the correspondence between the two models~\cite{DKP07:measurement-calculus}. 

MBQC works by choosing an appropriate graph state, performing single-qubit measurements on a subset of this state and, depending on the outcomes, apply correction operators to the rest.
Quantum computations can be easily delegated in this model by having the client supply the quantum input to the server and instruct it by providing measurement instructions, while the server is tasked with the creation of a large entangled state which is suitable for the client's desired computation.

While the discussions below hold for angles in $[0,2\pi)$, if we settle for approximate universality it is sufficient to restrict ourselves to the set of angles $\Theta$ \cite{FK17:unconditionally}. 
The client's computation is defined by a \emph{measurement pattern} as follows.
\begin{definition}[Measurement Pattern]
  \label{def:pattern}
  A \emph{pattern} in the Measurement-Based Quantum Computation model is given by a graph $G = (V,E)$, input and output vertex sets $I$ and $O$, a set of measurement angles for non-output qubits $\{\phi(i)\}_{i\in O^c}$ in the $\X - \Y$ plane of the Bloch sphere, and a flow function $f$ which induces a partial ordering of the qubits $V$.
\end{definition}

To make this more concrete, we will now describe an example of an MBQC pattern and corrections on the three-vertex linear graph. In that case we have $V = \{1, 2, 3\}$ and $E = \{(1, 2), (2, 3)\}$. The first qubit in the line will be the only one in the input set $I$ and the last qubit the only one in the output set $O$. We note $\phi(1)$ and $\phi(2)$ the measurement angles of the first two (non-output) qubits. We start with a single qubit in state $\ket{\psi}$ as input, the qubits associated to the other two vertices are initialised in the $\ket{+}$ state. We apply one $\CZ$ gate for each pair of qubits whose associated vertices are linked by an edge in $E$. If $\ket{\psi} = \alpha \ket{0} + \beta\ket{1}$, the resulting state is
\begin{equation}
\CZ_{1, 2}\CZ_{2, 3}\ket{\psi}\ket{+}\ket{+} = \frac{\alpha}{2}(\ket{000} + \ket{001} + \ket{010} - \ket{011}) + \frac{\beta}{2}(\ket{100} + \ket{101} - \ket{110} + \ket{111}).
\end{equation}
In order to perform the measurement on the qubit in vertex $1$, we apply the rotation $\Z(-\phi(1))$ and project either onto state $\ket{+}$ -- measurement outcome $0$ -- or $\ket{-}$ -- outcome $1$. The states of the unmeasured qubits (vertices $2$ and $3$) are then
\begin{align}
\ket{\psi_0} &= \frac{1}{\sqrt{2}}(\alpha + \beta e^{-i\theta})\ket{0}\ket{+} + \frac{1}{\sqrt{2}}(\alpha - \beta e^{-i\theta})\ket{1}\ket{-},\\
\ket{\psi_1} &= \frac{1}{\sqrt{2}}(\alpha - \beta e^{-i\theta})\ket{0}\ket{+} + \frac{1}{\sqrt{2}}(\alpha + \beta e^{-i\theta})\ket{1}\ket{-}.
\end{align}
There are two things that we can notice from this. In the first case, the state is the same as if we had started with the state $\ket{\psi'} = \Ha\Z(-\phi(1))\ket{\psi}$ and entangled it to a single $\ket{+}$ state using a single $\CZ$ operation -- a two-qubit linear graph. This fact allows us to perform the translation between the MBQC model and the circuit model. We see also that, in order to recover $\ket{\psi_0}$ from the state $\ket{\psi_1}$, we need to apply an $\X$ operation on the qubit associate to vertex $2$ and a $\Z$ operation on the qubit associated to vertex $3$. After applying these operations, the state will be independent of the outcome of the measurement. This induces an ordering on the vertices since $1$ must be measured before $2$ and $3$ if we want to use this correction strategy.

Instead of directly applying these operations, we can absorb them into the angle of future measurements since $\bra{\pm}\Z(\phi)\X = \bra{\pm}\Z(-\phi)$ and $\bra{\pm}\Z(\phi)\Z = \bra{\pm}\Z(\phi + \pi)$. This can be done for the qubit in vertex $2$ but the output qubits still need to be corrected since they are not measured.

The flow function $f$ guarantees that the computation is independent of the intermediary measurement outcomes by specifying how these corrections must be performed depending on previous outcomes. It is an injective function from non-output vertices $O^c$ -- the outputs are not measured and therefore do not generate corrections -- to non-input vertices $I^c$ -- these are measured first and therefore do not need to be corrected. To each vertex $i$ are associated the sets $S_X(i) = f^{-1}(i)$ and $S_Z(i) = \{j \mid i \in N_G(f(j))\}$ which are respectively called the $X$ and $Z$ dependency sets for vertex $i$. The measurement angle of vertex $i$ depends on the measurement outcomes of qubits in $S_X(i)$ and $S_Z(i)$. A measurement outcome of $1$ in a qubit from $S_X(i)$ will multiply the angle of $i$ by $-1$, while the $Z$ dependencies add $\pi$ to the angle.In the example above, we can specify the flow function by $f(1) = 2$ and $f(2) = 3$, which induces the measurement order $1 \preceq 2 \preceq 3$. Then $S_X(2) = 1$, $S_X(3) = 2$, $S_Z(3) = 2$ and the rest are empty. More generally, finding the flow relies on the stabilisers of the graph state associated to $G$.
Further details regarding the definition of the flow and its generalisation g-flow can be found in \ref{app:flow} and references~\cite{HEB04:multiparty,DK06:determinism}.

The execution of MBQC patterns can then be delegated to servers, alleviating the need for the client to own a quantum machine using Protocol~\ref{proto:mbqc}. This first protocol is not blind, nor can the client test that the server is performing the computation correctly.

\begin{protocol}[ht]
  \caption{Delegated MBQC Protocol}\label{proto:mbqc}
  \begin{algorithmic}[0]
    \STATE \textbf{Client's Inputs:} A measurement pattern $(G, I, O, \{\phi(i)\}_{i\in O^c}, f)$ and a quantum register containing the input qubits $i\in I$.
    \STATE \textbf{Protocol:}
    \begin{enumerate}
	\item The Client sends the graph's description $(G, I, O)$ to the Server.
	\item The Client sends its input qubits for positions $I$ to the Server.
    \item The Server prepares $\ket +$ states for qubits $i\in I^c$.
    \item The Server applies a $\CZ$ gate between qubits $i$ and $j$ if $(i,j)$ is an edge of $G$.
    \item The Client sends the measurement angles $\{\phi(i)\}_{i\in O^c}$ along with the description of $f$ to the Server.
    \item The Server measures the qubits $i \in O^c$ in the order $\preceq$ induced by $f$ in the basis $\ket{\pm_{\phi'(i)}}$ where
      \begin{align}
        s_X(i) & = \bigoplus_{j \in S_X(i)} b(j), \ s_Z(i) = \bigoplus_{j \in S_Z(i)} b(j),\\
        \phi'(i) & = (-1)^{s_X(i)}\phi(i) + s_Z(i) \pi, \label{eq:updt-mbqc}
      \end{align}
      where $b(j)\in \{0,1\}$ is the measurement outcome for qubit $j$.
    \item The Server applies the correction $\Z^{s_Z(i)}_i \X^{s_X(i)}_i$ for each output qubits $i \in O$, which it sends back to the Client.
    \end{enumerate}
  \end{algorithmic}
\end{protocol}

If the client is able to perform single-qubit preparations and use quantum communication, it can delegate an MBQC pattern blindly~\cite{BFK09:universal}, meaning that the Server does not learn anything about the computation besides the prepared graph $G$, the set of outputs $O$ and the order of measurements.
The goal of the Universal Blind Quantum Computation Protocol is to hide the computation, the inputs and the outputs up to a controlled leakage which consists of the graph and order of measurements. The client needs to be able to generate state in the $\X - \Y$ plane $\ket{+_\theta}$ for values $\theta \in \Theta$, and in the case of quantum inputs it must also be able to apply $\Z(\theta)$ to its inputs and $\X$.

We will use the example MBQC computation described above to demonstrate an execution of this protocol. The client would like to hide the input state $\ket{\psi}$, the measurement angles $\phi(1)$ and $\phi(2)$ and the output. For the input, the client uses a variant of the Quantum One-Time-Pad, sampling a random bit $a(1) \in \bin$ and a random angle $\theta(1)$ and applying the operation $\Z_i(\theta(i))\X_i^{a(i)}$ to $\ket{\psi}$. For the non-input qubits it sets $a(2) = a(3) = 0$. For the non-output qubit it samples at random $\theta(2) \in \Theta$ and for the ouptut vertex it samples at random $\theta(3) \in \{0, \pi\}$. It creates the states $\ket{+_{\theta(2)}}$ and $\ket{+_{\theta(3)}}$ and sends these two states and its encrypted input to the server.

The server receives these three qubits and performs the entangling operations $\CZ_{1, 2}$ and $\CZ_{2, 3}$ as above. For now the security is guaranteed since the input is perfectly encrypted and the other qubits are in random states uncorrelated to the computation. However, the client still desires to run its computation and must do so through the encryption. To do so, it will instruct the server to measure the qubits $1$ and $2$ with the angle $\delta(i) = (-1)^{a(i)}\phi'(i) + \theta(i) + (r(i) + a_N(i)) \pi$, where $a_N(i)$ is the sum of values of $a(j)$ for $j$ neighbours of $i$ and $r(i)$ is a random bit.

We can see that this performs the same computation as the MBQC example described above. The $\Z$ rotation encryption commutes with the $\CZ$ operations and cancels out the encryption of the measurement angle. On the other hand, when the $\X$ encryption of the input commutes with the $\CZ$ gates, it creates an additional $\Z$ on the neighbour which is then taken care of by $a_N(2) = a(1)$. Commuting the $\X$ to the end also flips the sign of the measurement angle, which is why $(-1)^{a(i)}$ appears in from of $\phi'(i)$ in the expression of $\delta(i)$. In the end the computation is the same as the unencrypted one above so long as the client corrects the measurement outcomes returned by the server to account for the additional $r(i)$ by flipping the outcome $b(i)$ if $r(i) = 1$. If $\theta(3) = \pi$, the client must also apply $\Z$ to the output to compensate. This process is summarised in Figure~\ref{fig:ubqc-cor}.

\begin{figure}[h]\centering
\subfloat[UBQC Protocol with the explicit values of $\delta(i)$.]{
\label{fig:ubqc-cor1}
\makebox[\textwidth]{
$
\Qcircuit @C=1.0em @R=.7em {
\lstick{\ket{\psi}} & \gate{\X^{a(1)}} & \gate{\Z(\theta(1))} & \ctrl{1}  & \qw       & \gate{\Z( - (-1)^{a(1)}\phi'(1) - \theta(1) + r(1) \pi)} & \qw & \measureD{\pm} \\
\lstick{\ket{+}}    & \qw              & \gate{\Z(\theta(2))} & \ctrl{-1} & \ctrl{1}  & \gate{\Z( - \phi'(2) - \theta(2) + (r(2) + a(1)) \pi)}   & \qw & \measureD{\pm} \\
\lstick{\ket{+}}    & \qw              & \gate{\Z(\theta(3))} & \qw       & \ctrl{-1} & \qw                                                      & \qw & \qw
}
$
}
}\\ \vspace{5mm}
\subfloat[The $\Z$ encryption commutes through the $\CZ$ gates and is cancelled out by the later $\Z$ rotation.]{
\label{fig:ubqc-cor2}
\makebox[\textwidth]{
$
\Qcircuit @C=1.0em @R=.7em {
\lstick{\ket{\psi}} & \gate{\X^{a(1)}} & \ctrl{1}  & \qw       & \gate{\Z( - (-1)^{a(1)}\phi'(1) + r(1) \pi)} & \qw                  & \measureD{\pm} \\
\lstick{\ket{+}}    & \qw              & \ctrl{-1} & \ctrl{1}  & \gate{\Z( - \phi'(2) + (r(2) + a(1)) \pi)}   & \qw                  & \measureD{\pm} \\
\lstick{\ket{+}}    & \qw              & \qw       & \ctrl{-1} & \qw                                          & \gate{\Z(\theta(3))} & \qw
}
$
}
}\\ \vspace{5mm}
\subfloat[The input $\X$ encryption commutes through the $\CZ$ gates but adds a $\Z$ on qubit $2$, which is cancelled out by the $a(1)\pi$ inside the rotation. The $\X$ on qubit $1$ is commuted through the rotation and absorbed by the measurement. The result is the client's desired MBQC computation up to Pauli corrections and bit-flips.]{
\label{fig:ubqc-cor3}
\makebox[\textwidth]{
$
\Qcircuit @C=1.0em @R=.7em {
\lstick{\ket{\psi}} & \ctrl{1}  & \qw       & \gate{\Z( - \phi'(1) + r(1) \pi)} & \qw                  & \measureD{\pm} \\
\lstick{\ket{+}}    & \ctrl{-1} & \ctrl{1}  & \gate{\Z( - \phi'(2) + r(2) \pi)} & \qw                  & \measureD{\pm} \\
\lstick{\ket{+}}    & \qw       & \ctrl{-1} & \qw                               & \gate{\Z(\theta(3))} & \qw
}
$
}
}\caption{Correctness of UBQC for three-vertex linear graph.}
\label{fig:ubqc-cor}
\end{figure}

As for the security, intuitively, the fact that $\theta(i)$ is perfectly random hides the value of $\phi'(i)$ in $\delta(i)$, while the value $r(i)$ hides the measurement outcome but also the output of the computation since it is propagated by the flow and results in a Quantum One-Time-Pad of the output. The process described above is formalised in Protocol~\ref{proto:ubqc}.

\begin{protocol}[ht]
  \caption{UBQC Protocol}\label{proto:ubqc}
  \begin{algorithmic}[0]
    \STATE \textbf{Client's Inputs:} A measurement pattern $(G, I, O, \{\phi(i)\}_{i\in O^c}, f)$ and a quantum register containing the input state $\rho_C$ on qubits $i\in I$.
    \STATE \textbf{Protocol:}
    \begin{enumerate}
    \item The Client sends the graph's description $(G, I, O)$ and the measurement order to the Server.
    \item The Client prepares and sends all the qubits in $V$ to the Server:\footnotemark
      \begin{enumerate}
      \item For $i \in I$, it chooses a random bit $a(i)$. For $i \in I^c$, it sets $a(i) = 0$.
      \item For $i \in O$, it chooses a random bit $r(i)$ and sets $\theta(i) = (r(v) + a_N(v))\pi$ where $a_N(i) = \sum_{j \in N_G(i)} a(j)$. For $i \in O^c$, it samples a random $\theta(i) \in \Theta$.
      \item For $i \in I$, it sends $\prod_{i \in I}\Z_i(\theta(i))\X_i^{a(i)}[\rho_C]$. For $i \in I^c$ it sends $\ket{+_{\theta(i)}}$.
      \end{enumerate}
    \item The Server applies a $\CZ$ gate between qubits $i$ and $j$ if $(i,j)$ is an edge of $G$.
    \item For all $i \in O^c$, in the order specified by the flow $f$, the Client computes the measurement angle $\delta(i)$ and sends it to the Server, receiving in return the corresponding measurement outcome $b(i)$:
      \begin{align}
        s_X(i) & = \bigoplus_{j \in S_X(i)} b(i) \oplus r(i), \
        s_Z(i) = \bigoplus_{j \in S_Z(i)} b(i) \oplus r(i), \\
        \delta(i) & = (-1)^{a(i)}\phi'(i) + \theta(i) + (r(i) + a_N(i)) \pi,\label{eq:updt-ubqc}
      \end{align}
      where $\phi'(i)$ is computed using Equation \ref{eq:updt-mbqc} with the new values of $s_X(i)$ and $s_Z(i)$.
    \item The Server sends back the output qubits $i \in O$.
    \item The Client applies $\Z_i^{s_Z(i) + r(i)}\X_i^{s_X(i) + a(i)}$ to the received qubits $i \in O$.
    \end{enumerate}
  \end{algorithmic}
\end{protocol}

Note that if the output of the client's computation is classical, the set $O$ is empty and the client only receives measurement outcomes. The output measurement outcomes $b(i)$ sent by the Server need to be decrypted by the Client according to the equation $s(j) = b(j) \oplus r(j)$, thus preserving the confidentiality of the output of the computation.

The additional randomisation of the output qubit might seem superfluous since the server can simply measure the qubit to recover the value of the state and indeed it is not present in the original UBQC protocol.\footnotetext{In the original UBQC Protocol from \cite{BFK09:universal}, the outputs are prepared by the Server in the $\ket{+}$ state and are encrypted by the computation flow.} However, In the verification protocol in which we will use the UBQC Protocol later, some inputs to auxiliary trap computations may be included in the global output. This means that all output qubits must also be prepared by the Client and not adding this randomisation at this stage could allow the server to gain information from these qubits. This does not change the security properties of the original UBQC Protocol, which are presented in the next subsection.

To analyse the security of our SDQC protocol later, we will require the following Pauli Twirling Lemma as a way to decompose the actions of an Adversary in the blind protocol above. A Pauli twirl occurs when a random Pauli operator is applied (such as an encryption and decryption). The result from the point of view of someone who does not know which Pauli has been used is a state or channel that is averaged over all possible Pauli operators. This has the effect of removing all off-diagonal factors from the operation sandwiched between the two applications of the random Pauli, thus making it a convex combination of Pauli operators.

\begin{lemma}[Pauli Twirling]
\label{lem:twirl}
Let $\rho$ be an $n$-qubit mixed state and~$\cptp Q, \cptp Q' \in \mathcal{P}_n$ two $n$ qubit Pauli operators. Then, if $\cptp Q \neq \cptp Q'$, we have:

\begin{align}
\smashoperator[r]{\sum_{\cptp P \in \mathcal{P}_n}} \cptp P^{\dagger} \cptp Q \cptp P \rho \cptp P^{\dagger} \cptp Q'^{\dagger} \cptp P = 0.
\end{align}
\end{lemma}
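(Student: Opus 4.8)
The plan is to exploit the single defining algebraic feature of the Pauli group: any two elements of $\mathcal{P}_n$ either commute or anticommute. Concretely, for every $\cptp P \in \mathcal{P}_n$ one has $\cptp P^{\dagger} \cptp Q \cptp P = \lambda(\cptp P, \cptp Q)\, \cptp Q$ with a sign $\lambda(\cptp P, \cptp Q) \in \{+1, -1\}$ that depends only on the underlying Pauli strings, global phases being central and dropping out; likewise $\cptp P^{\dagger} \cptp Q'^{\dagger} \cptp P = \lambda(\cptp P, \cptp Q')\, \cptp Q'^{\dagger}$, since $\cptp Q'^{\dagger}$ carries the same Pauli string as $\cptp Q'$ and hence the same commutation sign. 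Substituting these two identities into each summand leaves $\rho$ untouched and turns every term into $\lambda(\cptp P, \cptp Q)\lambda(\cptp P, \cptp Q')\, \cptp Q \rho \cptp Q'^{\dagger}$. As the operator part $\cptp Q \rho \cptp Q'^{\dagger}$ no longer depends on $\cptp P$, it factors out of the sum, and the whole expression collapses to the scalar $\big(\sum_{\cptp P \in \mathcal{P}_n} \lambda(\cptp P, \cptp Q)\lambda(\cptp P, \cptp Q')\big)$ multiplied by $\cptp Q \rho \cptp Q'^{\dagger}$.

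It then remains to show that this scalar sum vanishes. I would use that the commutation sign is multiplicative, $\lambda(\cptp P, \cptp Q)\lambda(\cptp P, \cptp Q') = \lambda(\cptp P, \cptp Q\cptp Q'^{\dagger})$, which follows from writing $\lambda$ as $(-1)$ raised to a symplectic inner product that is bilinear over $\mathbb F_2$. Setting $\cptp R = \cptp Q\cptp Q'^{\dagger}$, the hypothesis $\cptp Q \neq \cptp Q'$ guarantees that $\cptp R$ is a nontrivial Pauli operator, i.e.\ not proportional to $\Id$. The sum thus reduces to $\sum_{\cptp P} \lambda(\cptp P, \cptp R)$, and I would show this equals zero by a sign-reversing bijection on $\mathcal{P}_n$: pick any fixed $\cptp P_0$ that anticommutes with $\cptp R$ (one exists precisely because $\cptp R$ is nontrivial), and observe that left multiplication $\cptp P \mapsto \cptp P_0 \cptp P$ is a bijection of $\mathcal{P}_n$ under which, by multiplicativity in the first argument, each sign $\lambda(\cptp P, \cptp R)$ is reversed. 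Re-indexing the sum by this bijection shows it equals its own negative, hence is zero, and the claim follows.

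The only real content beyond routine algebra is the claim that a nontrivial Pauli $\cptp R$ always admits an anticommuting partner, equivalently that exactly half of $\mathcal{P}_n$ commutes with $\cptp R$. I would establish this at the single-qubit level — each of $\X, \Y, \Z$ anticommutes with the other two and commutes with $\Id$ — and lift it to $n$ qubits factor by factor: since $\cptp R$ is not a phase times $\Id$, at least one of its tensor slots holds a genuine single-qubit Pauli, and placing an anticommuting single-qubit Pauli in that slot (and identities elsewhere) produces the required $\cptp P_0$. A secondary point worth flagging is the bookkeeping of global phases in the convention $\mathcal{P}_n = \{\cptp P_1 \otimes \cdots \otimes \cptp P_n\}$: phases are central and therefore irrelevant to every commutation sign, and "$\cptp Q \neq \cptp Q'$" must be read as a genuine difference of Pauli strings, so that $\cptp R$ is not merely a phase multiple of the identity — exactly the case in which the argument, and the lemma, would otherwise fail.
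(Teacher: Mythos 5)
Your proof is correct and complete. Note that the paper itself states Lemma~\ref{lem:twirl} without proof, importing it as a standard result (it cites \cite{K16:efficient} for the twirling argument in the security proof), so there is no in-paper derivation to compare against; your write-up supplies exactly the standard argument from the literature. The route you take --- reducing both conjugations to commutation signs $\lambda(\cptp P, \cptp Q)$ and $\lambda(\cptp P, \cptp Q')$, merging them into $\lambda(\cptp P, \cptp R)$ with $\cptp R = \cptp Q \cptp Q'^{\dagger}$ via $\mathbb{F}_2$-bilinearity of the symplectic form, and annihilating $\sum_{\cptp P} \lambda(\cptp P, \cptp R)$ by the sign-reversing bijection $\cptp P \mapsto \cptp P_0 \cptp P$ for some $\cptp P_0$ anticommuting with the nontrivial $\cptp R$ --- is character orthogonality for the Pauli group in elementary form, and every step (including the existence of $\cptp P_0$, argued slot by slot) is sound. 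Your closing caveat deserves emphasis rather than being a footnote: since the paper defines $\mathcal{P}_1 = \langle \X, \Y, \Z \rangle$ as a group, $\mathcal{P}_n$ contains phase multiples, and the lemma is genuinely false when $\cptp Q'$ is a phase multiple of $\cptp Q$ (every summand then collapses to the same nonzero operator). The hypothesis $\cptp Q \neq \cptp Q'$ must therefore be read as inequality of Pauli strings, which is consistent with how the lemma is invoked in the security proof of Theorem~\ref{thm:verif}, where the deviation $\cptp F'$ is decomposed in a basis containing one representative per string and the twirling operators $\cptp Q_{\bm{a},\bm{r}} = \prod_i \Z_i^{r(i)} \X_i^{a(i)}$ likewise range over one representative per string.
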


\subsection{Abstract Cryptography}
\label{sec:ac}
Abstract Cryptography (AC) is a security framework for cryptographic protocols that was introduced in \cite{MR11:abstract-cryptography,M12:constructive-cryptography}.
The focus of the AC framework is to provide general composability. In this way, protocols that are separately shown to be secure within the framework can be composed in sequence or in parallel while keeping a similar degree of security. See \cite{DFPR14:composable} for further details.

On an abstract level, the AC framework considers resources and protocols. While a resource provides a specified functionality, protocols are essentially instructions how to construct resources from other resources. In this way, this framework allows the expansion of the set of available resources while ensuring general compatibility.

Technically, a quantum protocol $\pi$ with $N$ honest parties is described by $\pi = (\pi_1, \ldots, \pi_N)$, where the combined actions of party $i$, denoted $\pi_i$, are called the converter of party $i$ and consist in the quantum case of a sequence of efficiently implementable CPTP maps.
A resource has interfaces with the parties that are allowed to exchange states with it. During its execution, it waits for all input interfaces to be initialised, then applies a Completely Positive Trace-Preserving (CPTP) map to all interfaces and its internal state, and finally transmits the states in the output interfaces back to the appropriate parties. This process may be repeated multiple times.
Entirely classical resources can be enforced by immediate measurements of all input registers and the restriction of the output to computational basis states.

AC security is entirely based on the indistinguishability of resources. A protocol is considered to be secure if the resource which it constructs is indistinguishable from an ideal resource which encapsulates the desired security properties.
Two resources with the same number of interfaces are called indistinguishable if, given access to one of the resources, the guess of any algorithm trying to decide which one it is is close to random.
During this process, the algorithm, called the distinguisher, has access to all of the resource's interfaces.

\begin{definition}[Indistinguishability of Resources]
\label{indist-res}
Let $\epsilon>0$  and $\mathcal{R}_1$ and $\mathcal{R}_2$ be two resources with same input and output interfaces.  
Then, these resources are called $\epsilon$-statistically-indistinguishable, denoted $\mathcal{R}_1 \!\!\!\underset{\mathit{stat}, \epsilon}{\approx}\!\!\! \mathcal{R}_2$, if for all (unbounded) distinguishers $\mathcal{D}$ it holds that
\begin{align}
\Bigl\lvert\Pr[b = 1 \mid b \leftarrow \mathcal{D}\mathcal{R}_1] - \Pr[b = 1 \mid b \leftarrow \mathcal{D}\mathcal{R}_2]\Bigr\rvert \leq \epsilon.
\end{align}
Analogously, $\mathcal{R}_1$ and $\mathcal{R}_2$ are said to be computationally indistinguishable if this holds for all quantum polynomial-time distinguishers.
\end{definition}

With this definition in mind, the correctness of a protocol is captured by the indistinguishability of the resource constructed by the protocol from the ideal resource when all parties are honest, i.e.~they use their respective converters as specified by the protocol.
The security of the protocol against a set of malicious and collaborating parties is given by the indistinguishability of the constructed resource where the power of the distinguisher is extended to the transcripts of the corrupted parties.
This is formally captured by Definition~\ref{const-sec-def}.

\begin{definition}[Construction of Resources]
\label{const-sec-def}
Let $\epsilon > 0$.  
We say that an $N$-party protocol $\pi$ $\epsilon$-statistically-constructs resource $\mathcal{S}$ from resource $\mathcal{R}$ against adversarial patterns $\mathsf{P} \subseteq \wp([N])$ if:
\begin{enumerate}
\item It is correct: $\pi \mathcal{R} \!\!\!\underset{\mathit{stat}, \epsilon}{\approx}\!\!\! \mathcal{S}$.
\item It is secure for all subsets of corrupted parties in the pattern $M \in \mathsf{P}$: there exists a simulator (converter) $\sigma_M$ such that $\pi_{M^c}\mathcal{R} \!\!\!\underset{\mathit{stat}, \epsilon}{\approx}\!\!\! \mathcal{S} \sigma_M$.
\end{enumerate}
Analogously, computational correctness and security is given for computationally bounded distinguishers as in Definition~\ref{indist-res}, and with a quantum polynomial-time simulator $\sigma_M$.
\end{definition}

This finally allows us to formulate the General Composition Theorem at the core of the Abstract Cryptography framework.

\begin{theorem}[General Composability of Resources {\cite[Theorem~1]{MR11:abstract-cryptography}}]
\label{thm:ac-compos}

Let $\mathcal{R}$, $\mathcal{S}$ and $\mathcal{T}$ be resources, $\alpha$, $\beta$ and $\mathsf{id}$ be protocols (where protocol $\mathsf{id}$ does not modify the resource it is applied to). Let $\cdot$ and $\parallel$ denote respectively the sequential and parallel composition of protocols and resources. Then the following implications hold:

\begin{itemize}
\item The protocols are \emph{sequentially composable}: if $\alpha \mathcal{R} \!\!\!\underset{\mathit{stat}, \epsilon_{\alpha}}{\approx}\!\!\! \mathcal{S}$ and $\beta \mathcal{S} \!\!\!\underset{\mathit{stat}, \epsilon_{\beta}}{\approx}\!\!\! \mathcal{T}$ then $(\beta \cdot \alpha) \mathcal{R} \!\!\!\underset{\mathit{stat}, \epsilon_{\alpha} + \epsilon_{\beta}}{\approx}\!\!\! \mathcal{T}$.
\item The protocols are \emph{context-insensitive}: if $\alpha \mathcal{R} \!\!\!\underset{\mathit{stat}, \epsilon_{\alpha}}{\approx}\!\!\! \mathcal{S}$ then $(\alpha \parallel \mathsf{id}) (\mathcal{R} \parallel \mathcal{T}) \!\!\!\underset{\mathit{stat}, \epsilon_{\alpha}}{\approx}\!\!\! (\mathcal{S} \parallel \mathcal{T})$.
\end{itemize}
\end{theorem}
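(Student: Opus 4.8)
The plan is to reduce both statements to two elementary properties of the statistical distinguishing advantage from Definition~\ref{indist-res}: a triangle inequality and monotonicity under attaching a fixed converter (equivalently, that any fixed converter or auxiliary resource absorbed into a distinguisher yields another admissible distinguisher). Writing $d(\mathcal R_1,\mathcal R_2)$ for the supremum over distinguishers $\mathcal D$ of $\lvert\Pr[b=1\mid b\leftarrow\mathcal D\mathcal R_1]-\Pr[b=1\mid b\leftarrow\mathcal D\mathcal R_2]\rvert$, the triangle inequality is immediate: for any fixed $\mathcal D$ and any intermediate resource with matching interfaces, the termwise triangle inequality bounds the difference for the outer pair by the sum of the two pairwise differences, and taking the supremum over $\mathcal D$ gives $d(\mathcal R_1,\mathcal R_3)\le d(\mathcal R_1,\mathcal R_2)+d(\mathcal R_2,\mathcal R_3)$.

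For \textbf{sequential composability}, I would first show that applying a fixed converter to both sides preserves indistinguishability up to the same error. Concretely, from $\alpha\mathcal R \approx_{\epsilon_\alpha}\mathcal S$ I would deduce $\beta\alpha\mathcal R \approx_{\epsilon_\alpha}\beta\mathcal S$: any distinguisher $\mathcal D$ telling apart $\beta\alpha\mathcal R$ from $\beta\mathcal S$ can absorb the converter $\beta$ (which sits on the outer interfaces) to form a distinguisher $\mathcal D'=\mathcal D\beta$ between $\alpha\mathcal R$ and $\mathcal S$ with exactly the same advantage; hence that advantage is bounded by $\epsilon_\alpha$. Combining $\beta\alpha\mathcal R\approx_{\epsilon_\alpha}\beta\mathcal S$ with the hypothesis $\beta\mathcal S\approx_{\epsilon_\beta}\mathcal T$ through the triangle inequality yields $(\beta\cdot\alpha)\mathcal R \approx_{\epsilon_\alpha+\epsilon_\beta}\mathcal T$, which is exactly the claim.

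For \textbf{context-insensitivity}, I would argue by a simulation-of-the-context reduction. Suppose $\mathcal D$ distinguishes $(\alpha\parallel\mathsf{id})(\mathcal R\parallel\mathcal T)$ from $\mathcal S\parallel\mathcal T$; it has access to the interfaces of the $\alpha/\mathcal R$ side and of $\mathcal T$. I would build $\mathcal D'$ that runs $\mathcal D$ but internally instantiates the fixed resource $\mathcal T$, forwarding to and from it the messages $\mathcal D$ exchanges on the $\mathcal T$ interfaces, while passing the remaining interfaces through to an external copy of either $\alpha\mathcal R$ or $\mathcal S$. Then $\mathcal D'$ distinguishes $\alpha\mathcal R$ from $\mathcal S$ with exactly the advantage of $\mathcal D$, so the latter is at most $\epsilon_\alpha$, establishing the second bullet.

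The main subtlety to handle carefully is the \emph{admissibility} of the reduced distinguishers. In the statistical setting this is automatic, since $\mathcal D'$ is again an unbounded algorithm. In the computational setting the reductions preserve the bound only if the absorbed objects are efficiently implementable — $\beta$ is a protocol, i.e.\ a sequence of efficient CPTP maps, and $\mathcal T$ must admit an efficient internal simulation — so I would state this hypothesis explicitly and observe that it is precisely what keeps $\mathcal D'$ quantum polynomial-time. A secondary point of care is interface bookkeeping: one must verify that attaching $\beta$, or running the simulated $\mathcal T$ in parallel, produces resources whose input and output interfaces match exactly, so that Definition~\ref{indist-res} applies verbatim.
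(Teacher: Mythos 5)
The paper states this theorem as an imported result, citing Theorem~1 of the Abstract Cryptography paper of Maurer and Renner, and gives no proof of its own, so the only meaningful comparison is against that source. Your reconstruction is correct and is essentially the standard argument used there: the distinguishing advantage is a pseudo-metric (triangle inequality) that cannot increase when a fixed converter $\beta$ or a parallel resource $\mathcal{T}$ is absorbed into the distinguisher, the two bullets follow by combining exactly these two facts, and your explicit caveat that the computational variant additionally requires $\beta$ and $\mathcal{T}$ to be efficiently implementable (so that the reduced distinguisher remains quantum polynomial-time) is precisely the hypothesis under which the paper's Definition~\ref{indist-res} and Definition~\ref{const-sec-def} transfer to the bounded setting.
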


A combination of these two properties yields concurrent composability, where the distinguishing advantage accumulates additively as well.

Resource \ref{def:bvqc_if} captures the security properties of a blind and verifiable delegated protocol for a given class of computations. It allows a single Client to run a quantum computation on a Server so that the Server cannot corrupt the computation and does not learn anything besides a given leakage $l_{\rho}$. We recall the original definition from \cite[Definition 4.2]{DFPR14:composable}.

\begin{resource}[ht]
\caption{Secure Delegated Quantum Computation}
\label{def:bvqc_if}
\begin{algorithmic}[0]

\STATE \textbf{Public Information:} Nature of the leakage $l_{\rho_C}$.

\STATE \textbf{Inputs:} 
\begin{itemize}
\item The Client inputs the classical description of a computation $\cptp C$ from subspace $\Pi_{I,C}$ to subspace $\Pi_{O,C}$ and a quantum state $\rho_C$ in $\Pi_{I,C}$.
\item The Server chooses whether or not to deviate. This interface is filtered by two control bits $(e, c)$ (set to $0$ by default for honest behaviour).
\end{itemize}

\STATE \textbf{Computation by the Resource:}
\begin{enumerate}
\item If $e = 1$, the Resource sends the leakage $l_{\rho}$ to the Server's interface; if it receives $c = 1$, the Resource outputs $\dyad{\bot}\otimes\dyad{\rej}$ at the Client's output interface.
\item Otherwise it outputs $\cptp{C}[\rho_C] \otimes \dyad{\acc}$ at the Client's output interface.
\end{enumerate}

\end{algorithmic}
\end{resource}

On the other hand, the following resource models the security of the UBQC Protocol \ref{proto:ubqc}. It leaks no information to the Server beyond a controlled leak, but allows the Server to modify the output by deviating from the Client's desired computation.

\begin{resource}[ht]
\caption{Blind Delegated Quantum Computation}
\label{def:bqc_if}
\begin{algorithmic}[0]

\STATE \textbf{Public Information:} Nature of the leakage $l_{\rho_C}$.

\STATE \textbf{Inputs:} 
\begin{itemize}
\item The Client inputs the classical description of a computation $\cptp C$ from subspace $\Pi_{I,C}$ to subspace $\Pi_{O,C}$ and a quantum state $\rho_C$ in $\Pi_{I,C}$.
\item The Server chooses whether or not to deviate. This interface is filtered by two control bits $(e, c)$ (set to $0$ by default for honest behaviour). If $c = 1$, the Server has an additional input CPTP map $\cptp F$ and state $\rho_S$.
\end{itemize}

\STATE \textbf{Computation by the Resource:}
\begin{enumerate}
\item If $e = 1$, the Resource sends the leakage $l_{\rho_C}$ to the Server's interface.
\item If $c = 0$, it outputs $\cptp{C}[\rho_C]$ at the Client's output interface. Otherwise, it waits for the additional input and outputs $\Tr_S(\cptp F[\rho_{CS}])$ at the Client's interface.
\end{enumerate}

\end{algorithmic}
\end{resource}

The following theorem captures the security guarantees of the UBQC Protocol \ref{proto:ubqc} in the Abstract Cryptography Framework, as expressed in \cite{DFPR14:composable}.

\begin{theorem}[Security of Universal Blind Quantum Computation]
\label{thm:sec-ubqc}
The UBQC Protocol \ref{proto:ubqc} perfectly constructs the Blind Delegated Quantum Computation Resource \ref{def:bqc_if} for leak $l_{\rho_C} = (G, O, \preceq_G)$, where $\preceq_G$ is the ordering induced by the flow of the computation.
\end{theorem}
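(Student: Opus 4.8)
The plan is to verify both clauses of Definition~\ref{const-sec-def} with error $0$: correctness in the all-honest case, and perfect blindness against a corrupted Server $M=\{S\}$ via an explicit simulator. The constructed resource is Resource~\ref{def:bqc_if}, playing the role of $\mathcal{S}$, and the assumed resource $\mathcal{R}$ is the bare quantum-plus-classical channel between Client and Server.

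\emph{Correctness.} First I would show that an honest run yields $\cptp C[\rho_C]$ at the Client's interface, matching $\mathcal{S}$ with control bits $e=c=0$. This reduces to the correctness of the bare MBQC pattern together with the fact that every one-time-pad layer is undone: the $\Z(\theta(i))$ rotations commute through the $\CZ$ gates and are cancelled by the $-\theta(i)$ shift inside $\delta(i)$; the input flips $\X^{a(i)}$ commute through the $\CZ$ gates, producing $\Z$ byproducts on neighbours that are absorbed by the $a_N(i)\pi$ term; and the hiding bits $r(i)$ are removed by the outcome decryption $s(j)=b(j)\oplus r(j)$ together with the final correction $\Z_i^{s_Z(i)+r(i)}\X_i^{s_X(i)+a(i)}$ on the output qubits. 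This is precisely the calculation carried out for the three-vertex example in Figure~\ref{fig:ubqc-cor}, and the general case follows by induction along the order $\preceq_G$, using the defining property of the flow that the corrections render the computation independent of intermediate outcomes.

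\emph{Blindness.} For security I would build a simulator $\sigma_S$ at the Server interface and argue $\pi_{\{S\}^c}\mathcal{R} \underset{\mathit{stat},0}{\approx} \mathcal{S}\sigma_S$. Two randomisation facts do the work. First, the transmitted angles are uniform and computation-independent: since $\theta(i)$ is uniform on the additive group $\Theta$ and the remaining contributions to $\delta(i)=(-1)^{a(i)}\phi'(i)+\theta(i)+(r(i)+a_N(i))\pi$ lie in $\Theta$, the angle $\delta(i)$ is uniform on $\Theta$ for every value of $\phi'(i)$. Second, averaged over the Client's secrets each transmitted qubit is maximally mixed: for an auxiliary qubit $\frac{1}{8}\sum_{\theta\in\Theta}\dyad{+_\theta}=\frac{\1}{2}$, and for an input qubit the quantum one-time pad gives $\frac{1}{16}\sum_{a,\theta}\Z(\theta)\X^a[\rho_C]=\frac{\1}{2}$, the sum over $\theta$ killing the off-diagonal entries and the sum over $a$ symmetrising the diagonal. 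The simulator then queries $\mathcal{S}$ for the leak $l_{\rho_C}=(G,O,\preceq_G)$, emulates the honest Client by transmitting one half of a maximally entangled pair for each input qubit (deferring the true input to the resource), fresh $\ket{+_{\theta(i)}}$ states for auxiliary qubits, and uniform angles $\delta(i)\in\Theta$ in the order $\preceq_G$, while recording the distinguisher's returned outcomes and output register. By the two facts, the Server-side view is reproduced exactly; to reproduce the Client's output the simulator assembles the distinguisher's operation, its retained entanglement halves, and the returned output qubits into a single CPTP map $\cptp F$ with ancilla $\rho_S$, which it submits through the $c=1$ interface so that $\Tr_S(\cptp F[\rho_{CS}])$ equals the decrypted state the honest Client would have produced under the same deviation, for the true but unknown $(\cptp C,\rho_C)$.

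The main obstacle is that the auxiliary state $\ket{+_{\theta(i)}}$ and the later angle $\delta(i)$ share the secret $\theta(i)$, so blindness cannot be obtained by averaging them independently. The resolution is a change of variables from $\theta(i)$ to the uniform $\delta(i)$: fixing the observed $\delta(i)$ pins $\theta(i)$ as a function of the secrets $\phi'(i),a(i),r(i),a_N(i)$, after which the residual dependence of the transmitted state on the computation is a pure one-time-pad Pauli that is absorbed into $\cptp F$. Making this absorption robust against an adaptive Server -- which may choose its deviation on qubit $i$ only after seeing earlier angles -- is handled by the Pauli Twirling Lemma~\ref{lem:twirl}: conjugating the deviation by the random pad leaves a well-defined convex combination of Paulis, hence a legitimate CPTP map for the resource. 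Since every distribution is matched exactly and no approximation is introduced at any step, the construction error is $0$, giving the claimed perfect construction.
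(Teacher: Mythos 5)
You should first know that the paper contains no proof of this statement: Theorem~\ref{thm:sec-ubqc} is imported from \cite{DFPR14:composable}, and the closest thing to an internal proof is the machinery inside the proof of Theorem~\ref{thm:verif}, where the unitarisation of the Server, the cancellation of $\theta(i)$ against $\delta(i)$, the change of variables $b'(i) = b(i) + r(i)$, the commutation of $\X^{a(i)}$ and $\Z^{r(i)}$ through the honest operations, and the Pauli twirl are carried out explicitly for the trappified protocol. Your sketch follows essentially that same standard route — uniformity of the transmitted angles, one-time-pad decoupling of the transmitted qubits conditioned on the angles, and a simulator that replays the honest Client with dummy secrets and forwards the recorded deviation to the ideal resource — so in spirit it is the proof the paper chose to cite rather than reproduce.

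There is, however, one genuine gap in your simulator, and one glossed-over technical core. The gap: the map $\cptp F$ you submit through the $c=1$ interface cannot be assembled only from ``the distinguisher's operation, its retained entanglement halves, and the returned output qubits''. In the real protocol, the Client's output under a deviation is obtained by running the pattern for $\cptp C$ (whose angles $\phi'(i)$ and flow $f$ are private, not part of the leak $(G, O, \preceq_G)$) and then applying the key- and flow-dependent decryption $\Z_i^{s_Z(i)+r(i)}\X_i^{s_X(i)+a(i)}$; this output genuinely depends on the description of $\cptp C$. A fixed $\cptp F$ acting only on $\rho_C \otimes \rho_S$ therefore cannot equal it for all $\cptp C$ simultaneously. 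The simulation is possible only because the state $\rho_{CS}$ in Resource~\ref{def:bqc_if} contains the Client's \emph{classical description} of $\cptp C$ alongside $\rho_C$, and the submitted $\cptp F$ must be defined so as to read that description internally — executing the honest pattern with the recorded transcript, teleportation outcomes and dummy keys hard-wired in, deviation spliced where the distinguisher acted, followed by the corresponding decryption. Your phrase ``for the true but unknown $(\cptp C, \rho_C)$'' gestures at this, but as written the construction fails without it. The glossed core: invoking Lemma~\ref{lem:twirl} to ``handle adaptivity'' presupposes that the whole (adaptive) deviation has already been rewritten as a single unitary sandwiched between \emph{identical} random Pauli operators, which requires the commutation bookkeeping mentioned above, including the fact that each $r(i)$ also feeds into later angles $\phi'(j)$ through the flow corrections; your per-qubit one-time-pad argument only closes once this joint structure is treated (e.g.\ by averaging qubits in reverse measurement order), not qubit-by-qubit independently. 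Both repairs are standard, but they are precisely where the real work of the cited proof lies.
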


\section{Analysing Deviations with Traps}
\label{sec:detect}
The goal of this section is to introduce the concepts and tools for detecting deviations from a given computation.
Later, in Section~\ref{sec:verif}, we combine these techniques with blindness in order to detect malicious deviations, i.e.~perform verification.

\subsection{Abstract Definitions of Traps}
\label{subsec:detect-def}
We start by defining partial MBQC patterns in Definition~\ref{def:pp}, which fix only a subset of the measurement angles and flow conditions on a given graph. We constrain the flow such that the determinism of the computation is preserved on the partial pattern independently of the how the rest of the flow is specified.

\begin{definition}[Partial MBQC Pattern]
\label{def:pp}

Given a graph $G=(V,E)$, a partial pattern $P$ on $G$ is defined by:

\begin{itemize}
\item $G_P = (V_P, E_P = E \cap V_P \times V_P)$, a subgraph of $G$;
\item $I_P$ and $O_P$, the partial input and output vertices, with subspaces $\Pi_{I, P}$ and $\Pi_{O, P}$ defined on vertices $I_P$ and $O_P$ through bases $\mathcal{B}_{I, P}$ and $\mathcal{B}_{O, P}$ respectively;
\item $\{\phi(i)\}_{i\in V_{P} \setminus O_{P}}$, a set of measurement angles;
\item $f_p : V_{P} \setminus O_{P} \rightarrow V_{P} \setminus I_{P}$, a flow inducing a partial order $\preceq_P$ on $V_P$.
\end{itemize}

\end{definition}

\begin{example}[Partial Pattern for Computing]
\label{ex:partial}
Let $G$ be the $n\times m$ 2D-cluster graph -- i.e.~$n$-qubit high and $m$-qubit wide -- and the ordering of the qubits starting in the upper-left corner, going down first then right. Such graph state is universal for MBQC~\cite{RB01:one}.  There are many possible partial patterns that can be defined on such graph. For instance, consider a pattern $Q$ that runs on a smaller $n'\times m'$ 2D-cluster graph. Then, one can define a partial pattern $P$ on $G$ as the top-left $(n'+1)\times (m'+1)$ subgraph. The set $I_P$ is defined as the set $I$ of $Q$ together with all the qubits on the bottom row and right column. The input space corresponds to the Hilbert space of the input qubits of $Q$ tensored with $\ket 0$ for the qubits of the bottom row and right column. The output set $O_P$ is the same set as in $Q$ and $\Pi_{O,P}$ is the full Hilbert space of the output qubits. The measurement angles are the same as in $Q$ for the corresponding qubits and set to be random for the bottom row and right column. The flow is the same as in $Q$, provided that the added $\ket 0$ qubits have no dependent qubits.  Because the added qubits are forced to be in the $\ket 0$ state, this isolates a $n' \times m'$ 2D-cluster graph that can then be used to perform the same operations as in $Q$, thereby allowing to compute the same unitary, albeit using a larger graph, see Figure~\ref{fig:ex1}.  Note that one can change the location of the $n' \times m'$ 2D-cluster graph used for the computation, as long as it is properly surrounded by qubits in the $\ket 0$ state. This is done by defining the input subspace of the partial pattern to take that constraint into account.
\end{example}

\begin{figure}
  \centering
  \includegraphics[width=8cm]{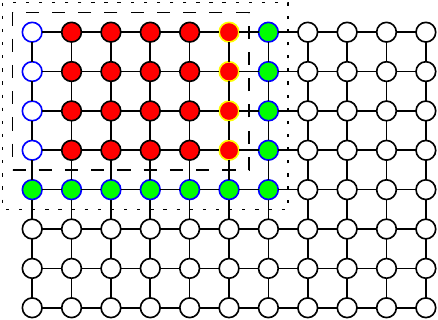}
  \caption{Partial pattern for computing. The partial pattern is in the dashed box. Input qubits are surrounded in blue, output qubits in yellow. Red filled qubits are prepared in $\ket +$ while the green ones are prepared in $\ket 0$. The green qubits define a subspace of the Hilbert space of the input qubits that guarantees that a $4 \times 6$ cluster state computation can be run inside the long-dashed box.}
  \label{fig:ex1}
\end{figure}

We now use this notion to define trappified canvases. These contain a partial pattern whose input state is fixed such that it produces a sample from an easy to compute probability distribution when its ouput qubits are measured in the $\X$ basis. These partial patterns are called \emph{traps} and will be used to detect deviations in the following way. Whenever a trap computation is executed, it should provide outcomes that are compatible with the trap's probability distribution. Failure to do so is a sign that the server deviated from the instructions given by the client.

\begin{definition}[Trappified Canvas]
\label{def:trap-c}

A \emph{trappified canvas} $(T, \sigma, \pd{T}, \tau)$ on a graph $G = (V, E)$ consists of:

\begin{itemize}
\item $T$, a partial pattern on a subset of vertices $V_T$ of $G$ with input and output sets $I_T$ and $O_T$;
\item $\sigma$, a tensor product of single-qubit states on $\Pi_{I, T}$;
\item $\pd{T}$, an efficiently classically computable probability distribution over binary strings;
\item and $\tau$, an efficient classical algorithm that takes as input a sample from $\pd{T}$ and outputs a single bit;
\end{itemize}

\noindent such that the $X$-measurement outcomes of qubits in $O_T$ are drawn from probability distribution $\pd{T}$. Let $t$ be such a sample, the outcome of the trappified canvas is given by $\tau(t)$. By convention we say that it accepts whenever $\tau(t)=0$ and rejects for $\tau(t)=1$.

\end{definition}

We will often abuse the notation and refer to the trappified canvas $(T,\sigma, \pd T, \tau)$ as $T$. 

\begin{example}[Canvas with a Single Standard Trap]
\label{ex:canvas}
Consider the $n\times m$ 2D-cluster graph and consider the partial pattern of Example~\ref{ex:partial} where the subgraph is a $3\times 3$ square -- i.e.~a single computation qubit surrounded by 8 $\ket 0$ states. The input state is fixed to be $\sigma = \ket{+}\otimes\ket{0}^{\otimes 8}$ where $\ket +$ is the state of the central qubit, the others being the aforementioned peripheral ones. Because the central qubit is measured along the $\X$-axis $\pd T$ is deterministic -- the measurement outcome $0$ corresponding to the projector $\ketbra +$ has probability $1$. The accept function is defined by $\tau(t) = t$ so that the trappified canvas accepts whenever the measurement outcome of the central qubit corresponds to the expected $0$ outcome. Here, the $3\times 3$ partial pattern defines a trap( see Figure~\ref{fig:ex2}).
\end{example}

\begin{figure}
  \centering
  \includegraphics[width=8cm]{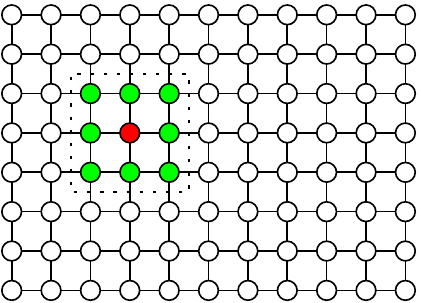}
  \caption{Trappified canvas. The partial pattern inside the dashed box is the trap. The central qubit (red) is prepared in $\ket +$ and is surrounded by $\ket 0$'s (green) that effectively ensure that irrespective of the measurement angles on the remaining qubits the central qubit will remain in $\ket +$. Failure to obtain the 0-outcome when measuring $\X$ will be a proof that the server deviated from the given instructions. The preparation and measurement angles of the remaining qubits is left unspecified.}
  \label{fig:ex2}
\end{figure}

Note that the input and output qubits of a partial pattern may not be included in the input and output qubits of the larger MBQC graph. This gives us more flexibility in defining trappified canvases: during the protocol presented in the next section, the server will measure all qubits in $O^c$ -- which may include some of the trap outputs --, while any measurement of qubits in $O$ will be performed by the client. This allows the trap to catch deviations on the output qubits as well.

In order to be useful, trappified canvases must contain enough empty space -- vertices which have been left unspecified -- to accommodate the client's desired computation. Inserting this computation is done via an \emph{embedding algorithm} as described in the following Definition.

\begin{definition}[Embedding Algorithm]
\label{def:embed-alg}

Let $\mathfrak{C}$ be a class of quantum computations. An \emph{embedding algorithm} $E_{\mathfrak{C}}$ for $\mathfrak{C}$ is an efficient classical probabilistic algorithm that takes as input:

\begin{itemize}
\item $\cptp{C} \in \mathfrak{C}$, the computation to be embedded;
\item $G = (V, E)$, a graph, and an output set $O$;
\item $T$, a trappified canvas on graph $G$;
\item $\preceq_G$, a partial order on $V$ which is compatible with the partial order defined by $T$;
\end{itemize}

\noindent and outputs:
\begin{itemize}
\item a partial pattern $C$ on $V \setminus V_T$, with
  \begin{itemize}
  \item input and output vertices $I_C \subset V \setminus V_T$ and $O_C = O \setminus O_T$;
  \item two subspaces (resp.) $\Pi_{I,C}$ and $\Pi_{O,C}$ of (resp.) $I_C$ and $O_C$ with bases (resp.) $\mathcal{B}_{I,C}$ and $\mathcal{B}_{O,C}$;
  \end{itemize}
\item a decoding algorithm $\Deco$;
\end{itemize}
such that the flow $f_C$ of partial pattern $C$ induces a partial order which is compatible with $\preceq_G$. If $E_{\mathfrak C}$ is incapable of performing the embedding, it outputs $\bot$.

\end{definition}

As will be come apparent in later definitions, a good embedding algorithm will yield patterns which apply a desired computation $\cptp C$ to any input state in subspace $\Pi_{I,C}$, with the output being in subspace $\Pi_{O,C}$ after the decoding algorithm has been run. The decoding algorithm can be quantum or classical depending on the nature of the output. Its purpose is to allow the client to recover its desired output from what the server returns at the end of the computation. It could for example be a classical or quantum error-correction decoder. We will furthermore require all embedding algorithms in the paper to have the following property.

\begin{definition}[Proper Embedding]
\label{def:prop-embed}

We say that an embedding algorithm $E_{\mathfrak{C}}$ is \emph{proper} if, for any computation $\cptp{C} \in \mathfrak{C}$ and trappified canvas $T$ that do not result in a $\bot$ output, we have that:
\begin{itemize}
\item $f_C$ does not induce dependencies on vertices $V_T$ of partial pattern $T$;
\item the input and output subspaces $\Pi_{I,C}$ and $\Pi_{O,C}$ do not depend on the trappified canvas $T$.
\end{itemize}

\end{definition}

\begin{example}[Embedding Algorithm on a 2D-Cluster Graph Canvas with a Single Trap]
  \label{ex:embedding}
  Define $\mathfrak C$ as the class of computations that can be implemented using a $(n-3) \times m$ 2D-cluster state. An embedding algorithm for $\mathfrak C$ on $T$ can be defined in the following way.  Consider the trappified canvas $T$ of Example~\ref{ex:canvas} with a $n\times m$ 2D-cluster graph and a single $3\times 3$ trap in the upper left corner.  The output of the embedding algorithm would be the pattern $P$ defined in the following way.  For $C \in \mathfrak C$, by assumption, one can define a pattern $Q$ on a $(n-3) \times m$ 2D-cluster graph that implements $C$.  The angles and flow of the partial pattern $P$ is identical to that of $Q$ albeit applied on the lower $n-3$ rows of $T$. On the $3 \times (m-3)$ upper right rectangular subgraph, all angles are set randomly. $I_{C}$ is such that it comprises all inputs defined in $Q$ and the last $m-3$ qubits of the third row. Choose $\Pi_{I,P}$ so that these $m-3$ qubits are set to $\ket 0$.  Then, by construction, this together with the trap isolates a $(n-3)\times m$ rectangular subgraph on which $P$ will be defining MBQC instructions identical to those of $Q$, thereby implementing $C$. In addition, one can see that there are no dependency between measurements of $P$ and that of the trap in $T$ so that the embedding algorithm is proper. Note that one can change the location of the trap to any column. If in addition the 2D-cluster graph if cylindrical instead of rectangular, the trap can be moved to any location within the cylinder.
\end{example}

\begin{figure}
  \centering
  \includegraphics[width=8cm]{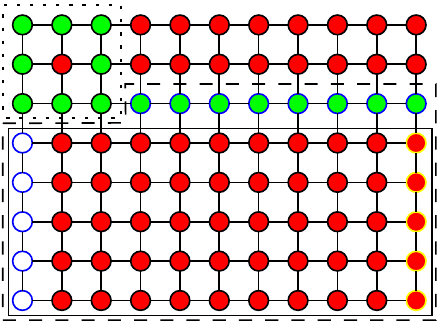}
  \caption{Trappified canvas. Input qubits are surrounded in blue, output qubits in yellow. Red filled qubits are prepared in $\ket +$ while the green ones are prepared in $\ket 0$, white ones are left unspecified. The trap is located in the upper left corner. The actual computation takes place in the $5 \times 11$ rectangular cluster state surrounded by a solid-line while the computation pattern comprises the qubits surrounded by long-dashed line. This allows to include some dummy $\ket 0$ qubits in the inputs so as to disentangle the lower 5 rows from the rest of the graph and perform the computation. Output qubits of the computation are surrounded by a yellow line. The partial pattern inside the dashed box is the trap. The central qubit (red) is prepared in $\ket +$ and is surrounded by $\ket 0$'s (green) that effectively ensure that irrespective of the measurement angles on the remaining qubits the central qubit will remain in $\ket +$. Failure to obtain the 0-outcome when measuring $\X$ will be a proof that the server deviated from the given instructions. The preparation and measurement angles of the remaining qubits is left unspecified.}
  \label{fig:ex3}
\end{figure}

\begin{definition}[Trappified Pattern]
\label{def:trap-pat}

Let $E_{\mathfrak{C}}$ be an embedding algorithm for $\mathfrak{C}$. Given a computation $\cptp C \in \mathfrak{C}$ and a trappified pattern $T$ on graph $G$ with order $\preceq_G$, we call the completed pattern $\TP$ which is the first output of $E_{\mathfrak{C}}(\cptp C, G, T, \preceq_G)$ a \emph{trappified pattern}.

\end{definition}

While embedding a computation in a graph that has enough space for it might seem simple, the hard part is to ensure that the embedding is \emph{proper}.  This property implies that no information is carried via the flow of the global pattern from the computation to the trap and it is essential for the security of the verification protocol built using trappified canvases. In Example~\ref{ex:embedding} above, this is done by breaking the graph using the states initialised in $\ket{0}$. The only other known way is to separate runs for tests and computations and satisfying this condition using other methods is left as an open question.

Note that the input and output qubits of the computation $\cptp{C}$ might be constrained to be in (potentially strict) subspaces $\Pi_{I, C}$ and $\Pi_{O, C}$ of $I_C$ and $O_C$ respectively. This allows for error-protected inputs and outputs, without having to specify any implementation for the error-correction scheme.  In particular, it encompasses encoding classical output data as several, possibly noisy, repetitions which will be decoded by the client through a majority vote as introduced in \cite{LMKO21:verifying}.  It also allows to take into account the case where the trappified pattern comprises a fully fault-tolerant MBQC computation scheme for computing $\cptp C$ using topological codes as described in \cite{RHG07:topological}.

For verification, our scheme must be able to cope with malicious behaviour: detecting deviations is useful for verification only so long as the server cannot adapt its behaviour to the traps that it executes. Otherwise, it could simply decide to deviate exclusively on non-trap qubits. 
This is achieved by executing the patterns in a blind way so that the server has provably no information about the location of the traps and cannot avoid them with high probability. 
To this end, we define \emph{blind-compatible} patterns as those which share the same graph, output vertices and measurement order of their qubits. The UBQC Protocol described in Appendix \ref{sec:prelims} leaks exactly this information to the server, meaning that it cannot distinguish the executions of two different blind-compatible patterns.

\begin{definition}[Blind-Compatibility]
\label{def:compat}
A set of patterns $\sch P$ is \emph{blind-compatible} if all patterns $P \in \sch P$ share the same graph $G$, the same output set $O$ and there exists a partial ordering $\preceq_{\sch P}$ of the vertices of $G$ which is an extension of the partial ordering defined by the flow of any $P \in \sch P$. This definition can be extended to a set of trappified canvases $\sch P = \{(T, \sigma, \pd{T}, \tau)\}$. The partial order $\preceq_{\sch P}$ is required to be an extension of the orderings $\preceq_T$ of partial patterns $T$.
\end{definition}

A single trap is usually not sufficient to catch deviations on more than a subset of positions of the graph. In order to catch all deviations, it is then necessary to randomise the blind delegated execution over multiple patterns.
We therefore define a trappified scheme as a set of blind-compatible trappified canvases which can be efficiently sampled according to a given distribution, along with an algorithm for embedding computations from a given class into all the canvases.

\begin{definition}[Trappified Scheme]
\label{def:trap-scheme}
A \emph{trappified scheme} $(\sch P, \preceq_G, \pd P, E_{\mathfrak{C}})$ over a graph $G$ for computation class $\mathfrak{C}$ consists of:
\begin{itemize}
\item $\sch P$, a set of \emph{blind-compatible} trappified canvases over graph $G$ with common partial order $\preceq_{\sch P}$;
\item $\preceq_G$, a partial ordering of vertices $V$ of $G$ that is compatible with $\preceq_{\sch P}$;
\item $\pd P$, a probability distribution over the set $\sch P$ which can be sampled efficiently;
\item $E_{\mathfrak{C}}$, a \emph{proper} embedding algorithm for $\mathfrak{C}$.
\end{itemize}
\end{definition}

Without loss of generality, in the following, the probability distribution used to sample the trappified canvases will generally be $u(\sch P)$, the uniform distribution over $\sch P$.
The general case can be approximated from the uniform one with arbitrary fixed precision by having several copies of the same canvas in $\sch P$.
We take $T \sim \sch P$ to mean that the trappified canvas is sampled according to the distribution $\pd P$ of trappified scheme $\sch P$.

Note that in Definition~\ref{def:trap-scheme} above, while the blindness condition ensures that a completed pattern obtained after running the embedding algorithm hides the location of the traps, the existence of a partial order $\preceq_{G}$ compatible with that of the trappified canvases ensures that this remains true when considering the scheme as a whole, i.e the order in which the qubits are measured does not reveal information about the chosen trappified canvas itself, which would otherwise break the blindness of the scheme. 

\begin{example}[Trappified Scheme for a Cylindrical-Cluster Graph]
  Consider the set of trappified canvases together with the embedding algorithm $E_{\mathfrak C}$ on the cylindrical cluster-graph with a single randomly placed $3 \times 3$ trap as defined in Example~\ref{ex:embedding}. This defines a trapification scheme for $\mathfrak{C}$ consisting of computations that can be implemented using a $(n-3)\times m$ 2D-cluster graph (See Figure~\ref{fig:ex4}).
\end{example}

\begin{figure}
  \centering
  \includegraphics[width=8cm]{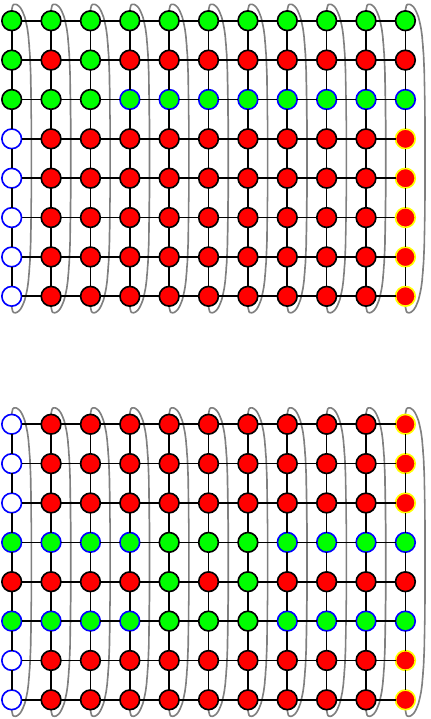}
  \caption{Trappified scheme. Two possible canvas extracted from a trappified scheme with a single $3\times 3$ trap on a toric $8\times 11$ toric cluster state. Input qubits are surrounded in blue, output qubits in yellow. Red filled qubits are prepared in $\ket +$ while the green ones are prepared in $\ket 0$, white ones are left unspecified. The actual computation takes place in the $5 \times 11$ rectangular cluster state, while the trap is located at a different positions in each picture allowing to detect all possible deviations performed by a server, albeit with a low probability of success.}
  \label{fig:ex4}
\end{figure}

\subsection{Effect of Deviations on Traps}
\label{subsec:detect-prop}

We can now describe the purpose of the objects described in the previous subsection, namely detecting the server's deviations from their prescribed operations during a given delegated computation. We start by recalling that the blindness of UBQC Protocol is obtained by Pauli-twirling the operations delegated to the server. This implies that any deviation can be reduced to a convex combination of Pauli operators. Then, we formally define Pauli deviation detection and insensitivity for trappified canvases and schemes. We show in the next section that these key properties are sufficient for obtaining a verifiable delegated computation by formalising the steps sketched here.

When a client delegates the execution of a pattern $P$ to a server using Protocol~\ref{proto:ubqc}, the server can potentially deviate in an arbitrary way from the instructions it receives. 
By converting into quantum states both the classical instructions sent by the client -- i.e.~the measurement angles -- and the measurement outcomes sent back by the server, all operations on the server's side can be modelled as a unitary $\cptp F$ acting on all the qubits sent by the client and some ancillary states $\ket 0_S$, before performing measurements in the computational basis to send back the outcomes $\ket b$ that the client expects from the server.

The instructions of the server in an honest execution of the UBQC Protocol \ref{proto:ubqc} correspond to:
\begin{enumerate}
\item entangling the received qubits corresponding to the vertices of the computation graph with operation $\cptp G = \prod_{(i,j) \in E} \CZ_{i, j}$;
\item performing rotations on non-output vertices around the $\Z$-axis, controlled by the qubits which encode the measurement angles instructed by the client;
\item applying a Hadamard gate $\Ha$ on all non-output vertices;
\item measuring non-output vertices in the $\{\ket 0, \ket 1\}$ basis.
\end{enumerate}
The steps (i-iii) correspond to a unitary transformation $\cptp U_P$ that depends only on the public information that the server has about the pattern $P$ -- essentially the computation graph $G$ and an order of its vertices compatible with the flow of $P$.

Hence, the unitary part $\cptp U_P$ of the honestly executed protocol for delegating $P$ can always be extracted from $\cptp F$, so that $\cptp F = \cptp F' \circ \cptp U_P$. 
Here, $\cptp F'$ is called a \emph{pure deviation} and is applied right before performing the computational basis measurements for non-output qubits and right before returning the output qubits to the Client.

When the pattern is executed blindly using Protocol~\ref{proto:ubqc}, the state in the server's registers during the execution is a mixed state over all possible secret parameters chosen by the client. It is shown in \cite{K16:efficient} that the resulting summation over the secret parameters which hide the inputs, measurement angles and measurement outcomes is equivalent to applying a Pauli twirl to the pure deviation $\cptp F'$.
This effectively transforms it into a convex combination of Pauli operations applied after $\cptp U_P$.

Hence, any deviation by the server can be represented without loss of generality by choosing with probability $\Pr[\cptp E]$ an operator $\cptp E$ in the Pauli group $\mathcal P_V$ over the vertices $V$ of the graph used to define $P$, and executing $\cptp E \circ \cptp U_P$ instead of $\cptp U_P$ for the unitary part of the protocol.
By a slight abuse of notation, such transformation will be denoted $\cptp E\circ P$.
Furthermore, if $\TP$ is a trappified pattern obtained from a trappified canvas $T$ that samples $t = (t_1, \ldots, t_N)$ from the distribution $\pd T$, then in the presence of deviation $\cptp E$, it will sample from a different distribution.
For instance, whenever $\cptp E$ applies a $\Z$ operator on a vertex, it can be viewed as an execution of a pattern where the angle $\delta$ for this vertex is changed into $\delta + \pi$. Whenever $\cptp E$ applies a $\X$ operator on a vertex, $\delta$ is transformed into $-\delta$.
We now give a lemma which will be useful throughout the rest of the paper.
\begin{lemma}[Independence of Trap and Computation]
\label{lem:indep-t}
Let $\TP$ be a trappified pattern obtained from the trappified canvas $T$ which samples from distribution $\pd T$ through a proper embedding algorithm of computation $\cptp C$. Then, for all Pauli errors $\cptp E$, the distribution of trap measurement outcomes is independent of the computation $\cptp C$ and of the input state in the subspace $\Pi_{I, C}$.
\end{lemma}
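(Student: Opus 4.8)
The plan is to reduce the claim to a purely trap-local statement and then invoke the two defining properties of a proper embedding. First I would fix an arbitrary Pauli error $\cptp E \in \mathcal{P}_V$ and, following the reduction established just above, view the execution of the trappified pattern $\TP$ as the honest unitary $\cptp U_P$ followed by $\cptp E$ and then the single-qubit measurements. Since $\cptp E$ is a tensor product over the vertices, I would split it as $\cptp E = \cptp E_{V_T} \otimes \cptp E_{V_C}$ with $\cptp E_{V_T}$ supported on the trap vertices $V_T$ and $\cptp E_{V_C}$ on the complementary computation vertices $V_C = V \setminus V_T$. The outcomes defining the trap sample $t$ are read off from the qubits of $O_T$ (measured in the $\X$ basis) together with the internal trap measurements; the goal is to show that their joint distribution does not depend on $\cptp C$ or on the input $\rho \in \Pi_{I,C}$.

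Next I would use properness (Definition~\ref{def:prop-embed}) to fix the measurement schedule on $V_T$. Because $f_C$ induces no dependency on $V_T$, the effective angle of every trap qubit is computed solely from earlier trap outcomes and is never corrected by a computation measurement; thus the sequence of bases in which the vertices of $V_T$ are measured is a fixed, computation-independent process. Moreover, a single-qubit Pauli on a qubit about to be measured in the $\X$-$\Y$ plane only deterministically transforms its effective angle ($\Z$ shifts it by $\pi$, $\X$ negates it, $\Y$ does both), so $\cptp E_{V_T}$ can be absorbed into a modified but still computation-independent trap schedule. What remains is therefore to show that neither $\cptp E_{V_C}$ nor the choice of $\cptp C$ and $\rho$ influences the reduced state on $V_T$ on which these fixed measurements act.

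For this decoupling step I would argue that, after the entangling operation $\cptp G = \prod_{(i,j)\in E}\CZ_{i,j}$, the reduced state on $V_T$ is independent of whatever fills $V \setminus V_T$. This is precisely the content that makes a trappified canvas well defined (Definition~\ref{def:trap-c}): the requirement that the $O_T$-outcomes be distributed according to $\pd{T}$ is only meaningful because the trap is isolated from the surrounding graph, the boundary of $V_T$ consisting of dummy qubits fixed by $\sigma$ in the $\ket 0$ state, across which $\CZ$ acts trivially and hence transmits no entanglement. Consequently the global state after $\cptp G$ factorises as a trap register (determined by $\sigma$, the trap edges and the boundary dummies) tensored with a computation register, so that tracing out $V_C$ yields a reduced state on $V_T$ that is independent of $\rho$ and on which $\cptp E_{V_C}$ acts as the identity. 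Combined with the fixed schedule of the previous paragraph, and with the independence of $\Pi_{I,C}$ from $T$ guaranteed by properness, this shows that $t$ is drawn from a distribution depending only on $\cptp E_{V_T}$ and the trap data, independently of $\cptp C$ and $\rho$; since $\cptp E$ was arbitrary, the lemma follows.

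The main obstacle I anticipate is making the decoupling fully rigorous at the level of generality of Definition~\ref{def:trap-c} rather than only for the explicit $\ket 0$-bounded traps of the examples: one must argue that the very property rendering ``$O_T$-outcomes $\sim \pd{T}$'' a meaningful attribute of the canvas already forces the trap register into a tensor product with the computation region after entangling, and that this product structure is stable under arbitrary Pauli errors $\cptp E_{V_C}$ on the computation side. Once this structural decoupling is in hand, the properness conditions on $f_C$ and on the independence of $\Pi_{I,C}$ from $T$ close the argument with only routine bookkeeping of the angle updates.
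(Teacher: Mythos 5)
Your proof is correct and uses the same core ingredients as the paper's: properness (Definition~\ref{def:prop-embed}) to exclude flow dependencies from the computation onto $V_T$, and the fact that the trap's partial pattern and input $\sigma$ are fixed independently of $\cptp C$ and of the input in $\Pi_{I,C}$ --- the paper's entire proof consists of exactly these two observations and nothing more. Where you differ is in trying to \emph{derive} the decoupling of the trap register, and your own diagnosis of that step is accurate: Definition~\ref{def:trap-c} does not guarantee a $\ket{0}$-dummy boundary, so the factorisation cannot be proven from the stated axioms in full generality. But the paper does not prove it either; it absorbs it into the definitions. Since $\pd T$ is part of the data of a trappified canvas, the requirement that the honest $O_T$ outcomes follow $\pd T$ is coherent only if that distribution is insensitive to how the unspecified region of the graph is completed, and the discussion following Definition~\ref{def:trap-pat} states explicitly that the only known ways to realise this (and properness) are $\ket 0$ dummies that cut the graph, or test rounds fully separated from computation rounds. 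So your ``main obstacle'' is really a constraint on what counts as a legitimate canvas and proper embedding, not a gap to be closed inside this lemma. One simplification worth noting: for the factor $\cptp E_{V_C}$ supported off the trap, no decoupling is needed at all --- once properness makes the trap measurement bases independent of computation outcomes, any operation supported on $V\setminus V_T$ is cancelled under the partial trace over those qubits and cannot change the marginal on $V_T$; decoupling is only needed so that the reduced trap state itself (and hence its statistics under your $\cptp E_{V_T}$-shifted bases) does not depend on $\cptp C$ and the input state.
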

\begin{proof}
Let $f_C$ be the flow of computation of the embedded computation $\cptp C$. Because the embedding is proper according to Definition \ref{def:embed-alg}, the dependencies induced by $f_C$ do not affect trap qubits $V_T$. Furthermore, the input of the trap is fixed along with its partial pattern, independently of the computation. Therefore, the distribution of the trap measurement outcomes is also independent of the embedded computation being performed on the rest of the graph as well as the input state of such computation.
\end{proof}
Indeed, for a completed trappified pattern $\TP$ obtained by embedding a computation $\cptp C$ onto a trappified canvas $T$, the action of $\cptp E$ on the vertices outside $V_T$ does not have an impact on the measurement outcomes of the vertices in $V_T$. This allows to define the trap outcome distribution under the influence of error $\cptp E$ solely as a function of $\cptp E$ and $\pd{T}$. Such modified distribution is denoted $\pd T_{\cptp E}$.

As an additional consequence, it is possible to define what it means for a given trappified canvas to detect and to be insensitive to Pauli errors:

\begin{definition}[Pauli Detection]
  \label{def:detect_tp}
  Let $T$ be a trappified canvas sampling from distribution $\pd T$.
  Let $\mathcal{E}$ be a subset of the Pauli group $\mathcal P_V$ over the graph vertex qubits.
  For $\epsilon > 0$, we say that $T$ \emph{$\epsilon$-detects} $\mathcal{E}$ if: 
  \begin{align}
    \forall \cptp E \in \mathcal{E},\ \Pr_{t\sim \pd T_{\cptp E}}[\tau(t) = 1] \geq 1-\epsilon.
  \end{align}
  
  We say that a trappified scheme $\sch P$ \emph{$\epsilon$-detects} $\mathcal{E}$ if:
  \begin{align}
    \forall \cptp E \in \mathcal{E}, \ \sum_{T\in \sch P} \Pr_{\substack{T \sim \pd P \\ t\sim \pd T_{\cptp E}}}[\tau(t) = 1, T] \geq 1-\epsilon.
  \end{align}
\end{definition}

\begin{definition}[Pauli Insensitivity]
  \label{def:insens_tp}
  Let $T$ be a trappified canvas sampling from distribution $\pd T$.
  Let $\mathcal{E}$ be a subset of $\mathcal P_V$.
  For $\delta > 0$, we say that $T$ is \emph{$\delta$-insensitive to} $\mathcal{E}$ if:   
  \begin{align}
     \forall \cptp E \in \mathcal{E}, \ \Pr_{t\sim \pd T_{\cptp E}}[\tau(t) = 0] \geq 1-\delta.
  \end{align}
  
  We say that a trappified scheme $\sch P$ is \emph{$\delta$-insensitive to} $\mathcal{E}$ if:
  \begin{align}
    \forall \cptp E \in \mathcal{E}, \ \sum_{T\in \sch P} \Pr_{\substack{T\sim \pd P \\ t\sim \pd T_{\cptp E}}}[\tau(t) = 0, T] \geq 1-\delta.
  \end{align}
\end{definition}

Above, the probability distribution stems both from the randomness of quantum measurements of the trap output qubits yielding the bit string $t$, and the potentially probabilistic nature of the decision function $\tau$. 
In the case of trappified schemes, the probability distribution for obtaining a given result for $\tau$ also depends on the choice of canvas $T \in \sch P$, sampled according to the probability distribution $\pd P$.

In the same spirit, there are deviations that nonetheless produce little effect on the computations embedded into trappified canvases and trappified schemes. When they occur, the computation is still almost correct.\footnote{Most generally, this can depend on the embedding used to insert the computation into the canvas but in most existing cases it is not. Hence why we phrase it as a property of the trappified canvas itself.}

\begin{definition}[Pauli Correctness]
\label{def:trap-cor}
  Let $(T, \sigma, \pd T, \tau)$ be a trappified canvas on graph $G$, $\preceq_G$ an order on the vertices of $G$ and $E_{\mathfrak C}$ an embedding algorithm. Let $\TP$ be the trappified pattern obtained by embedding a computation $\cptp C \in \mathfrak{C}$ on $T$ using $E_{\mathfrak C}$ and order $\preceq_G$. Let $I_C$ be the set of input vertices for the computation $\cptp C$ in $\TP$ and let $\ket{\psi}$ be a state on $\abs{I_C}+\abs{R}$ qubits, for sufficiently large auxiliary system $R$, such that $\Tr_R(\ket{\psi}) \in \Pi_{I,C}$, where $\Pi_{I,C}$ is the client's input subspace.
  Let $\mathcal E$ be a subset of  $\mathcal P_V$. For $\cptp E \in \mathcal{E}$, we define $\Cdev{\cptp E} = \Deco\circ\Tr_{O_C^c}\circ\cptp E\circ (\TP)$ to be the CPTP map resulting from applying the trappified pattern $\TP$ followed by the decoding algorithm $\Deco$ on the output of the computation.
  For $\nu \geq 0$, we say that $T$ is \emph{$\nu$-correct} on $\mathcal E$ if:\footnote{Equation \ref{eq:pauli-cor} corresponds to the diamond norm between the correct and deviated CPTP maps, but with a fixed input subspace and a fixed input for the trap qubits.}
  \begin{align}
  \label{eq:pauli-cor}
     \forall \cptp E \in \mathcal E, \ \forall \cptp C \in \mathfrak{C}, \ \max_\psi\| (\Cdev{\cptp E}  - \cptp C\otimes \Id_T)\otimes\Id_R [\dyad{\psi} \otimes \sigma]\|_{\Tr} \leq \nu.
  \end{align}

\noindent This is extended to a trappified scheme $\sch P$ by requiring the bound to hold on average over $T \in \sch P$:
\begin{equation}
\forall \cptp E \in \mathcal E, \ \forall \cptp C \in \mathfrak{C}, \max_\psi \left(\sum_{T\in \sch P} \Pr_{T\sim \pd P}[T] \| (\Cdev{\cptp E}  - \cptp C\otimes \Id_T)\otimes\Id_R [\dyad{\psi} \otimes \sigma]\|_{\Tr}  \right) \leq \nu.
\end{equation}
\end{definition}

In the following, the sets of deviations that have little effect on the result of the computation according to diamond distance will be called \emph{harmless}, while their complement are \emph{possibly harmful}.

We conclude this section with some remarks regarding basic properties of trappified schemes and a simple but powerful result allowing to construct trappification schemes from simpler ones.

\begin{remark}[Consequence of Embedding Failure]
Given a trappified canvas $T$ on a graph $G$, $\mathfrak{C}$, embedding algorithm $E_{\mathfrak{C}}$ and order $\preceq_G$ such that there exists $\cptp C \in \mathfrak{C}$ with $E_{\mathfrak{C}}(\cptp{C}, G, T, \preceq_G) = \bot$, then $T$ is $1$-correct for $\mathcal{P}_V$.
\end{remark}

\begin{remark}[Existence of Harmless Deviations]
\label{rem:dev-struct}
Why not just detect all possible deviations rather than count on the possibility that some have little impact on the actual computation? The reason is that these are plentiful in MBQC. Following our convention to view all measurements as computational basis measurements preceded by an appropriate rotation, any deviation $\cptp E$ that acts as $\Id$ and $\Z$ on measured qubits does not change the measurement outcomes and have no effect on the final outcome. Consequently, for classical output computations, only $\X$ and $\Y$ deviations need to be analysed. These are equivalent to flipping the measurement outcome, which propagate to the output via the flow corrections.
\end{remark}

\begin{remark}[A Trappified Canvas is a Trappified Scheme]
\label{rem:tr-can-sch}
	Any trappified canvas $T$ -- together with an embedding algorithm -- can be seen as a trappified scheme $\sch P = \{T\}$ and the trivial distribution. If the trappified pattern $\epsilon$-detects $\mathcal{E}_\epsilon$, is $\delta$-insensitive to $\mathcal{E}_\delta$ and $\nu$-correct on $\mathcal{E}_\nu$ for computations in $\mathfrak{C}$, so is the corresponding trappified scheme.
\end{remark}

\begin{remark}[Pure Traps]
\label{rem:pure_traps}
A trappified scheme $\sch P$ may only consist of trappified canvases that cover the whole graph $G = (V, E)$ if $V_T = V$ for all $T \in \sch P$. This corresponds to the special case where the trappified scheme cannot embed any computation and the embedding algorithm applied to a canvas $T \in \sch P$ always return $T$. The detection, insensitivity and correctness properties also apply to this special case, although $\nu$ is trivially equal to $0$ for $\mathfrak{C} = \emptyset$ and equal to $1$ for $\mathfrak{C} \neq \emptyset$.

\end{remark}

\begin{lemma}[Simple Composition of Trappified Schemes]
\label{lem:compos-sch}
	Let $(\sch P_i)_i$ be a sequence of trappified schemes over the same graph $G$ which are mutually blind compatible, with corresponding distributions $\pd{P}_i$, such that $\sch P_i$ $\epsilon_i$-detects $\mathcal{E}_{\epsilon, i}$, is $\delta_i$-insensitive to $\mathcal{E}_{\delta, i}$ and $\nu_i$-correct on $\mathcal{E}_{\nu, i}$ for computations in $\mathfrak{C}$.
	
	Let $(p_i)_i$ be a probability distribution and $\sch P = \bigcup_i \sch P_i$ be the trappified scheme with the following distribution $\pd{P}$:
	\begin{enumerate}
		\item Sample a trappified scheme $\sch P_j$ from $(\sch P_i)_i$ according to $(p_i)_i$;
		\item Sample a trappified canvas from $\sch P_j$ according to $\mathcal{P}_j$.
	\end{enumerate}
	The embedding function of $\sch P$ simply uses the embedding function of the sampled scheme $\sch P_j$.
	
	Let $\mathcal{E}_\epsilon \subseteq \bigcup_i \mathcal{E}_{\epsilon, i}$, $\mathcal{E}_\delta \subseteq \bigcup_i \mathcal{E}_{\delta, i}$ and $\mathcal{E}_\nu \subseteq \bigcup_i \mathcal{E}_{\nu, i}$.
	Then, $\sch P$ $\epsilon$-detects $\mathcal{E}_\epsilon$, is $\delta$-insensitive to $\mathcal{E}_\delta$ and $\nu$-correct on $\mathcal{E}_\nu$ for computations in $\mathfrak{C}$ with
	\begin{align}
		&1-\epsilon = \min_{\cptp E\in\mathcal{E}_\epsilon} \sum_{\substack{i \\ \cptp E\in\mathcal{E}_{\epsilon, i}}} p_i (1-\epsilon_i), \\
		&1-\delta = \min_{\cptp E\in\mathcal{E}_\delta} \sum_{\substack{i \\ \cptp E\in\mathcal{E}_{\delta, i}}} p_i (1-\delta_i), \mbox{ and }\\
		&1-\nu = \min_{\cptp E\in\mathcal{E}_\nu} \sum_{\substack{i \\ \cptp E\in\mathcal{E}_{\nu, i}}} p_i (1-\nu_i).
	\end{align}
\end{lemma}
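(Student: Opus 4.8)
The plan is to exploit the two-stage sampling defining $\pd P$: since a canvas of $\sch P$ is drawn by first picking an index $j$ with probability $p_j$ and then a canvas $T \in \sch P_j$ with probability $\pd P_j(T)$, every canvas-averaged quantity attached to $\sch P$ regroups, by originating scheme, into a $p_j$-weighted convex combination of the corresponding quantities for the component schemes $\sch P_j$ (this holds verbatim even if some canvas is shared between several $\sch P_i$, since the relevant statistics depend only on $T$). Concretely, for any fixed deviation $\cptp E$ one has $\Pr_{\sch P}[\tau=1\mid\cptp E] = \sum_j p_j\,\Pr_{\sch P_j}[\tau=1\mid\cptp E]$, and likewise for the accept probability and for the averaged trace distance of Definition~\ref{def:trap-cor}. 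Lemma~\ref{lem:indep-t} guarantees that each distribution $\pd T_{\cptp E}$, and hence each per-scheme quantity, is well defined independently of the embedded computation, so the three properties of $\sch P$ can be analysed deviation by deviation.

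For detection I would fix $\cptp E \in \mathcal E_\epsilon \subseteq \bigcup_i \mathcal E_{\epsilon,i}$ and lower-bound $\Pr_{\sch P}[\tau=1\mid\cptp E] = \sum_j p_j\,\Pr_{\sch P_j}[\tau=1\mid\cptp E]$ by discarding the (nonnegative) terms for which $\cptp E \notin \mathcal E_{\epsilon,j}$ and invoking the $\epsilon_j$-detection of $\mathcal E_{\epsilon,j}$ by $\sch P_j$ on the remaining terms, which gives $\Pr_{\sch P}[\tau=1\mid\cptp E] \geq \sum_{j:\,\cptp E\in\mathcal E_{\epsilon,j}} p_j(1-\epsilon_j)$. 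Minimising the right-hand side over $\cptp E \in \mathcal E_\epsilon$ yields exactly the claimed $1-\epsilon$. Insensitivity is the verbatim dual: the same decomposition of $\Pr_{\sch P}[\tau=0\mid\cptp E]$, the same discarding of nonnegative terms for $j$ with $\cptp E\notin\mathcal E_{\delta,j}$, the per-scheme $\delta_j$-insensitivity bound, and a final minimisation over $\mathcal E_\delta$.

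Correctness is the one step that needs care, because the averaged trace distance in Definition~\ref{def:trap-cor} carries a $\max_\psi$ that sits \emph{outside} the canvas average. Fixing $\cptp E \in \mathcal E_\nu$ and $\cptp C\in\mathfrak C$, I would write $g_j(\psi) = \sum_{T\in\sch P_j}\pd P_j(T)\,\|(\Cdev{\cptp E}-\cptp C\otimes\Id_T)\otimes\Id_R[\dyad\psi\otimes\sigma]\|_{\Tr}$, so that the combined quantity is $\max_\psi \sum_j p_j\, g_j(\psi)$. Using that the maximum of a sum is at most the sum of the maxima, $\max_\psi\sum_j p_j g_j(\psi) \leq \sum_j p_j \max_\psi g_j(\psi)$. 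The crucial observation is that $\max_\psi g_j(\psi)$ is \emph{exactly} the scheme-level correctness quantity of $\sch P_j$: pushing the maximum down to the scheme level, but not further to individual canvases, lets me apply $\nu_j$-correctness for those $j$ with $\cptp E\in\mathcal E_{\nu,j}$, while for the remaining $j$ I bound $\max_\psi g_j(\psi) \leq 1$ using that each trace distance is at most $1$ and $\sum_{T\in\sch P_j}\pd P_j(T)=1$. This produces $\sum_{j:\,\cptp E\in\mathcal E_{\nu,j}} p_j\nu_j + \sum_{j:\,\cptp E\notin\mathcal E_{\nu,j}} p_j$, which I would rewrite via $\sum_j p_j = 1$ as $1 - \sum_{j:\,\cptp E\in\mathcal E_{\nu,j}} p_j(1-\nu_j)$; maximising over $\cptp E\in\mathcal E_\nu$ recovers $\nu = 1 - \min_{\cptp E\in\mathcal E_\nu}\sum_{i:\,\cptp E\in\mathcal E_{\nu,i}} p_i(1-\nu_i)$.

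The main obstacle is precisely the placement of the $\max_\psi$ in the correctness argument: pushing it inside the per-canvas average would overshoot the scheme-level hypothesis, so one must stop the interchange at the scheme level, where the subadditivity $\max\sum\le\sum\max$ is exactly strong enough and no more. Everything else reduces to the routine observation that the component statistics combine convexly with weights $p_j$, after which discarding nonnegative terms (for the two lower bounds) or bounding the uncovered terms by $1$ (for the upper bound) reproduces the three stated formulas.
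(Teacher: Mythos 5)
The paper states Lemma~\ref{lem:compos-sch} without any proof, so there is nothing to compare against on the paper's side; judged on its own, your argument is correct and is exactly the routine argument the statement calls for: regroup the two-stage sampling into a $p_j$-weighted convex combination of per-scheme quantities, drop the nonnegative uncovered terms for the detection and insensitivity lower bounds, and bound the uncovered terms by $1$ for correctness. Your one point of genuine care --- stopping the interchange of $\max_\psi$ with the average at the scheme level, where Definition~\ref{def:trap-cor} places it, rather than pushing it down to individual canvases --- is handled correctly, and the resulting bounds match the three stated formulas exactly.
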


\section{Secure Verification from Trap Based Protocols}
\label{sec:verif}
In this section we use the properties defined above to derive various results which help break down the tasks of designing and proving the security of verification protocols into small and intuitive pieces. We start by giving a description of a general protocol using trappified schemes which encompasses all prepare-and-send MBQC-based protocol aiming to implement the SDQC functionality (Definition~\ref{def:bvqc_if}). We then relate the security of this protocol in the Abstract Cryptography framework to the $\epsilon$-detection, $\delta$-insensitivity and $\nu$-correctness of the trappified scheme used in the protocol. Consequently, we can from then on only focus on these three properties instead of looking at the full protocol, which already removes a lot of steps in future proofs.

Then we demonstrate how increasing the insensitivity set yields a protocol which is robust to situations where the server is honest-but-noisy with a contained noise parameter. These results further simplify the design of future protocols since many complex proofs can be avoided, allowing us to concentrate on designing more efficient trappified schemes and directly plugging them into the generic protocol and compiler to yield exponentially-secure and noise-robust protocols implementing SDQC.
We finally describe a consequence of these results in the case where the security error of the protocol is exponentially close to $0$. We show that this automatically implies that the computation must be protected against low-weight errors if we restrict the server's resources to be polynomial in the security parameter.

\subsection{General Verification Protocol from Trappified Schemes}
Given a computation $\cptp C$, it is possible to delegate its trappified execution in a blind way. To do so, the Client simply chooses one trappified canvas from a scheme at random, inserts into it the computation $\cptp C$ using an embedding algorithm and blindly delegates the execution of the resulting trappified pattern to the Server. The steps are formally described in Protocol~\ref{proto:dev_detect}.

\begin{protocol}[ht]
  \caption{Trappified Delegated Blind Computation}
  \label{proto:dev_detect}
  \begin{algorithmic}[0]
  
    \STATE \textbf{Public Information:} 
	\begin{itemize}
		\item $\mathfrak{C}$, a class of quantum computations;
		\item $G = (V, E)$, a graph with output set $O$;
		\item $\sch P$, a trappified scheme on graph $G$;
		\item $\preceq_G$, a partial order on $V$ compatible with $\sch P$.
	\end{itemize}
    
    \STATE \textbf{Client's Inputs:} A computation $\cptp C \in \mathfrak{C}$ and a quantum state $\rho_C$ compatible with $\cptp C$.
    
    \STATE \textbf{Protocol:}
    \begin{enumerate}
    	\item The Client samples a trappified canvas $T$ from the trappified scheme $\sch P$.
	    \item The Client runs the embedding algorithm $E_{\mathfrak{C}}$ from $\sch P$ on its computation $\cptp C$, the graph $G$ with output space $O$, the trappified canvas $T$, and the partial order $\preceq_G$. It obtains as output the trappified pattern $\TP$.
	    \item The Client and Server blindly execute the trappified pattern $\TP$ on input state $\rho_C$ using the UBQC Protocol~\ref{proto:ubqc}.
	    \item If the output set is non-empty (if there are quantum outputs), the Server returns the qubits in positions $O$ to the Client.
	    \item The Client measures the qubits in positions $O \cap V_T$ in the $\X$ basis. It obtains the trap sample $t$.
	    \item The Client checks the trap by computing $\tau(t)$:
	    \begin{itemize}
	    	\item If $\tau(t) = 1$, it rejects and outputs $(\bot, \rej)$.
	    	\item Otherwise, the Client accepts the computation. It applies the decoding algorithm $\Deco$ to the output of Protocol~\ref{proto:ubqc} on vertices $O \setminus V_T$ and set the result as its output along with $\acc$.
		\end{itemize}	    
    \end{enumerate}
    
  \end{algorithmic}
\end{protocol}

Note that this protocol offers blindness not only at the level of the chosen trappified pattern, but also at the level of the trappified scheme itself.
More precisely, by delegating the chosen pattern, the client reveals at most the graph of the pattern, a partial order of its vertices and the location of the output qubits of the pattern, if there are any, comprising computation and trap outputs.
However, trappified patterns of a trappified scheme are blind-compatible, that is they share the same graph and same set of output qubits.
Therefore, the above protocol also hides which trappified pattern has been executed among all possible ones, hence concealing the location of traps.

We now formalise the following intuitive link between deviation detection and verification in the context of delegated computations.
On one hand, if a delegated computation protocol is correct,\footnote{Here we use correctness in a cryptographic setting, meaning that all parties execute as specified their part of the protocol.} not detecting any deviation by the server from its prescribed sequence of operations should be enough to guarantee that the final result is correct.
Conversely, detecting that some operations have not been performed as specified should be enough for the client to reject potentially incorrect results.
Combining those two cases should therefore yield a verified delegated computation.

To this end, we show how the deviation detection capability of trappified schemes is used to perform verification.
This is done by proving that Protocol \ref{proto:dev_detect} above constructs the Secure Delegated Quantum Computation Resource \ref{def:bvqc_if} in the Abstract Cryptography framework. 
This resource allows a Client to input a computation and a quantum state and to either receive the correct outcome or an abort state depending on the Server's choice, whereas the Server only learns at most some well defined information contained in a leak $l_{\rho}$. 
More precisely, we show that any distinguisher has a bounded distinguishing advantage between the real and ideal scenarios so long as the trappified scheme $\sch P$ detects a large fraction of deviations that are possibly harmful to the computation.

\begin{theorem}[Detection Implies Verifiability]
  \label{thm:verif}
  Let $\sch P$ be a trappified scheme with a proper embedding. Let $\mathcal E_\epsilon$ and $\mathcal E_\nu$ be two sets of Pauli deviations such that:
\begin{itemize}
\item $\mathcal P_V \setminus \mathcal E_\epsilon \subset \mathcal E_\nu$;
\item $\Id \in \mathcal{E}_\nu$.
\end{itemize}
If $\sch P$:
\begin{itemize}
\item $\epsilon$-detects $\mathcal E_\epsilon$;
\item is $\delta$-insensitive to at least $\{\Id\}$;
\item is $\nu$-correct on $\mathcal E_\nu$;
\end{itemize}
for $\epsilon, \delta, \nu > 0$, then the Trappified Delegated Blind Computation Protocol~\ref{proto:dev_detect} for computing CPTP maps $\cptp C$ in $\mathfrak C$ using $\sch P$ is $\delta + \nu$-correct and $\max(\hat{\epsilon}, \nu)$-secure in the Abstract Cryptography framework, for
\begin{equation}
\hat{\epsilon} = \epsilon \times \max_{\substack{\cptp E \in \mathcal E_\epsilon\\ \cptp C \in \mathfrak{C} \\ \psi_C}}\left(\sum_{T\in \sch P} \Pr_{T \sim \pd P}[T]\|(\Cdev{\cptp E}  - \cptp C\otimes \Id_T)\otimes\Id_D [\dyad{\psi_C} \otimes \sigma]\|_{\Tr}\right) \leq \epsilon,
\end{equation}
where $\ket{\psi_C}$ is a purification of the client's input using register $D$ and $\Cdev{\cptp E} = \Deco\circ\Tr_{O_C^c}\circ\cptp E\circ (\TP)$ corresponds to running the trappified canvas $\TP$ with a deviation $\cptp E$ and decoding the output of the computation using the decoding algorithm $\Deco$ provided by the embedding algorithm.

Overall, Protocol~\ref{proto:dev_detect} $\max(\hat{\epsilon}, \delta + \nu)$-constructs the Secure Delegated Quantum Computation Resource~\ref{def:bvqc_if} where the leak is defined as $l_{\rho} = (\mathfrak{C}, G, \sch P, \preceq_G)$.
\end{theorem}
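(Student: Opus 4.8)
The plan is to establish correctness and security separately, in both cases leaning on the composability of Abstract Cryptography so as to avoid reasoning about the UBQC layer directly. By Theorem~\ref{thm:sec-ubqc} the UBQC Protocol~\ref{proto:ubqc} perfectly constructs the Blind Delegated Quantum Computation Resource~\ref{def:bqc_if}, so the General Composition Theorem~\ref{thm:ac-compos} lets me replace the internal call to UBQC inside Protocol~\ref{proto:dev_detect} by an invocation of Resource~\ref{def:bqc_if}; it then suffices to show that the remaining classical wrapper (sample $T$, embed $\cptp C$, check $\tau$, decode with $\Deco$) built on top of Resource~\ref{def:bqc_if} constructs Resource~\ref{def:bvqc_if}. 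On this idealised layer I would invoke the reduction of Section~\ref{subsec:detect-prop}: the Pauli Twirling Lemma~\ref{lem:twirl}, together with the one-time-padding of angles, outcomes and inputs, forces the Server's deviation to be, without loss of generality, a convex combination $\sum_{\cptp E}\Pr[\cptp E]\,\cptp E$ of Pauli operators $\cptp E \in \mathcal P_V$ applied after the honest unitary $\cptp U_P$. By convexity of the trace distance it is then enough to bound the distinguishing advantage for each fixed $\cptp E$ and average.

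For correctness I set $\cptp E = \Id$ (honest Server) and compare $\pi\mathcal R$ with Resource~\ref{def:bvqc_if}, which always returns $\cptp C[\rho_C]\otimes\dyad{\acc}$. Since $\sch P$ is $\delta$-insensitive to $\{\Id\}$ (Definition~\ref{def:insens_tp}), the honest run is wrongly rejected with probability at most $\delta$, contributing at most $\delta$ to the distance; and since $\Id\in\mathcal E_\nu$ with $\sch P$ being $\nu$-correct on $\mathcal E_\nu$ (Definition~\ref{def:trap-cor}), the accepted output $\Cdev{\Id}$ is $\nu$-close to $\cptp C[\rho_C]$. The triangle inequality then yields the claimed $\delta+\nu$ correctness error.

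The security argument is where the main work lies: I must exhibit a single simulator attached to the Server interface of Resource~\ref{def:bvqc_if} that reproduces the real interaction without ever seeing $\cptp C$. On receiving the leak $l_\rho = (\mathfrak C, G, \sch P, \preceq_G)$, the simulator would sample $T\sim\sch P$, emulate the interface of Resource~\ref{def:bqc_if} towards the malicious Server so as to extract its Pauli deviation $\cptp E$, and then compute the trap verdict by sampling $t\sim\pd T_{\cptp E}$ and evaluating $\tau(t)$, forwarding $c=\tau(t)$ to the resource. The crucial point -- and the reason the simulator can function blindly -- is Lemma~\ref{lem:indep-t}: because the embedding is proper, the induced trap distribution $\pd T_{\cptp E}$ depends only on $\cptp E$ and $\pd T$, not on $\cptp C$ nor on the client's input, so the resource aborts with exactly the real protocol's probability, independently of the computation.

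It remains to bound, for each $\cptp E$, the distance between the real accept/reject-tagged output and the ideal one. The reject branches coincide, since both emit $(\bot,\rej)$ with identical probability by the blindness of the verdict. On the accept branch I split on the hypotheses: if $\cptp E\in\mathcal E_\epsilon$, Definition~\ref{def:detect_tp} caps the acceptance probability at $\epsilon$, and weighting the residual discrepancy $\Cdev{\cptp E}-\cptp C\otimes\Id_T$ by this probability reproduces exactly the quantity $\hat\epsilon$; if instead $\cptp E\in\mathcal P_V\setminus\mathcal E_\epsilon\subseteq\mathcal E_\nu$, then $\nu$-correctness bounds the same discrepancy by $\nu$ regardless of acceptance. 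As every $\cptp E$ lies in at least one of the two sets, the per-$\cptp E$ advantage is at most $\max(\hat\epsilon,\nu)$, and averaging over the convex combination preserves this bound; combined with the correctness error this gives the overall $\max(\hat\epsilon,\delta+\nu)$-construction. The delicate step I expect to demand the most care is making the real and simulated states genuinely comparable on the accept branch: the trap outcomes are in general correlated with the computation output, so I would purify the client's input via register $D$ and argue that tracing out the non-computation outputs $O_C^c$ (as built into the definition of $\Cdev{\cptp E}$) decouples the verdict from the residual error in precisely the way the averaged trace-norm bound requires.
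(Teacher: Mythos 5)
Your overall architecture matches the paper's proof: correctness from $\delta$-insensitivity to $\Id$ plus $\nu$-correctness on $\Id$, and security via a simulator that computes the accept/reject verdict without knowing $\cptp C$ (justified by Lemma~\ref{lem:indep-t} and the proper embedding), followed by the case split $\cptp E \in \mathcal E_\epsilon$ versus $\cptp E \in \mathcal P_V \setminus \mathcal E_\epsilon \subseteq \mathcal E_\nu$ giving $\max(\hat{\epsilon},\nu)$. The correctness argument and the final bounding step are sound as written.

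The genuine gap is the order in which you apply your two reductions. You propose to first use Theorem~\ref{thm:sec-ubqc} and composability to replace the UBQC sub-protocol by the Blind Delegated Quantum Computation Resource~\ref{def:bqc_if}, and only then to invoke ``the Pauli Twirling Lemma~\ref{lem:twirl} together with the one-time-padding of angles, outcomes and inputs'' on that idealised layer. But once the substitution is made there is no one-time-pad left to twirl with: the secret parameters $\bm{\theta}, \bm{a}, \bm{r}$ are artifacts of the concrete Protocol~\ref{proto:ubqc}, whereas Resource~\ref{def:bqc_if} enforces blindness structurally, and its malicious interface accepts an \emph{arbitrary} CPTP map $\cptp F$ acting on the client's data together with the server's ancilla. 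Such maps are not, without further argument, convex combinations of Pauli operators composed with the honest pattern --- nothing at that level forbids, say, a map that first weakly measures the received registers to locate dummy and trap qubits and then corrupts only non-trap positions --- so the reduction you invoke is unavailable exactly where you invoke it, and with it the assumption that your simulator can ``extract a Pauli deviation'' from the ideal interface. The paper performs the steps in the opposite order: the Pauli-mixture form of the deviation is derived \emph{inside the concrete protocol}, where the client's secret randomness exists and twirling is valid (this is the bulk of the paper's proof: writing out the deviated protocol state, summing over $\bm{\theta}, \bm{a}, \bm{r}$, and applying Lemma~\ref{lem:twirl}), and the substitution of UBQC by Resource~\ref{def:bqc_if} is made only afterwards, for the limited purpose of showing that the already-twirled deviation distribution is identical in the real and simulated executions, i.e.~independent of $\cptp C$ and of the sampled canvas. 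Reorganised that way, your argument goes through; as stated, the key step fails. (Your closing worry about trap--computation correlations, by contrast, is a non-issue: it is precisely what Lemma~\ref{lem:indep-t} disposes of.)
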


We give the finer characterisation using $\hat{\epsilon}$ instead of bounding it with $\epsilon$ since it may be useful in some scenarii to more finely bound the trace distance for these deviations instead of bluntly bounding it by $1$.

Figure~\ref{fig:deviation_sets} depicts the configuration of the various sets of Pauli deviations used in Theorems~\ref{thm:verif} and \ref{thm:robust_verif}.
\begin{figure}[ht]
\centering
\begin{picture}(0,0)\includegraphics{./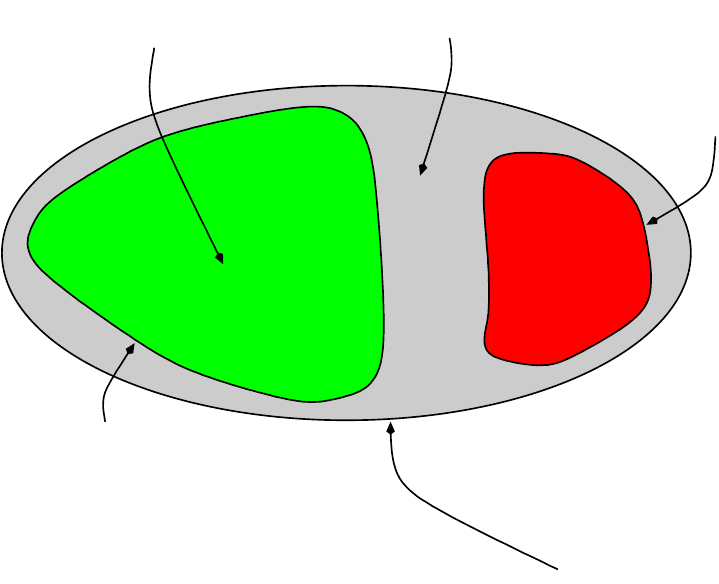}\end{picture}\setlength{\unitlength}{4144sp}\begingroup\makeatletter\ifx\SetFigFont\undefined \gdef\SetFigFont#1#2#3#4#5{\reset@font\fontsize{#1}{#2pt}\fontfamily{#3}\fontseries{#4}\fontshape{#5}\selectfont}\fi\endgroup \begin{picture}(3282,2659)(16,-1430)
\put(2611,-1366){\makebox(0,0)[lb]{\smash{{\SetFigFont{9}{10.8}{\rmdefault}{\mddefault}{\updefault}{\color[rgb]{0,0,0}$\mathcal{P}_V$}}}}}
\put(631,1064){\makebox(0,0)[lb]{\smash{{\SetFigFont{9}{10.8}{\rmdefault}{\mddefault}{\updefault}{\color[rgb]{0,0,0}$\Id$}}}}}
\put(2881,659){\makebox(0,0)[lb]{\smash{{\SetFigFont{9}{10.8}{\rmdefault}{\mddefault}{\updefault}{\color[rgb]{0,0,0}$\mathcal{E}_\epsilon$: $\epsilon$-detected by $\sch P$}}}}}
\put( 46,-871){\makebox(0,0)[lb]{\smash{{\SetFigFont{9}{10.8}{\rmdefault}{\mddefault}{\updefault}{\color[rgb]{0,0,0}$\mathcal{E}_\delta$: $\sch P$ is $\delta$-insensitive}}}}}
\put(1756,1109){\makebox(0,0)[lb]{\smash{{\SetFigFont{9}{10.8}{\rmdefault}{\mddefault}{\updefault}{\color[rgb]{0,0,0}$\mathcal{E}_\nu \supset \mathcal P_V \setminus \mathcal E_\epsilon$: $\sch P$ is $\nu$-correct}}}}}
\end{picture} \caption{Setup of the Pauli deviations sets used in Theorems~\ref{thm:verif} and \ref{thm:robust_verif}.
Security requires that $\mathcal E_\epsilon$ is $\epsilon$-detected by $\sch P$.
Correctness requires that $\sch P$ is $\nu$-correct on at least the complement of $\mathcal E_\epsilon$.
Robustness comes from $\sch P$ being $\delta$-insensitive to $\mathcal E_\delta$, on which it must also be $\nu$-correct.}
\label{fig:deviation_sets}
\end{figure}

\begin{proof}[\bf Proof of Correctness]
We start by analysing the correctness of Protocol \ref{proto:dev_detect}, i.e.~ the distance between the real and ideal input/output relation if both parties follow their prescribed operations. Let $\cptp C \in \mathfrak C$ be the Client's desired computation. Let $\rho_C$ be the Client's input state and $\ket{\psi_C}$ a purification of $\rho_C$ using the distinguisher's register $D$. Let $\TP$ be a trappified pattern obtained from sampling a trappified canvas $T$ from the trappified scheme $\sch P$ using probability distribution $\pd P$ and embedding computation $\cptp C$ into it using the embedding algorithm.

The output in the ideal case can be written as $\cptp C \otimes \Id_D[\dyad{\psi_C}]$, while in the real case the state after the execution of the trappified pattern is $\Tr_{O_C^c}(\TP[\rho_C\otimes\sigma])\otimes\dyad{\tau(t)}$, where the trace is over all registers not containing the output of the Client's computation.\footnote{We use here the notation $P[\rho]$ to mean the honest application of the trappified pattern $P$ to the input state $\rho$. Also we consider here that the decision function $\tau$ outputs either $\acc$ for acceptance or $\rej$ for rejection instead of a binary value.} If we denote $\Cdev{\Id} = \Deco\circ\Tr_{O_C^c}\circ \TP$ and take the average over the choice of trappified canvas by the Client which is unknown to the distinguisher, then the distinguishing advantage is given by
\begin{align}
	\epsilon_{\mathit{cor}} = \| \cptp C \otimes \Id_D[\dyad{\psi_C}]\otimes\dyad{\acc} - \sum_{T\in \sch P} \Pr_{T\sim \pd P}[T] \Cdev{\Id} \otimes \Id_D[\dyad{\psi_C}\otimes\sigma]\otimes\dyad{\tau(t)}\|_{\Tr},
\end{align}

In the honest case, the concrete and ideal settings will output different states only in the case where the protocol wrongly rejects the computation or outputs a wrong result despite the absence of errors.

Since the trappified scheme is $\delta$-insensitive to $\Id$, the probability that the decision function outputs $\rej$ is bounded by $\delta$ as per Definition \ref{def:insens_tp}. Furthermore, using Lemma \ref{lem:indep-t}, the output of the test is independent of the computation being performed. Combining these two properties yields
\begin{align}
\label{eq:cor}
\begin{split}
	\epsilon_{\mathit{cor}} &\leq \| \cptp C \otimes \Id_D[\dyad{\psi_C}]\otimes\dyad{\acc} - \\
	&\quad \sum_{T\in \sch P} \Pr_{T\sim \pd P}[T]\Cdev{\Id} \otimes \Id_D[\dyad{\psi_C}\otimes\sigma]\otimes(\delta\dyad{\rej} + (1-\delta)\dyad{\acc})\|_{\Tr}.
\end{split}
\end{align}
Using the convexity of the trace distance, we get
\begin{align}
\epsilon_{\mathit{cor}}	\leq (1-\delta)\sum_{T\in \sch P} \Pr_{T\sim \pd P}[T]\| \cptp C \otimes \Id_D[\dyad{\psi_C}] - \Cdev{\Id} \otimes \Id_D[\dyad{\psi_C}\otimes\sigma]\|_{\Tr} + \delta.
\end{align}

Finally, the trappified scheme is $\nu$-correct on $\Id \in \mathcal E_\nu$. Therefore we have that
\begin{equation}
\sum_{T\in \sch P} \Pr_{T\sim \pd P}[T]\| \cptp C \otimes \Id_D[\dyad{\psi_C}] - \Cdev{\Id} \otimes \Id_D[\dyad{\psi_C}\otimes\sigma]\|_{\Tr} \leq \nu,
\end{equation}
meaning that $\epsilon_{\mathit{cor}} \leq (1-\delta)\nu + \delta$. Hence, the protocol is $(\delta+\nu)$-correct.
\end{proof}

\begin{proof}[\bf Proof of Security against Malicious Server]
To prove the security of the protocol, as per Definition \ref{const-sec-def}, we first describe our choice of Simulator that has access to the Server's interface of the Secure Delegated Quantum Computation Resource. Then, we analyse the interaction involving either the Simulator or the real honest Client and show that they are indistinguishable. For this latter part, we will decompose the analysis into the following steps. First, we describe the state sent by the Client or Simulator to the Server. Second, we derive the state after the interaction between the Client or Simulator with the Server, encompassing all possible deviations from the honest protocol. Third, we analyse the resulting state as seen from the Distinguisher's point of view, i.e.~knowing the chosen input, the chosen computation and the Server-side deviation, but ignorant of the secret parameters set by the Client's protocol or the Simulator. Fourth, using the composable security of UBQC, we show that no information about the real or ideal setup is leaked on the Server's side to help the Distinguisher tell them apart. Fifth, we analyse the remaining output and abort probabilities. Sixth, we bound the distinguishing probability by analysing all possible deviation choices, which then concludes the proof.

\textit{Defining the Server's Simulator.}
To do so, we use again the fact that when the protocol is run and a deviation is applied by the Server, the probability of accepting or rejecting the computation is dependent only on the deviation and not on the computation performed on the non-trap part of the pattern. This is a crucial property as this allows to simulate the behaviour of the concrete protocol even when the computation performed is unknown. More precisely, we define the Simulator in the following way:
\begin{simulator}[h!]
\caption{}
  \begin{enumerate}
  \item The Simulator request a leak from the Secure Delegated Quantum Computation Resource and receives in return $(\mathfrak{C}, G, \sch P, \preceq_G)$.
  \item It chooses at random any computation $\cptp C_\emptyset \in \mathfrak C$ and an input which is compatible with $\cptp C$.
  \item It performs the same tasks as those described by the Client's side of the Trappified Delegated Blind Computation Protocol~\ref{proto:dev_detect}.
  \item Whenever $\tau$ accepts, the Simulator sends $c=0$ to the Secure Delegated Quantum Computation Resource, indicating that the honest Client should receive its output. If it rejects, the Simulator sends $c=1$ Secure Delegated Quantum Computation Resource, indicating an abort.
  \end{enumerate}
\end{simulator}

\textit{State Sent to the Server.}
Here we only describe the state representing the interaction of the Client or of the Simulator with the Server. Since the Simulator defined above performs the same tasks as the Client when the Protocol is run, we only need to derive the expression for the Client's interaction. The expression for the Simulator is obtained by replacing the Client's secrets with that of the Simulator, and the computation $\cptp C$ by $\cptp C_\emptyset$. These steps are similar to the ones in \cite[Proof of Theorem 3]{FKD18:reducing} and work as can be seen here for the basic UBQC protocol and any protocol based on it.

Let $\cptp C$ and $\rho_C$ be the Client's computation and input, let $T$ and $\sigma$ be the trappified canvas chosen from the trappified scheme $\sch P$ and the associated input. Finally, let $\TP$ be the trappified pattern resulting from embedding $\cptp C$ into $T$, with base angles $\{\phi(i)\}_{i \in O^c}$. 

We start by expressing the state in the simulation and the real protocol. The Server first receives quantum states which are encrypted with $\Z_i(\theta(i))\X_i^{a(i)}$ for all vertices $i \in V$. This is explicitly the case for the inputs to the computation and trap patterns, but also for the other qubits of the graph, since we have that $\ket{+_\theta} = \Z(\theta)\ket{+} = \Z(\theta)\X^a\ket{+}$.\footnote{In the real protocol, the value $a(i)$ is always $0$ for $i \in I^c$. This is perfectly indistinguishable since the distribution of the values of $\delta$ are identical regardless of this choice of parameter for non-input qubits and correctness is unaffected.} Recall that $a_N(i) = \sum_{j \in N_G(i)} a(j)$ and the outputs qubits are only Quantum One-Time-Padded, i.e.~$\theta(i) = (r(i) + a_N(v)) \pi$ for $i \in O$. Then, omitting the Client's classical registers containing the secret values $\bm{\theta},\bm{a}, \bm{r}$, the state from the point of view of the Client is noted $\rho_{in, \cptp C, \bm{b} + \bm{r}}^{\bm{\theta},\bm{a}, \bm{r}, T}$, defined as
\begin{align}
\rho_{in, \cptp C, \bm{b} + \bm{r}}^{\bm{\theta},\bm{a}, \bm{r}, T} = \left(\prod_{i \in V} \Z_i(\theta(i))\X^{a(i)}_i\right) [\rho_C \otimes \sigma \otimes \dyad{+}^{\otimes |V| - |I|}]\bigotimes_{i \in O^c} \dyad{\delta_{\bm{b} + \bm{r}}(i)},
\end{align}
where $\bm{b}$ corresponds to the perceived branch of computation based on the outcomes returned by the Server to the Client. The values $\delta_{\bm{b} + \bm{r}}(i) = (-1)^{a(i)}\phi'_{\bm{b} + \bm{r}}(i) + \theta(i) + (r(i) + a_N(i)) \pi$ are each encoded as computational basis states on three qubits from a register $R$ with $3n$ qubits. The angle $\phi'_{\bm{b} + \bm{r}}(i)$ is obtained through the formula for $\phi'(i)$ from the UBQC Protocol \ref{proto:ubqc}, Equation \ref{eq:updt-ubqc} and includes the corrections stemming from $\bm{b}$ and $\bm{r}$, while $a_N(i)$ compensates the effect of the $\X$ encryption from a qubit on its neighbours. While this seems that the Client is sending the values of $\delta_{\bm{b} + \bm{r}}(i)$ at the beginning breaks the causal structure of the protocol, these states will indeed not be affected by any operations before they can actually be correctly computed by the Client. This will be made formal below. Finally, note that for simplicity, the qubits in the state above are not grouped in the order in which the Client sends.

\textit{State after the Interaction with the Server.}
To describe all possible deviations, we consider a purification $\dyad{\psi_S}$ of $\rho_S$, the Server's private work state. Let $\cptp F^{\mathtt{in}}$ be a unitary operator such that $\ket{\psi_S} = \cptp F^{\mathtt{in}} \ket{0}^{\otimes w}$ for the appropriate work register size $w$. Then the operations which the Server applies before any measurement can be written as unitaries acting on all qubits which have not yet been measured and the available values of $\delta_{\bm{b} + \bm{r}}(i)$. These can be then decomposed into the correct unitary operation followed by a unitary attack of the Server's choice. The Server receives all qubits, applies the entanglement operation corresponding to the Client's desired graph, then a unitary attack $\cptp F^{\mathtt{G}}$, then the correct rotation on the first measured qubits, followed by another attack $\cptp F^{\mathtt{1}}$.\footnote{We use the upper index to avoid confusion with operators such as $\Z_i$, which indicates a $\Z$ applied on qubit $i$ in the graph and  identity on all other qubits} These last two steps -- rotation of a qubit in the graph followed by a deviation on all qubits -- are repeated once per measured qubit.

Recall that $\cptp G = \prod_{(i, j) \in E} \CZ_{i, j}$ is the unitary operation which entangles the qubits according to the graph $G = (V, E)$. The $\Z$-axis rotations required for performing the measurement in the basis defined by $\delta_{\bm{b} + \bm{r}}$ are represented by unitaries $\CR$, controlled rotations around the $\Z$-axis with the control being performed by the registers containing the corresponding value of $\delta_{\bm{b} + \bm{r}}$. Figure~\ref{fig:cr_gate} shows one possible implementation of this controlled operation.

\begin{figure}[ht]
	\centering
	\begin{subfigure}[t]{0.45\textwidth}
		\centering
$
\Qcircuit @C=1em @R=.7em @!R {
\lstick{\rho} & \qw & \gate{\Z(\cdot)} & \qw & \qw \\
\lstick{\ket{\delta}} & {/} \qw & \ctrl{-1} & {/} \qw & \qw
}
$
	\end{subfigure}
	\quad
	\begin{subfigure}[t]{0.45\textwidth}
		\centering
$\Qcircuit @C=1em @R=.7em @!R {
\lstick{\rho} & \qw & \gate{\Z} & \gate{\Z(\frac{\pi}{2})} & \gate{\Z(\frac{\pi}{4})} & \qw \\
\lstick{\ket{\delta^1}} & \qw & \ctrl{-1} & \qw & \qw & \qw \\
\lstick{\ket{\delta^2}} & \qw & \qw & \ctrl{-2} & \qw & \qw \\
\lstick{\ket{\delta^3}} & \qw & \qw & \qw & \ctrl{-3} & \qw
}$
	\end{subfigure}
  \caption{Controlled rotation used to unitarise Protocol~\ref{proto:ubqc}. The right hand side is a possible implementation of the rotation on the left, where $\delta^j$ are the bits composing the value $\delta$. The $3$ controlling qubits are sent by the client to the server in the computational basis as they correspond to classical values.}
	\label{fig:cr_gate}
\end{figure}
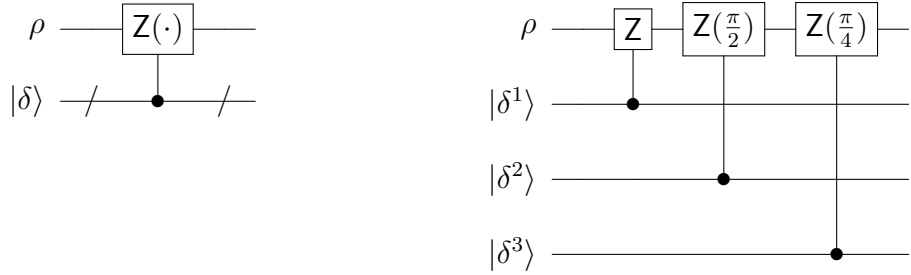

The quantum state representing the interaction between the Client and Server implementing the protocol just before the measurements are performed, noted $\rho_{pre, \cptp C, \bm{b} + \bm{r}}^{\bm{\theta},\bm{a}, \bm{r}, T}$, is thus
\begin{align}
\rho_{pre, \cptp C, \bm{b} + \bm{r}}^{\bm{\theta},\bm{a}, \bm{r}, T} = \cptp F^{\mathtt{n}}\circ\CR_n^\dagger\circ\ldots\circ\cptp F^{\mathtt{1}}\circ\CR_1^\dagger \circ\cptp F^{\mathtt{G}}\circ(\cptp G\otimes \cptp F^{\mathtt{in}})[\rho_{in, \cptp C, \bm{b} + \bm{r}}^{\bm{\theta},\bm{a}, \bm{r}, T}\otimes\dyad{0}^{\otimes w}],
\end{align}
with $|O^c| = n$.
We can move all deviations through the controlled rotations and regroup them as $\cptp F'$.\footnote{Formally, we have $\cptp F' = \cptp F^{\mathtt{n}}\circ\CR_n^\dagger\circ\ldots\circ\cptp F^{\mathtt{1}}\circ\CR_1^\dagger \circ\cptp F^{\mathtt{G}}\circ (\Id_V \otimes \cptp F^{\mathtt{in}}) \circ \prod_{i \in O^c} \CR_i$.} Then, it is possible to replace the (classically) controlled rotations corresponding to the honest execution of the protocol by ordinary rotations $\Z(\delta_{\bm{b} + \bm{r}}(i))^\dagger$, thus yielding\footnote{If these operations were replaced before, the deviations would pick up a dependency on $\delta_{\bm{b} + \bm{r}}(i)$ during the commutation.}
\begin{align}
\rho_{pre, \cptp C, \bm{b} + \bm{r}}^{\bm{\theta},\bm{a}, \bm{r}, T} = \cptp F'\circ \left(\prod_{i \in O^c}\Z_i(\delta_{\bm{b} + \bm{r}}(i))^\dagger\right) \circ (\cptp G \otimes \Id_w)[\rho_{in, \cptp C, \bm{b} + \bm{r}}^{\bm{\theta},\bm{a}, \bm{r}, T}\otimes\dyad{0}^{\otimes w}],
\end{align}
where $\Id_w$ is the identity on the Server's auxiliary register.
We now apply the decryption operations performed by the Client on the output layer qubits after the Server has returned these qubits at the end of the protocol. The resulting state, noted $\rho_{dec, \cptp C, \bm{b} + \bm{r}}^{\bm{\theta},\bm{a}, \bm{r}, T}$, can be written as
\begin{align}
\rho_{dec, \cptp C, \bm{b} + \bm{r}}^{\bm{\theta},\bm{a}, \bm{r}, T} = \left(\prod_{i \in O}\Z_i^{s_Z(i) + r(i)}\X_i^{s_X(i) + a(i)}\right) [\rho_{pre, \cptp C, \bm{b} + \bm{r}}^{\bm{\theta},\bm{a}, \bm{r}, T}],
\end{align}
where $s_X(i)$ and $s_Z(i)$ stem from the flow of the trappified pattern $\TP$. To finish, we enforce that the computation branch is effectively $\bm{b}$ by projecting all non-output qubits $i \in O^c$ using the projection operators $\Z_i^{b(i) + r(i)}\dyad{+}\Z_i^{b(i)}$.\footnote{These qubits can be assumed to be measured without loss of generality since (i) the Server needs to produce the values $\bm{b}$ using its internal state and the values received from the Client and (ii) the operation $\cptp F'$ is fully general, meaning that the Server can use it to reorder the qubits before the measurement if it so desires.}\textsuperscript{,}\footnote{The difference in coefficients takes into account the corrections which the Client applies to the outputs of the measurements to account for $r(i)$. The left-hand side operator represents the view of the Client while the right-hand side is the Server's view.} Since $\ket{+}$ is a $+1$ eigenstate of $\X$, this is equivalent to projecting with $\Z_i^{b(i) + r(i)}\dyad{+}_i\X_i^{a(i)}\Z^{b(i)}$. We define $\cptp P_{\bm{b}}$ to be the projection operator $\prod_{i \in O^c}\Z^{b(i)}_i[\dyad{+}^{\otimes \abs{O^c}}]$, which is applied on the non-output qubits of the graph. The final state, noted $\rho_{out, \cptp C, \bm{b} + \bm{r}}^{\bm{\theta},\bm{a}, \bm{r}, T}$, is therefore
\begin{align}
\rho_{out, \cptp C, \bm{b} + \bm{r}}^{\bm{\theta},\bm{a}, \bm{r}, T} = (\cptp P_{\bm{b} + \bm{r}}\otimes \Id_O\otimes \Id_w) \circ \left(\prod_{i \in O^c} \X^{a(i)}_i\Z^{r(i)}_i\right)[\rho_{dec, \cptp C, \bm{b} + \bm{r}}^{\bm{\theta},\bm{a}, \bm{r}, T}].
\end{align}
We then apply the change of variable $b'(i) = b(i) + r(i)$ and relabel $b'(i)$ into $b(i)$. This has the effect of removing the influence of $r(i)$ in the corrected measurement angles, transforming $\phi'_{\bm{b} +\bm{r}}(i)$ into $\phi'_{\bm{b}}(i)$:\footnote{This value uses the formula for $\phi'(i)$ from the Delegated MBQC Protocol \ref{proto:mbqc}, Equation \ref{eq:updt-mbqc}.}
\begin{align}
\rho_{out, \cptp C, \bm{b}}^{\bm{\theta},\bm{a}, \bm{r}, T} &= (\cptp P_{\bm{b}}\otimes \Id_O\otimes\Id_w) \circ \left(\prod_{i \in O^c} \X^{a(i)}_i\Z^{r(i)}_i\right)[\rho_{dec, \cptp C, \bm{b}}^{\bm{\theta},\bm{a}, \bm{r}, T}]\\
&= (\cptp P_{\bm{b}}\otimes \Id_O\otimes\Id_w)\circ \tilde{\cptp U}_{\TP} \left[\rho_C \otimes \sigma \otimes \dyad{+}^{\otimes |V| - |I|}\bigotimes_{i \in O^c} \dyad{\delta_{\bm{b}}(i)} \otimes\dyad{0}^{\otimes w}\right].
\end{align}
where we defined $\tilde{\cptp U}_{\TP}$ as the unitary part of the deviated execution of the pattern $\TP$:
\begin{align}
\begin{split}
\tilde{\cptp U}_{\TP} = &\left(\prod_{i \in O^c} \X^{a(i)}_i\Z^{r(i)}_i\right) \circ \left(\prod_{i \in O}\Z_i^{s_Z(i) + r(i)}\X_i^{s_X(i) + a(i)}\right) \circ \cptp F' \circ \\
&\left(\prod_{i \in O^c}\Z_i(\delta_{\bm{b}}(i))^\dagger\right) \circ (\cptp G \otimes \Id_w) \circ \left(\prod_{i \in V} \Z_i(\theta(i))\X^{a(i)}_i\right).
\end{split}
\end{align}

\textit{Resulting State from the Distinguisher's Point of View.}
Above, $\rho_{out, \cptp C, \bm{b}}^{\bm{\theta},\bm{a}, \bm{r}, T}$ describes the quantum state when both the secrets chosen by the Client's protocol and the deviation performed by the Server are known. However, for the Distinguisher, the secrets chosen by the Client's protocol are unknown. Therefore, from its point of view, the resulting state is $\rho_{out, \bm{b}}$, obtained by taking the sum over the secret parameters $\bm{\theta}, \bm{a}, \bm{r}$ is given by
\begin{align}
\rho_{out, \cptp C, \bm{b}}^T = \frac{1}{8^{|O^c|} \cdot 4^{|V|}} \sum_{\bm{\theta},\bm{a}, \bm{r}} \rho_{out, \cptp C, \bm{b}}^{\bm{\theta},\bm{a}, \bm{r}, T}.
\end{align}
To work out an explicit form for $\rho_{out, \cptp C, \bm{b}}^T$, we focus on the state before the projection $\cptp P_{\bm{b}}$ is applied.
The goal is to remove dependencies on $r(i), a(i)$ which appear outside the encryption and decryption procedures in order to be able to use the twirling lemma, using the fact that these parameters are chosen at random.\footnote{These parameters must be perfectly random as using them multiple times might introduce correlations which the Server can exploit to derandomise the Pauli twirl.} To this end we cancel out the values of $\theta(i)$ coming from the initial encryption with those which appear in the rotations by $\delta_{\bm{b}}(i) = (-1)^{a(i)}\phi'_{\bm{b}}(i) + \theta(i) + (r(i) + a_N(i)) \pi$ for $i \in O^c$:
\begin{align}
\Z_i(\delta_{\bm{b}}(i))^\dagger \circ\cptp G\circ \Z_i(\theta(i))\X^{a(i)}_i = \Z_i((-1)^{a(i)}\phi'_{\bm{b}}(i) + (r(i) + a_N(i)) \pi)^\dagger\circ \cptp G\circ \X^{a(i)}_i,
\end{align}
due to the fact that the entanglement operation consists of $\CZ$ operations through which the $\Z$ rotations commute. Now, the values $\theta(i)$ appear only in the definition of the classical states $\ket{\delta_{\bm{b}}(i)}$. Hence, they perform a One-Time-Pad encryption to these angles and summing over $\theta(i)$ yields the perfectly mixed state in the register $R$. Formally
\begin{align}
\rho_{out, \cptp C, \bm{b}}^T = \frac{1}{4^{|V|}} \sum_{\bm{a}, \bm{r}} (\cptp P_{\bm{b}}\otimes\Id_O\otimes\Id_w)\circ \tilde{\cptp U}_{\TP} \left[\rho_C \otimes \sigma \otimes \dyad{+}^{\otimes |V| - |I|} \otimes \1_{3n} \otimes\dyad{0}^{\otimes w}\right],
\end{align}
where $\1_{3n}$ is the perfectly mixed state over the $3n$ qubits of $R$. This register has thus no effect on either the computation or the traps and is in tensor product with the rest of the state, it can therefore be traced out by assuming without loss of generality that the Server's deviation has no effect on it. We will keep the previous labels for the states after this trace operation is performed to reduce complexity.

We can now commute the encryption on both sides of the deviation so that the deviation is exactly sandwiched between two identical random Pauli operations. We start on the right side of $\cptp F'$ in the expression of $\tilde{\cptp U}_{\TP}$. For all qubits in the graph, we need to commute all $\X^{a(i)}_i$ through the entanglement operation first. Since $\CZ_{i, j}\X_i = \Z_j\X_i\CZ_{i, j}$ (and similarly for $\X_j$), using $a_N(i) = \sum_{j \in N_G(i)} a(j)$ we get that
\begin{align}
\cptp G \circ \left(\prod_{i \in V} \X_i^{a(i)}\right) = \left(\prod_{i \in V} \Z_i^{a_N(i)}\X_i^{a(i)}\right) \circ\cptp G.
\end{align}
The additional $\Z_i^{r(i) + a_N(i)}$ encryption of the output qubits commute unchanged through the entanglement operation $\cptp G$. These encryptions now need to be commuted through the $\Z$ rotations for measured qubits:\footnote{$\X_i$ and $\Z_i$ commute trivially through the rotations on qubits $j \neq i$.}
\begin{align}
\Z_i((-1)^{a(i)}\phi'_{\bm{b}}(i) + (r(i) + a_N(i)) \pi)^\dagger\Z_i^{a_N(i)}\X_i^{a(i)} = \Z_i^{r(i)}\X_i^{a(i)}\Z_i(\phi'_{\bm{b}}(i))^\dagger.
\end{align}
On the other hand, on the output qubits, the operation applied is $\Z_i^{a_N(i)}\Z_i^{r(i) + a_N(i)}\X_i^{a(i)} = \Z_i^{r(i)}\X_i^{a(i)}$. In total, we have that
\begin{align}
\left(\prod_{i \in O^c}\Z_i(\delta_{\bm{b}}(i))^\dagger\right)\circ \cptp G\circ \left(\prod_{i \in V} \Z_i(\theta(i))\X^{a(i)}_i\right) = \cptp Q_{\bm{a}, \bm{r}}\circ\left(\prod_{i \in O^c}\Z_i(\phi'_{\bm{b}}(i))^\dagger\right)\circ\cptp G,
\end{align}
where $\cptp Q_{\bm{a}, \bm{r}} = \prod_{i \in V} \Z_i^{r(i)}\X^{a(i)}_i$. On the other side of $\cptp F'$ in the expression of $\tilde{\cptp U}_{\TP}$, we simply have that
\begin{align}
\left(\prod_{i \in O^c} \X^{a(i)}_i\Z^{r(i)}_i\right)\circ \left(\prod_{i \in O}\Z_i^{s_Z(i) + r(i)}\X_i^{s_X(i) + a(i)}\right) = \left(\prod_{i \in O}\Z_i^{s_Z(i)}\X_i^{s_X(i)}\right)\circ \cptp Q_{\bm{a}, \bm{r}}^\dagger,
\end{align}
up to a global phase.

We note that $\rho_{cor, \cptp C, \bm{b}}^T = \left(\prod_{i \in O^c}\Z_i(\phi'_{\bm{b}}(i))^\dagger\right)\circ\cptp G [\rho_C \otimes \sigma \otimes \dyad{+}^{\otimes |V| - |I|}]$ is the correct state during an execution of $\TP$ before the encryption-deviation-decryption, and define $\cptp D_{\bm{b}} = \prod_{i \in O}\Z_i^{s_Z(i)}\X_i^{s_X(i)}$ as the final plain MBQC correction operator. We then obtain
\begin{align}
\rho_{out, \cptp C, \bm{b}}^T = \frac{1}{4^{|V|}} (\cptp P_{\bm{b}}\otimes \cptp D_{\bm{b}}\otimes\Id_w)\left[ \sum_{\cptp Q_{\bm{a}, \bm{r}} \in \mathcal P_V} (\cptp Q_{\bm{a}, \bm{r}}^\dagger \otimes \Id_w)\circ \cptp F'\circ (\cptp Q_{\bm{a}, \bm{r}}  \otimes \Id_w) [\rho_{cor, \cptp C, \bm{b}}^T\otimes\dyad{0}^{\otimes w}]\right].
\end{align}

Without loss of generality we can decompose the Server's deviation in the Pauli operator basis over the graph's vertices as $\cptp F' = \sum_{\cptp E \in \mathcal P_V} \alpha_E \cptp E \otimes \cptp U_{\cptp E}$. Applying the notation $\cptp U[\rho] = \cptp U\rho\cptp U^\dagger$, we get
\begin{align}
\rho_{out, \cptp C, \bm{b}}^T = \frac{1}{4^{|V|}} (\cptp P_{\bm{b}}\otimes \cptp D_{\bm{b}}\otimes\Id_w)\left[ \smashoperator[r]{\sum_{\cptp E, \cptp E' \in \mathcal P_V}} \alpha_{\cptp E}\alpha_{\cptp E'}^* \smashoperator[r]{\sum_{\cptp Q_{\bm{a}, \bm{r}} \in \mathcal P_V}}\cptp Q_{\bm{a}, \bm{r}}^\dagger \cptp E \cptp Q_{\bm{a}, \bm{r}} \rho_{cor, \cptp C, \bm{b}}^T \cptp Q_{\bm{a}, \bm{r}}^\dagger \cptp E'^\dagger \cptp Q_{\bm{a}, \bm{r}} \otimes \cptp U_{\cptp E}\dyad{0}^{\otimes w}\cptp U_{\cptp E'}^\dagger \right],
\end{align}
where $\alpha_{\cptp E'}^*$ is the complex conjugate of $\alpha_{\cptp E'}$. We now apply the Twirling Lemma \ref{lem:twirl}, leading to
\begin{equation}
\sum_{\cptp Q_{\bm{a}, \bm{r}} \in \mathcal P_V}\cptp Q_{\bm{a}, \bm{r}}^\dagger \cptp E \cptp Q_{\bm{a}, \bm{r}} \rho_{cor, \cptp C, \bm{b}}^T \cptp Q_{\bm{a}, \bm{r}}^\dagger \cptp E'^\dagger \cptp Q_{\bm{a}, \bm{r}} = 0,
\end{equation}
for $\cptp E \neq \cptp E'$. Therefore
\begin{align}
\rho_{out, \cptp C, \bm{b}}^T = \frac{1}{4^{|V|}} (\cptp P_{\bm{b}}\otimes \cptp D_{\bm{b}}\otimes\Id_w)\left[ \sum_{\cptp Q_{\bm{a}, \bm{r}}, \cptp E \in \mathcal P_V} |\alpha_{\cptp E}|^2 \cptp Q_{\bm{a}, \bm{r}}^\dagger \cptp E \cptp Q_{\bm{a}, \bm{r}} \rho_{cor, \cptp C, \bm{b}}^T \cptp Q_{\bm{a}, \bm{r}}^\dagger \cptp E^\dagger \cptp Q_{\bm{a}, \bm{r}} \otimes \cptp U_{\cptp E}\dyad{0}^{\otimes w}\cptp U_{\cptp E}^\dagger \right],
\end{align}
Consequently, the deviation is a CPTP map defined by $\{\cptp E \otimes \cptp U_{\cptp E}, p_{\cptp E} = |\alpha_{\cptp E}|^2\}_{\cptp E \in \mathcal P_V}$, a convex combination of Pauli operators on the graph's vertices tensored with an operation on the Server's internal register.
Overall, this shows that the effect of the Server's deviation -- when averaged over the choice of secret parameters $\bm{\theta}, \bm{a}, \bm {r}$ -- is a probabilistic mixture of Pauli operators on the qubits of the graph.

The Pauli encryption and decryption $\cptp Q_{\bm{a}, \bm{r}}$ commutes up to a global phase with the Pauli deviation $\cptp E$. We can therefore rewrite the state as
\begin{align}
\rho_{out, \cptp C, \bm{b}}^T = \sum_{\cptp E \in \mathcal P_V} p_{\cptp E} (\cptp P_{\bm{b}}\otimes \cptp D_{\bm{b}})\circ\cptp E [\rho_{cor, \cptp C, \bm{b}}^T] \otimes \cptp U_{\cptp E}[\dyad{0}^{\otimes w}].
\end{align}

Since the distinguisher wishes to maximise its distinguishing probability, it is sufficient to consider that it applies a fixed Pauli deviation $\cptp E \in \mathcal P_V$ for which the distinguishing probability is maximal. Furthermore, the state in the Server's register is unentangled from the rest and therefore does not contribute to the attack of the Server on the Client's state. Once this is traced out, seeing as $\cptp D_b$ and $\cptp E$ are Paulis and therefore commute up to a global phase, the final state can be written as
\begin{align}
\rho_{out, \cptp C, \bm{b}}^T = (\cptp P_{\bm{b}}\otimes \Id_O)\circ \cptp E \circ (\Id_{O^c}\otimes\cptp D_{\bm{b}}) [\rho_{cor, \cptp C, \bm{b}}^T] = \cptp E \circ (\TP)[\rho_C \otimes \sigma].
\end{align}
The final equality stems from the definition of the notation $\cptp E \circ P$ for a pattern $P$ (Section \ref{subsec:detect-def}) and the fact that applying $\cptp D_{\bm{b}}$ to $\rho_{cor, \cptp C, \bm{b}}^T$ performs exactly the correct unitary portion of plain MBQC pattern $\TP$ -- up to the measurements which are handled by $\cptp P_{\bm{b}}$.\footnote{This is correct up to a relabelling of $\mathcal P_V$ since in the rest of the paper we assumed that the measurements are performed in the computational basis.}

\textit{Applying the Composable Security of UBQC.}
We next show that this deviation depends on the same classical parameters in the ideal and real scenarii. To this end, we apply the composition Theorem \ref{thm:ac-compos} of the AC framework to replace the execution of the UBQC Protocol by the Blind Delegated Quantum Computation Resource \ref{def:bqc_if} both in the simulation and the real protocol. As per the security of the UBQC Protocol as expressed in Theorem \ref{thm:sec-ubqc}, the distinguishing advantage is not modified by this substitution so long as the graph, order of measurements and output set of qubits are known to the Server. The results can be seen in Figures \ref{fig:real-ubqc} and \ref{fig:sim-ubqc}. The distinguisher has access to all outward interfaces.

\begin{figure}[ht]
\centering
\includegraphics[width=0.60\textwidth]{./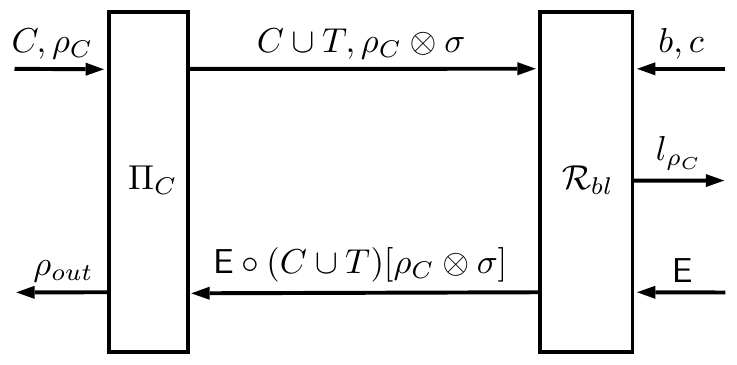}
\caption{Real world hybrid interaction between the Client's protocol CPTP map $\Pi_C$ and Blind Delegated QC Resource $\mathcal{R}_{bl}$.}
\label{fig:real-ubqc}
\end{figure}

\begin{figure}[ht]
\centering
\includegraphics[width=0.80\textwidth]{./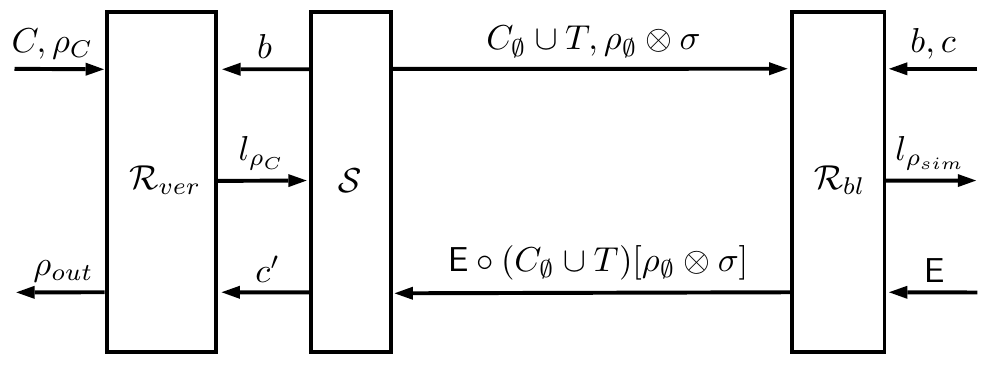}
\caption{Simulator $\mathcal{S}$ interacting Secure and Blind DQC Resources $\mathcal{R}_{ver}$ and $\mathcal{R}_{bl}$.}
\label{fig:sim-ubqc}
\end{figure}

The Simulator receives the leak $l_{\rho_C} = (\mathfrak{C}, G, \sch P, \preceq_G)$ from the Secure Delegated Quantum Computation Resource. In both cases, we assume that both the Client and Simulator send $(\mathfrak{C}, G, \sch P, \preceq_G)$ as a first message to the Server. All canvases in $\sch P$ are blind-compatible (Definition \ref{def:compat}) meaning that they all share the graph $G$ and the same output set $O$, and the order $\preceq_G$ is used for all patterns generated from $\sch P$ and the embedding algorithm. Since these parameters are the same in all executions of both the real and ideal case, the leak $l_{\rho_{ideal}}$ obtained by the Server in the simulation does not yield any more information. Overall, the classical information in both cases is identical and does not help the distinguisher on its own. The deviations in the real and ideal case follow therefore the same distribution as well, independently of the computation and trappified canvas chosen by the Client or Simulator.

\textit{Output and Abort Probability Analysis.}
The interactions are therefore indistinguishable before the output is sent back to the Client and we focus in the following on the output state and the abort probability in both cases, which are the only remaining elements which the distinguisher can use to decide which setup it is interacting with. 
At the moment when the distinguisher chooses the Pauli deviation $\cptp E \in \mathcal P_V$, it does not have access to the trappified canvas $T$ that the Client or Simulator have sampled. Therefore we can average the final output state over the possible choices of $T$ as follows
\begin{align}
\rho_{out, \cptp C, \bm{b}} = \sum_{T \in \sch P} \Pr_{T \sim \pd P}[T] \cptp E \circ (\TP)[\rho_C \otimes \sigma].
\end{align}
After this, the Client and Simulator decide whether the computation should be accepted or rejected and in the real case the Client applies the decoding algorithm $\Deco$ to the output of the computation. The final outputs $\rho_{\substack{out\\ real}}, \rho_{\substack{out\\ ideal}}$ of the Client in respectively the real and ideal settings can then be written as follows
\begin{align}
\rho_{\substack{out\\ real}} &= \sum_{T \in \sch P}\Pr_{T \sim \pd P}[T] \left(p_{\acc, \TP}\Cdev{\cptp E} [\rho_C \otimes \sigma]\otimes\dyad{\acc} + \left(1 - p_{\acc, \TP}\right)\dyad{\bot}\otimes\dyad{\rej}\right),\\
\rho_{\substack{out\\ ideal}} &= \sum_{T \in \sch P}\Pr_{T \sim \pd P}[T] \left(p_{\acc, C_\emptyset \cup T}\cptp C [\rho_C]\otimes\dyad{\acc} + \left(1 - p_{\acc, C_\emptyset \cup T}\right)\dyad{\bot}\otimes\dyad{\rej}\right),
\end{align}
where $\Cdev{\cptp E} = \Deco\circ\Tr_{O_C^c}\circ\cptp E\circ (\TP)$. For all $\cptp C \in \mathfrak{C}$ we have that
\begin{align}
p_{\acc, \TP} = \Pr_{t\sim \pd T_{\cptp E}}[\tau(t) = 0],
\end{align}
which uses Lemma \ref{lem:indep-t}, namely that the acceptance probability does not depend on the input or the computation. Therefore this probability is identical in the real and ideal setting $p_{\acc, \TP} = p_{\acc, C_\emptyset \cup T} = p_{\acc, T}$, regardless of the deviation chosen by the distinguisher. We see that whenever the computation is rejected, the output state is identical in both setups. On the other hand, whenever the computation is accepted, the ideal resource will always output the correct state, while the concrete protocol outputs a potentially erroneous state. By convexity of the trace distance, the distinguishing probability $p_d$ can therefore be written as:
\begin{align}
p_d = \max_{\substack{\cptp E \in \mathcal P_V\\ \cptp C \in \mathfrak{C} \\ \psi_C}}\left(\sum_{T\in \sch P} \Pr_{\substack{T \sim \pd P \\ t\sim \pd T_{\cptp E}}}[\tau(t) = 0, T] \times \|(\Cdev{\cptp E}  - \cptp C\otimes \Id_T)\otimes\Id_D [\dyad{\psi_C} \otimes \sigma]\|_{\Tr}\right),
\end{align}
where $\ket{\psi_C}$ is a purification of the Client's input $\rho_C$ using the distinguisher's register $D$. We now therefore analyse the output state in the case where the computation is accepted.

\textit{Deviation Influence on Distinguishing Probability.}
First consider the case where $\cptp E \in \mathcal E_\epsilon$. 
Since $\sch P$ $\epsilon$-detects such errors (Definition \ref{def:detect_tp}), the probability of accepting $\Pr_{t\sim \pd T_{\cptp E}}[\tau(t) = 0]$ is upper-bounded by $\epsilon$, which implies
\begin{align}
p_{d, \mathcal E_\epsilon} \leq \epsilon \times \max_{\substack{\cptp E \in \mathcal E_\epsilon\\ \cptp C \in \mathfrak{C} \\ \psi_C}}\left(\sum_{T\in \sch P} \Pr_{T \sim \pd P}[T]\|(\Cdev{\cptp E}  - \cptp C\otimes \Id_T)\otimes\Id_D [\dyad{\psi_C} \otimes \sigma]\|_{\Tr}\right) = \hat{\epsilon}.
\end{align}
The distinguisher can freely choose the Client's input state $\psi_C$ and the computation $\cptp C \in \mathfrak{C}$ and there is no constraint on the effect of this deviation on the computation part of the trappified pattern. In the worst case the incorrect real output state is orthogonal to the ideal output state, meaning that the distinguisher can tell apart both settings with certainty and the trace distance is upper-bounded by $1$. The distinguishing probability in this scenario therefore follows $p_{d, \mathcal E_\epsilon} \leq \epsilon$.

Second, we consider the alternate case, where $\cptp E \notin \mathcal E_\epsilon$. Here, we assumed that the trappified scheme $\sch P$ is $\nu$-correct on the set $\mathcal E_\nu$ of which $\mathcal P_V \setminus \mathcal E_\epsilon$ is a subset (Definition \ref{def:trap-cor}), therefore the trace distance between the correct result of the computation and the real output of the protocol is upper-bounded by $\nu$
\begin{align}
\|(\Cdev{\cptp E}  - \cptp C)\otimes\Id_D [\dyad{\psi_C} \otimes \sigma]\|_{\Tr} \leq \nu.
\end{align}
Therefore
\begin{align}
p_{d, \mathcal P_V \setminus \mathcal E_\epsilon} \leq \nu  \times \max_{\cptp E \in \mathcal P_V \setminus \mathcal E_\epsilon}(\sum_{T\in \sch P} \Pr_{\substack{T \sim \pd P \\ t\sim \pd T_{\cptp E}}}[\tau(t) = 0, T]),
\end{align}
where the maximisation is done only over the error since the acceptance probability is independent of the input and computation. In this case, the accepting probability $p_{\acc, T}$ is not constrained and hence only upper bounded by $1$, yielding $p_{d, \mathcal P_V \setminus \mathcal E_\epsilon} \leq \nu$.

Since the deviation chosen by the distinguisher falls in either of these two cases, we have $p_d = \max(p_{d, \mathcal E_\epsilon}, p_{d, \mathcal P_V \setminus \mathcal E_\epsilon})$ and the maximum distinguishing probability between the Resource together with the Simulator and the concrete Protocol is thus upper-bounded by $\max(\hat{\epsilon}, \nu) \leq \max(\epsilon, \nu)$.
\end{proof}

\begin{remark}[Using Other Blind Protocols.]
In this work we use the UBQC protocol to provide blindness. This protocol is based on the \emph{prepare-and-send} principle. 
The direct mirror situation, where the Server prepares states and sends them to the Client, is called the \emph{receive-and-measure} paradigm. These are also based on MBQC and were shown to be equivalent to prepare-and-send protocol by \cite{WEP22:equivalence} using the Abstract Cryptography framework. Our techniques are therefore directly applicable to this setting as well with the same security guarantees. These two setups together cover most protocols that have been designed and which may be implemented in the near future.

The work of~\cite{M18:classical} introduced an explicit protocol for verifying $\mathsf{BQP}$ computations by relying only on classical interactions and a computational hardness assumption.
Our techniques are fully applicable as well using a protocol which \emph{$\epsilon_{bl}$-computationally-constructs} the Blind Delegated Quantum Computation Resource \ref{def:bqc_if} in the AC framework and is capable of implementing MBQC computations natively. The resulting protocol is of course computationally-secure only. A simple hybrid argument can be used first to replace any such computationally-secure protocol with Resource \ref{def:bqc_if} first -- at a cost of $\epsilon_{bl}$ -- and then the UBQC protocol at no cost. The other steps of the proof remain unchanged.
\end{remark}

\subsection{Insensitivity Implies Noise-Robustness}
Then, we give conditions on protocols implementing SDQC so that they are able to run on noisy machines with a good acceptance probability. We show formally the following intuitive reasoning: if the errors to which the trappified scheme is insensitive do not disturb the computation too much, then a machine which mostly suffers from such errors will almost always lead to the client accepting the computation and the output will be close to perfect.

\begin{theorem}[Robust Detection Implies Robust Verifiability]
\label{thm:robust_verif}
  Let $\sch P$ be a trappified scheme which is $\delta$-insensitive to the set of Pauli deviations $\mathcal{E}_\delta$.
  We assume an execution of Protocol \ref{proto:dev_detect} with an honest-but-noisy Server whose noise is modelled by sampling an error $\cptp E \in \mathcal E_\delta$ with probability $(1-p_\delta)$ and $\cptp E \in \mathcal P_V \setminus \mathcal E_\delta$ with probability $p_\delta$. 
  Then, the Client in Protocol~\ref{proto:dev_detect} accepts with probability at least $(1-p_\delta)(1-\delta)$.
  If furthermore we have $\mathcal{E}_\delta \subset \mathcal{E}_\nu$, then the correctness error of Protocol~\ref{proto:dev_detect} is $p_\delta + \delta + \nu$.
\end{theorem}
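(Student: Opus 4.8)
The plan is to reuse the structural analysis of the execution already carried out in the proof of Theorem~\ref{thm:verif}, only replacing the adversarially-chosen Pauli deviation with one sampled from the noise model. For a fixed noise term $\cptp E$ and a sampled canvas $T$, the state returned to the Client is $p_{\acc, T}\,\Cdev{\cptp E}[\rho_C\otimes\sigma]\otimes\dyad{\acc} + (1-p_{\acc,T})\dyad{\bot}\otimes\dyad{\rej}$, where by Lemma~\ref{lem:indep-t} the acceptance probability $p_{\acc, T} = \Pr_{t\sim\pd{T}_{\cptp E}}[\tau(t)=0]$ depends only on $\cptp E$ and $T$, and not on $\cptp C$ or the input. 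Averaging over $T\sim\pd P$ and over the noise distribution on $\cptp E$ then produces the real output state, to be compared against the ideal $\cptp C[\rho_C]\otimes\dyad{\acc}$.

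For the acceptance bound, first I would condition on $\cptp E\in\mathcal E_\delta$. Definition~\ref{def:insens_tp} guarantees that for each such $\cptp E$ the canvas-averaged acceptance probability is at least $1-\delta$, hence so is the average over the conditional distribution on $\mathcal E_\delta$. Since the complementary event $\cptp E\in\mathcal P_V\setminus\mathcal E_\delta$ contributes a non-negative acceptance term, the total acceptance probability is at least $(1-p_\delta)(1-\delta)+p_\delta\cdot 0 = (1-p_\delta)(1-\delta)$.

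For the correctness error, assuming $\mathcal E_\delta\subset\mathcal E_\nu$, I would again split the averaged output into the contribution of $\cptp E\in\mathcal E_\delta$ (weight $1-p_\delta$) and that of $\cptp E\in\mathcal P_V\setminus\mathcal E_\delta$ (weight $p_\delta$), and bound the distance to $\cptp C[\rho_C]\otimes\dyad{\acc}$ termwise by the triangle inequality. The bad case is bounded bluntly by $p_\delta$ using that the trace distance of states never exceeds $1$. For the good case I would use a two-step triangle inequality at each fixed $\cptp E\in\mathcal E_\delta$: first replace $\Cdev{\cptp E}[\rho_C\otimes\sigma]$ in the accept branch by the ideal output $\cptp C[\rho_C]$, which costs at most $\nu$ by $\nu$-correctness (Definition~\ref{def:trap-cor}, using $\cptp E\in\mathcal E_\nu$ and $p_{\acc,T}\le 1$); the resulting state is $\bar p_{\acc,\cptp E}\,\cptp C[\rho_C]\otimes\dyad{\acc} + (1-\bar p_{\acc,\cptp E})\dyad{\bot}\otimes\dyad{\rej}$, whose distance to the ideal equals $1-\bar p_{\acc,\cptp E}$ because the accept and reject branches are orthogonal in the flag register, and this residual is at most $\delta$ by insensitivity. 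Averaging over the conditional distribution preserves the bound $\nu+\delta$, and combining yields $(1-p_\delta)(\nu+\delta)+p_\delta \le p_\delta+\delta+\nu$.

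The routine parts are the triangle-inequality bookkeeping and the observation that tensoring with a fixed flag state leaves trace distances unchanged. The main conceptual point that needs care is that $\nu$-correctness controls the computational output inside the accept branch while $\delta$-insensitivity controls the accept/reject flag, so the two must be made to compose additively rather than interfere. This is exactly why I would introduce the intermediate state in which the accept branch already carries the ideal output: the remaining discrepancy then lives purely in the orthogonal flag register and is governed by the rejection probability alone, so the two error sources decouple cleanly.
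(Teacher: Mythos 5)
Your proposal is correct and follows essentially the same route as the paper: the acceptance bound is obtained exactly as in the paper by conditioning on $\cptp E \in \mathcal{E}_\delta$ and invoking $\delta$-insensitivity, and the correctness bound is the paper's own argument (the update of the correctness proof of Theorem~\ref{thm:verif}, Equation~\ref{eq:cor}) made explicit — splitting off the weight-$p_\delta$ bad noise events and combining $\nu$-correctness on the accept branch with $\delta$-insensitivity on the flag register via the triangle inequality. The only difference is the order of the two triangle-inequality steps (you fix the output first and the flag second, the paper does the reverse), which is immaterial; your write-up is in fact more detailed than the paper's two-sentence proof.
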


\begin{proof}
By construction, $\sch P$ is $\delta$-insensitive to $\mathcal E_\delta$. Hence, it will accept deviations in $\mathcal E_\delta$ with probability at least $1-\delta$ which yields the overall lower bound on the acceptance probability of $(1 - p_\delta) (1 - \delta)$.

The proof of correctness from Theorem \ref{thm:verif} can be directly updated to account this honest-but-noisy Server if $\mathcal{E}_\delta \subset \mathcal{E}_\nu$ and $\Id \in \mathcal{E}_\delta$. Using the same reasoning as for $\Id$, we can upper-bound the rejection probability in Equation~\ref{eq:cor} by $p_\delta + \delta$ and therefore the overall correctness error by $p_\delta + \delta + \nu$.
\end{proof}

The theorem above shows the importance not only of the parameters of the scheme, but also the size of the sets $\mathcal{E}_\epsilon$, $\mathcal{E}_\delta$ and $\mathcal{E}_\nu$. By creating schemes which have more errors fall in set $\mathcal{E}_\delta$, it is possible to have a direct impact both in terms of acceptance probability and fidelity in the context of honest-but-noisy executions. The next section will show how all these parameters can indeed be amplified in such a way that (i) the noise process generates deviations that are within $\mathcal E_\delta$ with overwhelming probability, (ii) the embedding of the computation $\cptp C$ within $\sch P$ adds redundancy in such a way that $\nu$ is negligible, and (iii) $\sch P$ is $\delta$-insensitive to $\mathcal E_\delta$ for a negligible $\delta$ and detects $\mathcal E_\epsilon$ with negligible error $\epsilon$. In such situation, the protocol will accept the computation almost all the time, while its security error given by Theorem~\ref{thm:verif} will be negligible.

\subsection{Efficient Verifiability Requires Error-Correction.}
We now present an important consequence of Theorem~\ref{thm:verif} in the case where the correctness error $(\delta+\nu)$ and the security error $\max(\epsilon,\nu)$ are negligible with respect to a security parameter $\lambda$.
We show that this correctness and security regime can only be achieved with a polynomial qubit overhead if the computation is error-protected.

More precisely, we denote $\sch P(\lambda)$ a sequence of trappified schemes indexed by a security parameter $\lambda$, such that it $\epsilon(\lambda)$-detects a set $\mathcal E_\epsilon(\lambda) \subseteq \mathcal P_V(\lambda)$ of Pauli deviations, is $\delta(\lambda)$-insensitive to $\mathcal E_\delta(\lambda)$ and is $\nu(\lambda)$-correct on $\mathcal E_\nu(\lambda)$ which includes the complement of $\mathcal E_\epsilon(\lambda)$, for $\epsilon(\lambda)$, $\delta(\lambda)$ and $\nu(\lambda)$ negligible in $\lambda$.
Additionally, let $C$ be a computation pattern which implements the client's desired computation CPTP map $\cptp C \in \mathfrak{C}$ on some input state $\ket\psi$.

We are now interested in the server's memory overhead introduced by implementing $\cptp C$ using $\sch P(\lambda)$ for computation class $\mathfrak{C}$ instead of the unprotected pattern $C$.
This is expressed by the ratio $|G_{\sch P(\lambda)}|/|G_{C}|$ between the number of vertices in the graph $G_{\sch P(\lambda)}$ common to all canvases in $\sch P(\lambda)$ and the graph $G_{C}$ used by the pattern $C$.

For a trappified pattern $\TP$ obtained by using the embedding algorithm on a trappified canvas from $\sch P(\lambda)$ we denote by $|O_{\TP}|$ the number of computation output qubits in $\TP$.
Similarly, $|O_{C}|$ is the number of output qubits in $C$.
Without loss of generality, we impose that $|O_{C}|$ is minimal, in the sense that given the set of possible inputs and $\cptp C$, the space spanned by all possible outputs is the whole Hilbert space of dimension $2^{|O_{C}|}$.
This is always possible as one can add a compression phase at the end of any non-minimal pattern.

\begin{theorem}[Error-Correction Prevents Resource Blow-up]\label{thm:encoding}
Let $C$ be a minimal MBQC pattern implementing a CPTP map $\cptp C$.
Let $\TP$ denote a trappified pattern implementing $\cptp C$ obtained from $\sch P(\lambda)$.
Further assume that Protocol \ref{proto:dev_detect} using $\sch P(\lambda)$ has negligible security error $\max(\epsilon, \nu)$ with respect to $\lambda$.

If $|O_{\TP}|/|O_{C}| = 1$ for a non-negligible fraction of trappified canvases $T \in \sch P(\lambda)$, then the overhead $|G_{\sch P(\lambda)}|/|G_{C}|$ is super-polynomial in $\lambda$.
\end{theorem}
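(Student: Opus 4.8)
The plan is to prove the statement directly by exhibiting the cheapest possible attacks and showing that negligible security forces super-polynomially many output vertices. The attacks I exploit are single-qubit Pauli deviations $\X_v$ supported on one output vertex $v \in O$. For each such $v$ define
\begin{equation}
q_v = \Pr_{T \sim \pd P}\!\left[\, v \in O_C \text{ and } |O_{\TP}| = |O_{C}| \,\right],
\end{equation}
the probability that $v$ is a computation output qubit of a redundancy-free canvas, and write $m = |O_{C}|$. The goal is to show that negligibility of the security error forces every $q_v$ to be negligible, and then to contradict the non-negligible fraction of redundancy-free canvases by a counting argument over the fixed output set $O$.

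First I would argue that, on a redundancy-free canvas with $v \in O_C$, the deviation $\X_v$ is simultaneously \emph{undetectable} and \emph{harmful}. Undetectability is immediate from Lemma~\ref{lem:indep-t}: since $v \in O_C \subseteq V \setminus V_T$, the operator $\X_v$ acts as identity on the trap qubits $V_T$, so $\pd T_{\X_v} = \pd T$ and the acceptance probability coincides with that of an honest run. Harmfulness is where minimality enters: because $|O_{\TP}| = |O_{C}|$ and $C$ is minimal, the decoder $\Deco$ is an invertible map on the full $2^{m}$-dimensional output space and has no room to correct $\X_v$, so the error descends to a nontrivial, traceless logical operation. Feeding the maximally entangled state $\ket{\psi_C}$ (logical input against the distinguisher's register $D$) then sends the deviated Choi state of $\Cdev{\X_v}$ to one orthogonal to the ideal Choi state of $\cptp C$, up to the $\nu$-correctness slack, so the per-canvas trace distance appearing in Theorem~\ref{thm:verif} obeys $\|(\Cdev{\X_v} - \cptp C \otimes \Id_T)\otimes \Id_D[\dyad{\psi_C}\otimes\sigma]\|_{\Tr} \geq 1 - O(\nu)$ for every such $T$, with one fixed $\cptp C$ and $\ket{\psi_C}$.

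Combining these facts inside the distinguishing probability characterised in the proof of Theorem~\ref{thm:verif}, I would restrict the sum over $T$ to the redundancy-free canvases with $v \in O_C$ and discard the remaining non-negative terms. Using that $\sch P$ is $\delta$-insensitive to $\{\Id\}$ gives $\sum_{T}\Pr[T]\Pr_{\mathrm{honest}}[\tau(t)=1] \le \delta$, hence $\sum_{T : v\in O_C,\ \text{no red.}}\Pr[T]\Pr[\tau(t)=0] \ge q_v - \delta$, so that
\begin{equation}
\max(\hat\epsilon, \nu) \;\ge\; (1-O(\nu))\,(q_v - \delta).
\end{equation}
Since $\epsilon,\delta,\nu$ and the security error are all negligible in $\lambda$, this forces $q_v$ to be negligible for every $v \in O$. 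I would then count: double counting over the fixed set $O$ yields $\sum_{v\in O} q_v = \mathbb{E}_{T}\big[|O_C|\cdot \mathds 1[\text{no red.}]\big] = m\,f$, where $f$ is the assumed non-negligible fraction of redundancy-free canvases, hence a non-negligible quantity. On the other hand $\sum_{v\in O} q_v \le |O|\cdot \max_v q_v \le |O|\cdot\mathrm{negl}(\lambda)$. Therefore $|O| \ge m f / \mathrm{negl}(\lambda)$ is super-polynomial in $\lambda$, and since $|O| \le |G_{\sch P(\lambda)}|$ while $|G_{C}|$ is fixed, the overhead $|G_{\sch P(\lambda)}|/|G_{C}|$ is super-polynomial.

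The main obstacle is the harmfulness claim: making rigorous that ``no redundancy'' genuinely precludes correction of $\X_v$ by a single input that works \emph{uniformly} across all redundancy-free canvases. The clean route is the Choi-state argument combined with minimality — the honest physical output spans the whole space, so the effectively invertible decoder cannot annihilate a traceless physical Pauli — but care is needed to confirm that $\Deco$ is indeed invertible on the output space and to track the $\nu$-dependent slack, so that the trace distance stays bounded away from $0$ simultaneously in every term of the sum for the one fixed $\cptp C$ and $\ket{\psi_C}$. Everything else (undetectability from Lemma~\ref{lem:indep-t}, the insensitivity bookkeeping, and the counting step) is routine.
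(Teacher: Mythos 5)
Your proposal is correct and follows essentially the same route as the paper's proof: a bit-flip deviation on a computation-output vertex of a redundancy-free canvas is undetectable by trap--computation independence (Lemma~\ref{lem:indep-t}) yet harmful by minimality of $C$, so negligible security forces each vertex to play that role for at most a negligible fraction of canvases, and a counting argument over output positions bounded by $|G_{\sch P(\lambda)}|$ yields the super-polynomial blow-up. The differences are cosmetic --- the paper counts distinct \emph{first}-output positions $o_{\TP}$ while you sum the per-vertex probabilities $q_v$ over all of $O$, and the paper asserts the harmfulness-from-minimality step in one sentence where you sketch it via Choi states and explicitly flag the remaining rigor needed (a gap the paper's own proof shares).
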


The usefulness of this theorem comes from the contra-positive statement.
Achieving exponential verifiability with a polynomial overhead imposes that $|O_{\TP}|/|O_{C}| > 1$ for an overwhelming fraction of the trappified patterns.
This means that the computation is at least partially encoded into a larger physical Hilbert space,
which then serves to actively perform some form of error-correction.

\begin{proof}
Consider a trappified pattern $\TP$ for computing $\cptp C$ obtained from $\sch P(\lambda)$ such that $|O_{\TP}| = |O_{C}|$.
Given $\preceq_{G_{\sch P(\lambda)}}$, let $o_{\TP} \in O_{\TP}$ be the first output position of the computation.
By definition, a bit-flip operation applied on position $o_{\TP}$ cannot be detected by the trap in $\TP$ since the outcome of the trap is independent of the computation.
Yet, because $C$ is minimal and $|O_{\TP}| = |O_{C}|$, we get that for some input states, the bit-flip deviation on $o_{\TP}$ si harmful.
As a consequence, there exists a $\lambda_0$ such that, for all $\lambda \geq \lambda_0$, the diamond distance between $\cptp C$ and the bit-flipped version is greater than $\nu(\lambda)$.
To obtain exponential verification it is therefore necessary for this bit flip to be in the set of $\epsilon$-detected deviations.
This means that deviating on this position without being detected can happen for at most a negligible fraction $\eta(\lambda)$ of the trappified canvases in $\sch P(\lambda)$.
In other words, the position $o_{\TP}$ can only be the first output computation qubit for a negligible fraction $\eta(\lambda)$ of trappified patterns in $\sch P(\lambda)$ that satisfy $|O_{\TP}| = |O_{C}|$.

Then define $\tilde {\sch P}(\lambda) = \{P = E_{\mathfrak C}(\cptp C,\sch P(\lambda)), \ |O_{\TP}| = |O_{C}|\}$ as the set of trappified patterns for $\mathcal C$ that have no overhead,
and $O = \{o_{\TP}, \ T \in \tilde{\sch P}(\lambda)\}$ the set of vertices corresponding to their first output location.
By construction, we have $\sum_{o \in O} |\{T \in \tilde{\sch P}, \ o_{\TP} = o\}| = |\tilde{\sch P}|$.
But, we just showed that $|\{T \in \tilde{\sch P}, \ o_{\TP} = o\}|/|\sch P(\lambda)|$ is upper-bounded by $\eta$, negligible in $\lambda$.
Thus, $|O|$ is lower-bounded by $|\tilde{\sch P}(\lambda)|/ (|\sch P(\lambda)|\eta)$ which is super-polynomial in $\lambda$ so long as $|\tilde{\sch P}(\lambda)|/|\sch P(\lambda)|)$ is not negligible in $\lambda$.
\end{proof}

Note that the situation where $|O_{\TP}| > |O_{C}|$ is interesting only if the bit-flip deviation on qubit $o_{\TP}$ does not alter the computation.
Otherwise, the same reasoning as above is still applicable.
This shows that enlarging the physical Hilbert space storing the output of the computation is useful only if it allows for some error-correction which transforms low-weight harmful errors into harmless ones.

\section{Correctness and Security Amplification for Classical Input-Output Computations}
\label{sec:rvbqc}
We now construct a generic compiler to boost the properties of trappified schemes in the case of classical inputs. This compiler is a direct application of the results from the previous section regarding the requirement of error-correction since it uses a classical repetition code to protect the computation from low-weight bit-flips. It works by decreasing the set of errors which are detected and increasing the set of errors to which the trappified scheme is insensitive. These errors then can be corrected via a recombination procedure, which in the classical case can be as simple as a majority vote.

\subsection{Classical Input Trappified Scheme Compiler}
Theorem~\ref{thm:verif} presents a clear objective for traps: they should (i) detect harmful deviations while being insensitive to harmless ones.
Yet, a trap in a trappified pattern cannot detect deviations happening on the computation part of the pattern itself.
To achieve exponential verifiability, one further needs to ensure that there are sufficiently many trappified patterns so that it is unlikely that a potentially harmful deviation hits only the computation part of the pattern, and that it is detected with high probability when it hits the rest.
This is best stated by Theorem~\ref{thm:encoding}, which imposes to (ii) error-protect the computation so that hard-to-detect deviations are harmless while remaining harmful errors are easy to detect.
Additionally, one further needs to (iii) find a systematic way to insert traps alongside computation patterns to generate these exponentially many trappified patterns.

Ideally, we would like to be able to design and analyse points (i), (ii) and (iii) independently from one another as much as possible. We show here a general way of performing this decomposition given slight constraints on the client's desired computation.

It is based on the realisation that if the client has $d$ copies of its inputs -- which is always possible whenever the inputs are classical -- it can run $d$ times its desired computation by repeating $d$ times the desired pattern $C$ on graph $G$ sequentially or in parallel.
If the output is classical, it is then naturally protected by a repetition code of length $d$ and the result of the computation can be obtained through a majority vote.
These $d$ executions are called \emph{computation rounds}.
To detect deviations, the client needs to run $s$ additional rounds which contain only traps.
More precisely, each of these \emph{test rounds} is a pattern run on the same graph $G$ so that it is blind-compatible with $C$ (see Definition~\ref{def:compat}).
The collections of these $s$ test rounds themselves constitute trappified canvases according to Definition~\ref{def:trap-c}, where acceptance is conditioned to less than $w$ test rounds failures.
Now, because computation rounds and test rounds are executed using blind-compatible patterns on the graph $G$, the trap insertion (iii) can be achieved by interleaving at random the $s$ test rounds with the $d$ computation rounds.

These steps, which are a generalisation of the technique from \cite{LMKO21:verifying}, are formalised in the following definition. We denote $\mathsf{MBQC}_{G, \preceq}$ the class of computations with classical inputs that can be evaluated by an MBQC pattern on graph $G$ using an order compatible with $\preceq$.

\begin{definition}[Amplified Trap Compiler]\label{def:rvbqc_compiler}
Let $\sch P$ be a trappified scheme on a graph $G = (V, E)$ with an order $\preceq_{\sch P}$, and let $d,s\in \mathbb{N}$, $n = d + s$ and $w \in [s]$. We define the Amplified Trap Compiler that turns $\sch P$ into a trappified scheme $\sch P'$ on $G^n$ for computation class $\mathsf{MBQC}_{G, \preceq_{\sch P}}$ as follows:

\begin{itemize}
	\item The trappified canvases $T' \in \sch P'$ and their distribution is given by the following sampling procedure:
	\begin{enumerate}
		\item Randomly choose a set $S \subset [n]$ of size $s$. These will be the test rounds.
		\item For each $j \in S$, independently sample a trappified canvas $T_j$ from the distribution of $\sch P$.
	\end{enumerate}
	\item For each trappified canvas $T'$ defined above and an output $t = (t_{j})_{j \in S}$, the output of the decision function $\tau'$ is obtained by thresholding over the outputs of the decision functions $\tau_j$ of individual trappified canvases. More precisely
	\begin{equation}
		\tau'(t) = 
0 \text{ if } \sum_{j \in S} \tau_{j}(t_j) < w, \ \text{and } 1 \text{ otherwise}.
\end{equation}
	\item The partial ordering of vertices of $G^n$ in $\sch P'$ is given by the ordering $\preceq_G$ on each copy of $G$.
	\item Let $\cptp C \in \mathfrak{C}$ and $C$ the pattern on $G$ which implements the computation $\cptp C$. Given a trappified canvas $T' \in \sch P'$, the embedding algorithm $E_\mathfrak{C}$ sets to $C$ the pattern of the $d$ graphs that are not in $S$.
\end{itemize}
\end{definition}

\subsection{Boosting Detection and Insensitivity}
The next theorem relates the parameters $d, s, w$ with the deviation detection capability of the test rounds, thus showing that not only (i), (ii) and (iii) can be designed separately, but also analysed separately with regards to the security achieved by the protocol.

Let $\mathcal F \subseteq \mathcal{P}_V$. For $\cptp E \in  \mathcal{P}_{V^n}$, we define $\operatorname{wt}_{\mathcal F}(\cptp E)$ as the number of copies of $G$ on which $\cptp E$ acts with an element of $\mathcal F$. 
For $\mathcal E \subset \mathcal{P}_V$, we define $\mathcal{M}_{\geq k}(\mathcal E) = \{ \cptp E \in \mathcal{P}_{V^n} \; | \; \operatorname{wt}_{\mathcal E} (\cptp E) \geq k \}$, 
and $\mathcal{H}_{\geq k}(\mathcal E) = \{ \cptp E \in (\mathcal E \cup \Id)^n \; | \; \operatorname{wt}_{\mathcal E} (\cptp E) \geq k\}$, and similarly for $\leq, <, >$.

\begin{theorem}[From Constant to Exponential Detection and Insensitivity Rates]\label{thm:boost}
  Let $\sch P$ be a trappified scheme on graph $G$ which $\epsilon$-detects the error set $\mathcal{E}_\epsilon$, is $\delta$-insensitive to $\mathcal{E}_\delta$ and perfectly insensitive to at least $\{\Id\}$.
  For $d, s \in \mathbb{N}$, $n = d + s$ and $w \in [s]$, let $\sch P'$ be the trappified scheme resulting from the compilation defined in Definition~\ref{def:rvbqc_compiler}.
  
   Let $k_\epsilon > nw/(s(1-\epsilon))$ and $k_\delta < nw/(s\delta)$. Then, $\sch P'$ $\epsilon'$-detects $\mathcal{E}_\epsilon' = \mathcal{M}_{\geq k_\epsilon}(\mathcal E_\epsilon)$ and is $\delta'$-insensitive to $\mathcal{E}_\delta' = \mathcal{H}_{< k_\delta}(\mathcal E_\delta)$ where
  \begin{align}
  	\epsilon' &= \min_{\chi \in \left[ 0, \frac{k_\epsilon}{n} - \frac{w}{s(1-\epsilon)} \right]} \exp \left( -2 \chi^2 s \right) + \exp \left( -2 \frac{\left(\left( \frac{k_\epsilon}{n} - \chi \right)(1-\epsilon) - \frac{w}{s}\right)^2}{\left( \frac{k_\epsilon}{n} - \chi \right)}s \right),\\
  	\delta' &= \min_{\chi \in \left[ 0, \frac{w}{s\delta} - \frac{k_\delta}{n} \right]} \exp \left( -2 \chi^2 s \right) + \exp \left( -2 \frac{\left(\left( \frac{k_\delta}{n} + \chi \right)\delta - \frac{w}{s}\right)^2}{\left( \frac{k_\delta}{n} + \chi \right)}s \right).
  \end{align}
\end{theorem}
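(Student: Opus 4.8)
The plan is to prove the two claims in parallel, since they are formally dual, and in each case to reduce the accept/reject event of the compiled decision function $\tau'$ to a \emph{two-stage concentration argument}: first over the random placement of the $s$ test rounds among the $n$ copies of $G$ (a hypergeometric law), and then over the independent outcomes of the individual decision functions $\tau_j$ conditioned on that placement (a binomial law). The free parameter $\chi$ in the statement is precisely the budget splitting the total failure probability between these two stages. Throughout I would use that, by Lemma~\ref{lem:indep-t} and the independent sampling in Definition~\ref{def:rvbqc_compiler}, the outcomes $\tau_j(t_j)$ are independent across rounds.

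For the detection claim, fix $\cptp E \in \mathcal{M}_{\geq k_\epsilon}(\mathcal E_\epsilon)$ and let $A \subseteq [n]$ be the set of copies on which $\cptp E$ restricts to an element of $\mathcal E_\epsilon$, so $|A| \geq k_\epsilon$. By $\epsilon$-detection (Definition~\ref{def:detect_tp}), each test round landing in $A$ returns $\tau_j = 1$ with probability at least $1-\epsilon$; rounds outside $A$ can only increase the failure count, so I discard them. Writing $X = |S \cap A|$ and letting $Y$ count the failing test rounds among these, $Y$ stochastically dominates $\mathrm{Bin}(X, 1-\epsilon)$, and $\tau'$ accepts only if $Y < w$. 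Conditioning on $X$, a Hoeffding bound for sampling without replacement gives $\Pr[X \leq (k_\epsilon/n - \chi)s] \leq \exp(-2\chi^2 s)$, while on the complementary event the binomial Hoeffding bound gives $\Pr[Y < w \mid X = x] \leq \exp(-2((1-\epsilon)x - w)^2/x)$. The key technical point is that $x \mapsto ((1-\epsilon)x - w)^2/x$ is \emph{increasing} for $x > w/(1-\epsilon)$, so on the good event it is minimised at the boundary $x_0 = (k_\epsilon/n - \chi)s$; substituting $x_0$ reproduces exactly the second exponential in $\epsilon'$. The admissibility constraint $\chi \le k_\epsilon/n - w/(s(1-\epsilon))$ is what guarantees $(1-\epsilon)x_0 > w$ (positive gap, increasing regime), and the hypothesis $k_\epsilon > nw/(s(1-\epsilon))$ makes this interval nonempty. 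A union bound over the two stages followed by minimisation over admissible $\chi$ yields $\epsilon'$.

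The insensitivity claim is the mirror image. For $\cptp E \in \mathcal{H}_{< k_\delta}(\mathcal E_\delta)$, each copy carries either $\Id$ --- to which $\sch P$ is perfectly insensitive, so that round never fails --- or an element of $\mathcal E_\delta$, which fails with probability at most $\delta$; the number of the latter is $B := \operatorname{wt}_{\mathcal E_\delta}(\cptp E) < k_\delta$. This restriction to $\mathcal H$ rather than $\mathcal M$ is essential: a factor outside $\mathcal E_\delta \cup \{\Id\}$ would leave a round's failure probability uncontrolled. Since $\tau'$ rejects iff at least $w$ rounds fail, with $X = |S \cap B|$ and $Y$ the number of failing rounds, $Y$ is dominated by $\mathrm{Bin}(X, \delta)$, and I would bound $\Pr[\tau' = 1] \le \Pr[Y \ge w]$ by the two \emph{upper}-tail Hoeffding estimates $\Pr[X \ge (k_\delta/n + \chi)s] \le \exp(-2\chi^2 s)$ and $\Pr[Y \ge w \mid X = x] \le \exp(-2(w - \delta x)^2/x)$. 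Here the relevant exponent $x \mapsto (w - \delta x)^2/x$ is \emph{decreasing} for $x < w/\delta$, so on the good event $X < x_0 := (k_\delta/n + \chi)s$ it is again minimised at $x_0$; the range $\chi \le w/(s\delta) - k_\delta/n$ (nonempty by $k_\delta < nw/(s\delta)$) keeps us in that regime while ensuring $\delta x_0 < w$, and substituting $x_0$ reproduces the second exponential in $\delta'$.

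The main obstacle I anticipate is not any single estimate but the bookkeeping that makes the two-stage bound tight and correct: invoking the without-replacement version of Hoeffding's inequality (hypergeometric being at least as concentrated as binomial), justifying the stochastic-domination reductions in the right direction (lower-bounding detection by $1-\epsilon$ for the lower tail of $Y$, upper-bounding insensitivity failures by $\delta$ for the upper tail, with $\Id$-rounds contributing nothing), and above all verifying the monotonicity of the conditional exponent in $x$ so that evaluating at the boundary $x_0$ is legitimate rather than merely heuristic. Once these are in place, everything assembles into the stated $\epsilon'$ and $\delta'$ after minimisation over $\chi$.
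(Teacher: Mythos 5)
Your proposal is correct and follows essentially the same route as the paper's proof: the same two-stage decomposition with the hypergeometric tail bound (parametrised by $\chi$) controlling how many test rounds land on affected copies, followed by a binomial Hoeffding bound on the decision-function outcomes, with the perfect insensitivity to $\Id$ invoked exactly where the paper invokes it for the $\mathcal{H}_{< k_\delta}(\mathcal E_\delta)$ case. The only cosmetic difference is that the paper handles the boundary evaluation by stochastic domination (conditioning on $X > x$, respectively $X \leq x$, and dominating by $\mathrm{Bin}(x,\cdot)$), whereas you condition on $X = x$ and verify monotonicity of the Hoeffding exponent in $x$ — these are equivalent pieces of bookkeeping.
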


\begin{proof}
  
	For a given deviation $\cptp E \in \mathcal P_{V^n}$, let $X$ be a random variable describing the number of test rounds on which the deviation's action is in $\mathcal E_\epsilon$, where the probability is taken over the choice of the trappified canvas in $\sch P'$. Let $Y$ be a random variable counting the number of test rounds for which the decision function rejects.
	
	We start by proving the first bound. We need to upper-bound the probability that a deviation in $\mathcal{M}_{\geq k_\epsilon}(\mathcal E_\epsilon)$ is not detected, which happens if and only if $Y < w$. Let $x \in [s]$, we can always decompose $\Pr \left[ Y < w \right]$ as
	\begin{align}
		\Pr \left[ Y < w \right] &= \Pr \left[ Y < w \mid X \leq x \right]\Pr \left[ X \leq x \right] + \Pr \left[ Y < w \mid X > x \right]\Pr \left[ X > x \right]\\
		&\leq \Pr \left[ X \leq x \right] + \Pr \left[ Y < w \mid X > x \right].
	\end{align}
	We now aim to bound both terms above --- an intuitive depiction of the bound derived below is presented in Figure~\ref{fig:detection_amplification}.
	
	Let $\cptp E \in \mathcal{M}_{\geq k_\epsilon}(\mathcal E_\epsilon)$. In this case, by definition of $\cptp E$ and construction of $\sch P'$, $X$ is lower-bounded in the usual stochastic order by a variable $\tilde{X}$ following a hypergeometric variable distribution of parameters $(n, k_\epsilon, s)$. We fix $x = \left( \frac{k_\epsilon}{n} - \chi \right) s$ for $\chi \geq 0$ and use tail bounds for the hypergeometric distribution to get
	\begin{align}
		\Pr \left[ X \leq \left( \frac{k_\epsilon}{n} - \chi \right) s \right] \leq \Pr \left[ \tilde{X} \leq \left( \frac{k_\epsilon}{n} - \chi \right) s \right] \leq \exp \left( -2 \chi^2 s \right).
	\end{align}
	
	For the other term, note that $Y$, conditioned on a lower bound $x$ for $X$, is lower-bounded in the usual stochastic order by an $(x, 1-\epsilon)$-binomially distributed random variable $\tilde{Y}$. Hoeffding's inequality for the binomial distribution then implies that
	\begin{align}
		\Pr \left[ \left. Y < w \; \right| \; X > x \right] \leq \Pr \left[ \tilde{Y} < w \right] \leq \exp \left( -2 \frac{(x(1-\epsilon) - w)^2}{x} \right).
	\end{align}
	
	All in all, replacing the value of $x$ above with $\left( \frac{k_\epsilon}{n} - \chi \right) s$ and combining it with the first bound, we have for $\chi \leq \frac{k_\epsilon}{n} - \frac{w}{s(1-\epsilon)}$ that
	\begin{align}
		\Pr \left[ Y < w \right] \leq \exp \left( -2 \chi^2 s \right) + \exp \left( -2 \frac{\left(\left( \frac{k_\epsilon}{n} - \chi \right) s(1-\epsilon) - w\right)^2}{\left( \frac{k_\epsilon}{n} - \chi \right) s} \right).
	\end{align}
	This concludes the first statement.
	
	For the second statement, we need to upper-bound the probability that a deviation in $\mathcal{H}_{< k_\delta}(\mathcal E_\delta)$ is detected. We can similarly decompose $\Pr \left[ Y \geq w \right]$ as
	\begin{align}
		\Pr \left[ Y \geq w \right] \leq \Pr \left[ Y \geq w \mid X < x \right] + \Pr \left[ X \geq x \right].
	\end{align}	
	
	Let $\cptp E \in \mathcal{H}_{< k_\delta}(\mathcal E_\delta)$. Here again, an intuitive depiction of the bound derived below is presented in Figure~\ref{fig:insensitivity_amplification}. Now $X$ is upper-bounded in the usual stochastic order by a variable $\tilde{X}$ following a hypergeometric distribution of parameters $(n, k_\delta, s)$, by definition of $\cptp E$. This holds here because the scheme is perfectly insensitive to $\Id$, and therefore the identity never triggers tests. It then holds for all $\chi \geq 0$ that
	\begin{align}
		\Pr \left[ X \geq \left( \frac{k_\delta}{n} + \chi \right) s \right] \leq \Pr \left[ \tilde{X} \geq \left( \frac{k_\delta}{n} + \chi \right) s \right] \leq \exp \left( -2 \chi^2 s \right),
	\end{align}
	using tail bounds for the hypergeometric distribution.
	
	Similarly, here, $Y$ (conditioned on an upper bound $x$ for $X$) is upper-bounded in the usual stochastic order by an $(x, \delta)$-binomially distributed random variable $\tilde{Y}$. This also holds because of the perfect insensitivity of tests to $\Id$. Hoeffding's inequality yields
	\begin{align}
		\Pr \left[ \left. Y \geq w \; \right| \; X \leq x \right] \leq \Pr \left[ \tilde{Y} \geq w\right] \leq \exp \left( -2 \frac{(x\delta - w)^2}{x} \right).
	\end{align}
	
	We then conclude for $\chi \leq \frac{w}{s\delta} - \frac{k_\delta}{n}$ that
	\begin{align}
		\Pr \left[ Y \geq w \right] \leq \exp \left( -2 \chi^2 s \right) + \exp \left( -2 \frac{\left(\left( \frac{k_\delta}{n} + \chi \right) s\delta - w\right)^2}{\left( \frac{k_\delta}{n} + \chi \right) s} \right). \tag*{\qedhere}
	\end{align}
\end{proof}

\begin{figure}[tb]
  \centering
  \begin{picture}(0,0)\includegraphics{./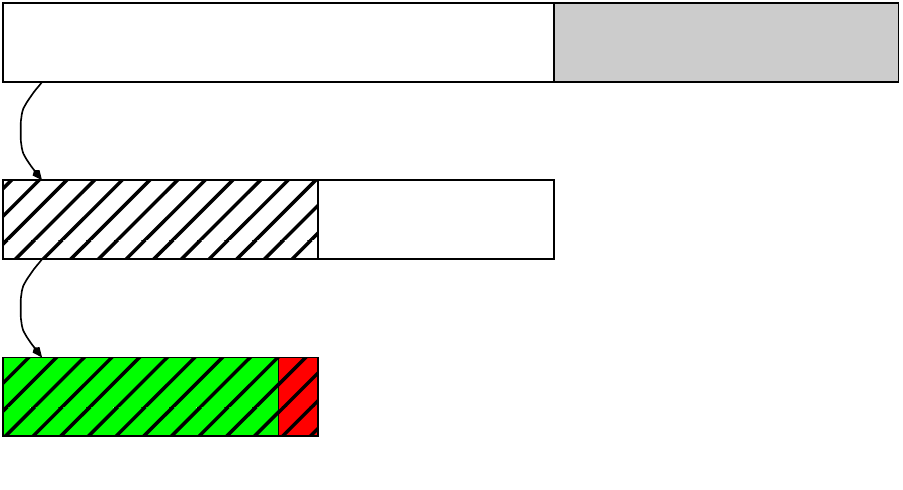}\end{picture}\setlength{\unitlength}{4144sp}\begingroup\makeatletter\ifx\SetFigFont\undefined \gdef\SetFigFont#1#2#3#4#5{\reset@font\fontsize{#1}{#2pt}\fontfamily{#3}\fontseries{#4}\fontshape{#5}\selectfont}\fi\endgroup \begin{picture}(4119,2236)(-11,-1385)
\put(181,299){\makebox(0,0)[lb]{\smash{{\SetFigFont{9}{10.8}{\rmdefault}{\mddefault}{\updefault}{\color[rgb]{0,0,0}$s$ test rounds}}}}}
\put(181,-511){\makebox(0,0)[lb]{\smash{{\SetFigFont{9}{10.8}{\rmdefault}{\mddefault}{\updefault}{\color[rgb]{0,0,0}More than $\frac{k_\epsilon s}{n}$  test rounds where $\cptp E \in \mathcal E_\epsilon$ w.h.p.}}}}}
\put(2701,299){\makebox(0,0)[lb]{\smash{{\SetFigFont{9}{10.8}{\rmdefault}{\mddefault}{\updefault}{\color[rgb]{0,0,0}$d$ computation rounds}}}}}
\put(181,-1321){\makebox(0,0)[lb]{\smash{{\SetFigFont{9}{10.8}{\rmdefault}{\mddefault}{\updefault}{\color[rgb]{0,0,0}More than $\frac{k_\epsilon s}{n}(1-\epsilon)$ failed test rounds w.h.p.}}}}}
\end{picture}   \caption{We consider $\cptp E \in \mathcal{M}_{\geq k_\epsilon}(\mathcal E_\epsilon)$ and its
    likely effect on the test rounds.  W.h.p, at least $k_\epsilon s / n$
    test rounds will be affected by an element of $\mathcal E_\epsilon$
    (hatched), among which a fraction $(1-\epsilon)$ will trigger a
    rejection by the decision function of individual test rounds
    (green). The undetected fraction is depicted in red. As a
    consequence, all $\cptp E \in \mathcal{M}_{\geq k_\epsilon}(\mathcal E_\epsilon)$
    will be detected w.h.p whenever $k_\epsilon > \frac{nw}
    {s(1-\epsilon)}$.}
    \label{fig:detection_amplification}
\end{figure}

\begin{figure}[tb]
  \centering
  \begin{picture}(0,0)\includegraphics{./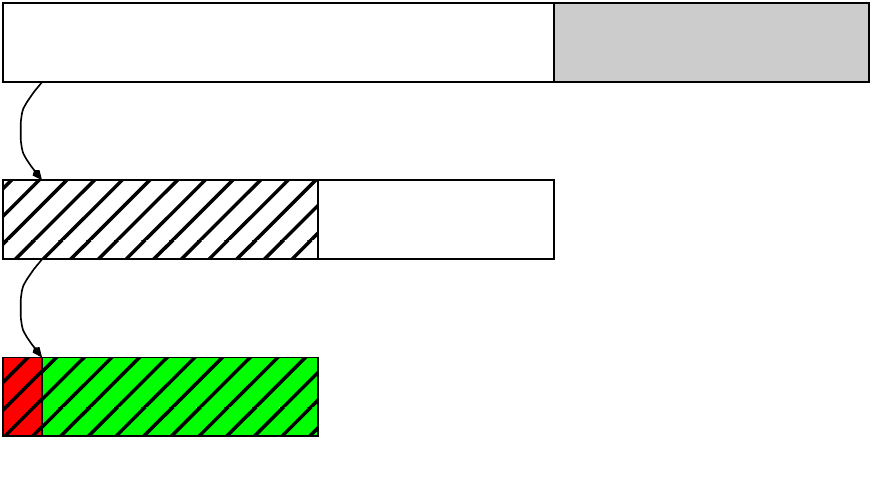}\end{picture}\setlength{\unitlength}{4144sp}\begingroup\makeatletter\ifx\SetFigFont\undefined \gdef\SetFigFont#1#2#3#4#5{\reset@font\fontsize{#1}{#2pt}\fontfamily{#3}\fontseries{#4}\fontshape{#5}\selectfont}\fi\endgroup \begin{picture}(3984,2236)(-11,-1385)
\put(181,299){\makebox(0,0)[lb]{\smash{{\SetFigFont{9}{10.8}{\rmdefault}{\mddefault}{\updefault}{\color[rgb]{0,0,0}$s$ test rounds}}}}}
\put(2701,299){\makebox(0,0)[lb]{\smash{{\SetFigFont{9}{10.8}{\rmdefault}{\mddefault}{\updefault}{\color[rgb]{0,0,0}$d$ computation rounds}}}}}
\put(181,-511){\makebox(0,0)[lb]{\smash{{\SetFigFont{9}{10.8}{\rmdefault}{\mddefault}{\updefault}{\color[rgb]{0,0,0}Less than $\frac{k_\delta s}{n}$  test rounds where $\cptp E\neq \Id$ w.h.p.}}}}}
\put(181,-1321){\makebox(0,0)[lb]{\smash{{\SetFigFont{9}{10.8}{\rmdefault}{\mddefault}{\updefault}{\color[rgb]{0,0,0}Less than $\frac{k_\delta s}{n}\delta$ rejected test rounds w.h.p.}}}}}
\end{picture}   \caption{We consider $\cptp E \in \mathcal{H}_{< k_\delta}(\mathcal E_\delta)$ and its
    effect on test rounds. W.h.p, at least $k_\delta s / n$ test rounds
    will be affected by a deviation in $\mathcal E_\delta$ (hatched) the
    rest being unaffected. Among the affected rounds at most a
    fraction $\delta$ will trigger the traps (red), while a fraction
    $1-\delta$ will be accepted (green). Hence, the probability of
    rejecting $\cptp E \in \mathcal{H}_{< k_\delta}(\mathcal E_\delta)$ is going
    to be exponentially small whenever $k_\delta < \frac{n w}{s\delta}$.}
  \label{fig:insensitivity_amplification}
\end{figure}

The consequence of the above theorem is that whenever the trappified schemes are constructed by interleaving computation rounds with test rounds chosen at random from a given set, the performance of the resulting protocol implementing SDQC crucially depends on the ability of these test rounds to detect harmful errors.
Therefore, when using the compiler, optimisation of the performance is achieved by focussing only on designing more efficient test rounds. This is addressed in Section \ref{sec:new_traps}.

\begin{remark}
Note that we do not make use in Definition~\ref{def:rvbqc_compiler} of the embedding function or computation class associated with the trappified scheme $\sch P$. In fact the initial scheme can even consist of pure traps as described in Remark~\ref{rem:pure_traps}. This is the case for the schemes described in the next sections.
If each trappified scheme used for tests can also embed the client's computation of interest, it is possible to use the alternative parallel repetition compiler presented in \ref{app:gen-comp} which has no separate computation rounds.
\end{remark}

\begin{remark}
The result presented here is close in spirit to the parallel repetition theorems from \cite{B21:secure,ACGH20:non}, which guarantee that if all tests of one kind pass, a malicious server cannot pass more than a constant fraction of another kind of tests with non negligible probability. In our case, the second kind of tests correspond to the computations of which the server cannot corrupt more than half with non-negligible probability. However, our result is stronger since we allow a constant fraction of tests to be wrong before aborting. This makes our construction robust to honest noise up to a certain global constant level, as shown in the next section.
\end{remark}

\subsection{Correctness Amplification for $\mathsf{BQP}$ via Majority Vote}
Theorem~\ref{thm:boost} has given detection and
insensitivity errors that are negligible $n$.
In order to recover exponential verifiability, we must now also make the
correctness error negligible in $n$. To this end, we 
recombine the multiple computation rounds into a single final result
so that error of weight lower than $k_\delta$ are corrected.

Whereas in the previous section we left the computation class undefined, 
here we restrict $\mathfrak C$ to be the class of $\mathsf{BQP}$
computations that can be implemented on $G$, and set $c$ to be the (constant)
probability of obtaining the incorrect result, with $c < 1/2$.
\footnote{Alternatively, if the size of the classical input $x$ to the computation is $|x|$,
then $c$ can chosen such that it is upper-bounded by $1/2 - 1/p(|x|)$ for any polynomial $p$.}

Then, we define $\sch V$ from the compiled $\sch P'$ by requiring
that the input subspace is symmetric with respect to exchanging
computation rounds -- i.e.~all computation rounds have the same
inputs -- and by defining the output subspace as the
bitwise majority vote of computation round outputs.

\begin{definition}[Amplified Trap Compiler for $\mathsf{BQP}$]
\label{def:rvbqc_bqp_compiler}
Let $\sch P$ be trappified scheme on a graph $G = (V, E)$, and let $d, s \in \mathbb{N}$, $n = d + s$ and $w \in [s]$. Let $\sch P'$ be the output of the Amplified Trap Compiler from Definition \ref{def:rvbqc_compiler} with parameters $n, d, s, w$.

Let $\mathfrak{C} = \mathsf{BQP} \cap \mathsf{MBQC}_{G, \preceq_{\sch P}}$ for the order $\preceq_{\sch P}$ induced by $\sch P$. The output of the Amplified Trap Compiler for $\mathsf{BQP}$ is a trappified pattern $\tilde{\sch P}$ for computations in $\mathfrak{C}$ which is equal to $\sch P'$ with the following additional constraints:
\begin{itemize}
\item The input subspace $\Pi_{I, C}$ is symmetric with respect to exchanging computation rounds.
\item The output subspace $\Pi_{O, C}$ is defined as the concatenation of the (classical) outputs of all computation rounds.
\item The decoding algorithm $\Deco$ is the bitwise majority vote of computation rounds outputs from the $d$ computations.
\end{itemize}
\end{definition}

Intuitively, if it is guaranteed that the fraction of all rounds
affected by a possibly harmful deviation is less than $(1 - 2c)/(2 - 2c)$
then the output of $\tilde{\sch P}$ will yield the correct result of the
computation. This is because, in the large $n$ limit, out of the $d$
computation rounds a fraction $c$ will be incorrect due to the
probabilistic nature of the computation itself. Consequently, to
maintain that more than $1/2$ the computation rounds yield the correct
result so that the majority vote is able to eliminate the spurious
results, the fraction $f$ of computation rounds that the deviation can
affect must satisfy $(1 - c)(1 - f) > 1/2$, that is 
$f < (1 - 2c)/(2 - 2c)$.
Due to the blindness of the scheme, it is enough to impose that no
more than a fraction $(1 - 2c)/(2 - 2c)$ of the $n$ rounds is affected by
the deviation to obtain the desired guarantee on the computation
rounds with high probability.

Recall that $\mathcal{M}_{\geq k}(\mathcal E) = \{\cptp E \in \mathcal P_{V^n} | \operatorname{wt}_{\mathcal E}(\cptp E) \geq k\}$
and $\Cdev{\cptp E} = \Deco\circ\Tr_{O_C^c}\circ\cptp E\circ (\TP)$.

\begin{theorem}[Exponential Correctness from Majority Vote]
\label{thm:exp-cor}
Let $\sch P$ be a trappified scheme on graph $G$. For $d, s \in \mathbb{N}$, $n = d + s$ and $w \in [s]$, let $\tilde{\sch P}$ be the trappified scheme obtained through the compiler of Definition~\ref{def:rvbqc_bqp_compiler} applied to $\sch P$ with parameters $n, d, s, w$.

Let $\mathcal{E}_\nu \subset \mathcal{P}_V$ be a set of Pauli deviations such that, for all computations $\cptp C \in \mathfrak{C} = \mathsf{BQP} \cap \mathsf{MBQC}_{G, \preceq_{\sch P}}$ the errors in $\mathcal{E}_\nu$ do not affect the computation, i.e.~$\Cdev{\cptp E} = \Cdev{\cptp \Id}$ for all $\cptp E \in \mathcal{E}_\nu$.
Let $c$ be the bounded error of $\mathsf{BQP}$ computations and $k_\nu < \frac{1 - 2c}{2 - 2c}n$.

Then $\tilde{\sch P}$ is $\nu'$-correct for computations in $\mathfrak{C}$ on $\mathcal{E}_\nu' = \mathcal{M}_{\geq n - k_\nu}(\mathcal E_\nu)$ with
\begin{equation}
\nu' = \min_{\chi \in \left[ 0, \frac{1 - 2c}{2 - 2c} - \frac{k_\nu}{n} \right]} \exp(-2\chi^2 d) + \exp(- 2\frac{\left(\left(1 - \frac{k_\nu}{n} - \chi\right)(1 - c) - \frac{1}{2}\right)^2}{\left(1 - \frac{k_\nu}{n} - \chi\right)}d).
\end{equation}
\end{theorem}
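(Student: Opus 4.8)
The plan is to reduce $\nu'$-correctness to a single bad event---the failure of the majority vote---and then to control its probability by a two-stage concentration argument mirroring the structure of the proof of Theorem~\ref{thm:boost}. Fix a deviation $\cptp E \in \mathcal E_\nu' = \mathcal{M}_{\geq n - k_\nu}(\mathcal E_\nu)$ and a computation $\cptp C \in \mathfrak C$. By definition of $\mathcal{M}_{\geq n - k_\nu}$, the number of copies of $G$ on which $\cptp E$ acts outside $\mathcal E_\nu$---call these the \emph{harmful} rounds---is at most $k_\nu$; on every other round the hypothesis $\Cdev{\cptp E} = \Cdev{\Id}$ guarantees that the computation is executed correctly. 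Whenever the decoded majority-vote output equals the correct result, $\Cdev{\cptp E}$ agrees with $\cptp C \otimes \Id_T$ on the relevant input subspace, so the trace distance in Definition~\ref{def:trap-cor} vanishes; otherwise it is at most $1$. Hence it suffices to show that the majority vote returns the correct answer except with probability $\nu'$, averaged over the random placement of the $s$ test rounds and the intrinsic randomness of the $\mathsf{BQP}$ computation rounds.

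First I would bound the number of harmful rounds that land among the $d$ computation rounds. Since the set $S$ of test positions is drawn uniformly among all size-$s$ subsets of $[n]$, the computation rounds form a uniformly random size-$d$ subset, so the count $Z$ of harmful computation rounds is dominated in the usual stochastic order by a hypergeometric variable $\tilde Z$ of parameters $(n, k_\nu, d)$. Setting $z = (k_\nu/n + \chi)d$ and applying the Serfling/Hoeffding tail bound for the hypergeometric distribution gives $\Pr[Z \geq z] \leq \Pr[\tilde Z \geq z] \leq \exp(-2\chi^2 d)$, which yields the first exponential term. Next I condition on $Z \leq z$, so that at least $d - z$ computation rounds are harmless and therefore reproduce exactly the correct $\mathsf{BQP}$ output distribution, which returns the correct answer with probability at least $1 - c$ independently across the distinct copies of $G$. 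Letting $W$ be the number of these good rounds that output the correct string, $W$ dominates a binomial variable of parameters $(d - z, 1 - c)$, and the majority vote succeeds as soon as $W > d/2$.

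Hoeffding's inequality for the binomial then bounds $\Pr[W \leq d/2 \mid Z \leq z]$ by $\exp\left(-2((d-z)(1-c) - d/2)^2/(d-z)\right)$; substituting $d - z = (1 - k_\nu/n - \chi)d$ produces the second exponential term. Combining the two bounds by the union bound and optimising over $\chi$ gives $\nu'$. The admissible range of $\chi$ is exactly the set where the binomial mean $(d-z)(1-c)$ exceeds $d/2$, i.e. $1 - k_\nu/n - \chi > 1/(2(1-c))$, which rearranges to $\chi \leq \frac{1-2c}{2-2c} - \frac{k_\nu}{n}$ and is non-empty precisely under the hypothesis $k_\nu < \frac{1-2c}{2-2c}n$.

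The main obstacle is justifying the two probabilistic reductions rigorously. The stochastic-dominance step requires that the blindness of the compiled scheme makes the placement of harmful rounds independent of which rounds the adversary chose to corrupt, so that the harmful rounds are a worst-case marked subset of size at most $k_\nu$ drawn against a uniform choice of computation positions. The binomial step requires that the harmless computation rounds behave as genuinely independent $\mathsf{BQP}$ executions; this relies on each round running its own copy of $G$ with fresh internal randomness and on the condition $\Cdev{\cptp E} = \Cdev{\Id}$ collapsing the deviation on those rounds to the identity at the level of the output distribution. A secondary subtlety is the interpretation of correctness against $\cptp C$: for $\mathsf{BQP}$ the relevant target is the deterministic correct answer, and the majority vote amplifies the per-round error $c$---this is precisely why the factor $(1-c)$ and the threshold $c < 1/2$ enter, and why a single round's intrinsic error does not by itself suffice.
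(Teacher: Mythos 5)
Your proof is correct and follows essentially the same route as the paper's: a hypergeometric tail bound on the number of harmful rounds landing among the $d$ computation rounds, followed by a conditional binomial/Hoeffding bound on the intrinsic $\mathsf{BQP}$ failures of the unaffected rounds, combined via a union bound and optimised over $\chi$. Counting correct outputs ($W > d/2$) rather than failures ($Z_1 + Z_2 < d/2$) and bypassing the paper's intermediate parameter $\xi$ are only cosmetic differences, and your algebra reproduces the stated $\nu'$ exactly.
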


\begin{proof}

We will compute the bound on the correctness for finite $n$. First, define
two random variables $Z_1$ and $Z_2$ that account for possible sources
of erroneous results for individual computation rounds. More
precisely, $Z_1$ is the number of computation rounds that are affected
by a deviation containing an $\Y$ or $\Z$ for one of the qubits in the
round. $Z_2$ is the number of computation rounds which give the wrong
outcome due to the probabilistic nature of the computation
itself -- i.e.~inherent failures for the computation in the honest and
noise free case. Given that $\sch V$ uses a majority vote to recombine
the results of each computation rounds, as long a $Z_1 + Z_2 < d/2$,
then the output result will be correct.

Our goal now is to show that the probability that $Z_1 + Z_2$ is
greater than $d/2$ can be made negligible. An intuitive depiction of the proof is presented in Figure~\ref{fig:correctness_amplification}. For any $z_1$ one has the
following
\begin{align}
\Pr[Z_1+Z_2 \geq \frac{d}{2}]
& = \Pr[Z_1+Z_2 \geq \frac{d}{2} | Z_1 \leq z_1] \Pr[Z_1 \leq z_1] \\
& \quad + \Pr[Z_1+Z_2 \geq \frac{d}{2} | Z_1 > z_1] \Pr[Z_1 > z_1].
\end{align}
Then
\begin{align}
\Pr[Z_1+Z_2 \geq \frac{d}{2}]& \leq \Pr[Z_1+Z_2 \geq \frac{d}{2} | Z_1 \leq z_1] + \Pr[Z_1 > z_1] \\
& \leq \Pr[Z_2 \geq \frac{d}{2} - z_1 | Z_1 \leq z_1] + \Pr[Z_1 > z_1] \\
& \leq \Pr[Z_2 \geq \frac{d}{2} - z_1 | Z_1 = z_1] + \Pr[Z_1 > z_1].
\end{align}

Now, consider a deviation in $\mathcal{E}_\nu' = \mathcal{M}_{\geq n - k_\nu}(\mathcal E_\nu)$.
If we note $\mathcal{F} = \mathcal{P}_V \setminus \mathcal E_\nu$ the set of deviation for
which we have no correctness guarantees, 
we can rewrite this set as $\mathcal{E}_\nu' = \mathcal{M}_{< k_\nu}(\mathcal F)$. Using the tail bound
for the hypergeometric distribution defined by choosing independently
at random and without replacement $d$ computation rounds out of a
total of $n$ rounds, $k_\nu$ of which at most are affected by a
deviation in $\mathcal{F}$, one finds that, for $z_1 = (k_\nu/n + \chi) d$ with
$\chi > 0$,
\begin{equation}
\Pr[Z_1 > \left(\frac{k_\nu}{n} + \chi\right) d] \leq \exp(-2\chi^2 d).
\end{equation}
Additionally, once $Z_1$ is fixed, $Z_2$ is binomially distributed
with probability $c$. Therefore, using tail bound for this
distribution, one has for $\xi > 0$
\begin{equation}
\Pr[Z_2 \geq (c + \xi)\left(1 - \frac{k_\nu}{n} - \chi\right) d | Z_1 = \left(\frac{k_\nu}{n}+\chi\right) d] \leq \exp(- 2\left(1-\frac{k_\nu}{n} - \chi\right)d \xi^2).
\end{equation}

In the bounds above, we can set
\begin{equation}
\label{eq:corr_amp_cond}
\frac{d}{2} - \left(\frac{k_\nu}{n} + \chi\right) d = (c + \xi)\left(1 - \frac{k_\nu}{n} - \chi\right) d.
\end{equation}
This equation has solutions for $\chi, \xi > 0$ when $k_\nu/n < (1 - 2c)/(2 - 2c)$. 
Using these inequalities, we obtain that
\begin{equation}
\Pr[Z_1 + Z_2 \geq \frac{d}{2}] \leq \exp(-2\chi^2 d) + \exp(- 2\left(1- \frac{k_\nu}{n} - \chi\right)d \xi^2).
\end{equation}
Using Equation \ref{eq:corr_amp_cond} to express $\xi$ as a function of $\chi, k_\nu$ we get
\begin{equation}
\xi = \frac{2\left(1 - \frac{k_\nu}{n} - \chi\right)(1 - c) - 1}{2\left(1 - \frac{k_\nu}{n} - \chi\right)}.
\end{equation}
Substituting $\xi$ into the bound above we get
\begin{equation}
\Pr[Z_1 + Z_2 \geq \frac{d}{2}] \leq \exp(-2\chi^2 d) + \exp(- 2 \frac{\left(2\left(1 - \frac{k_\nu}{n} - \chi\right)(1 - c) - \frac{1}{2}\right)^2}{\left(1 - \frac{k_\nu}{n} - \chi\right)}d).
\end{equation}
Minimising over $0 < \chi < \frac{1 - 2c}{2 - 2c} - \frac{k_\nu}{n}$ yields the desired result.
\end{proof}

\begin{figure}[tb]
  \centering
  \begin{picture}(0,0)\includegraphics{./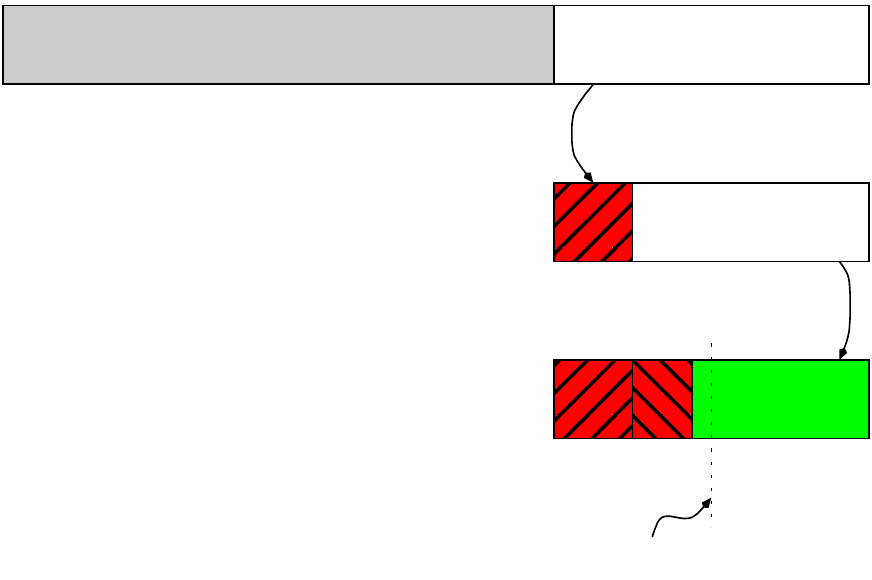}\end{picture}\setlength{\unitlength}{4144sp}\begingroup\makeatletter\ifx\SetFigFont\undefined \gdef\SetFigFont#1#2#3#4#5{\reset@font\fontsize{#1}{#2pt}\fontfamily{#3}\fontseries{#4}\fontshape{#5}\selectfont}\fi\endgroup \begin{picture}(3984,2637)(-11,-1786)
\put(181,299){\makebox(0,0)[lb]{\smash{{\SetFigFont{9}{10.8}{\rmdefault}{\mddefault}{\updefault}{\color[rgb]{0,0,0}$s$ test rounds}}}}}
\put(2701,299){\makebox(0,0)[lb]{\smash{{\SetFigFont{9}{10.8}{\rmdefault}{\mddefault}{\updefault}{\color[rgb]{0,0,0}$d$ computation rounds}}}}}
\put(181,-511){\makebox(0,0)[lb]{\smash{{\SetFigFont{9}{10.8}{\rmdefault}{\mddefault}{\updefault}{\color[rgb]{0,0,0}Less than $\frac{k_\nu}{n}d$ computation rounds where $\cptp E \in \mathcal F$ w.h.p.}}}}}
\put(181,-1321){\makebox(0,0)[lb]{\smash{{\SetFigFont{9}{10.8}{\rmdefault}{\mddefault}{\updefault}{\color[rgb]{0,0,0}Less than $(1-\frac{k_\nu}{n})dc$ unaffected computation rounds with wrong output w.h.p.}}}}}
\put(1711,-1726){\makebox(0,0)[lb]{\smash{{\SetFigFont{9}{10.8}{\rmdefault}{\mddefault}{\updefault}{\color[rgb]{0,0,0}$\frac{d}{2}$ limit for producing a correct result with majority vote}}}}}
\end{picture}   \caption{Here, we consider the effect of a deviation $\cptp E \in
   \mathcal{M}_{\geq n - k_\nu}(\mathcal E_\nu)$ on computation rounds, together
    with the inherent probability of failure on the ability to still
    produce a correct output after the majority vote.  W.h.p, no more
    than $\frac{k_\nu}{n}d$ computation rounds are affected by a deviation
    that is not in $\mathcal E_\nu$ (clear area on the second line). The
    affected rounds are the red right-hatched zone on the second
    line. On the remaining non affected rounds, with high probability,
    no more than a fraction $c$ will be outputting the wrong result
    (red left-hatched). Hence, the total number of wrong output will
    be below $d/2$ w.h.p. when $k_\nu/n < (1 - 2c)/(2 - 2c)$.}
  \label{fig:correctness_amplification}
\end{figure}

\subsection{Putting the Pieces Together}

To conclude this section, we combine Theorems \ref{thm:boost} and
\ref{thm:exp-cor} to obtain simultaneous negligibility for detection,
insensitivity and correctness errors
for sets which are compatible with the requirements of Theorems \ref{thm:verif} and \ref{thm:robust_verif}.
Recall that $\mathfrak{C} = \mathsf{BQP} \cap \mathsf{MBQC}_{G, \preceq_{\sch P}}$.

\begin{theorem}[Amplified Security and Robustness]
\label{thm:comb}
  Let $\sch P$ be a trappified scheme on graph $G$ with measurement order $\preceq_{\sch P}$.
  For $d, s \in \mathbb{N}$, $n = d + s$ and $w \in [s]$, let $\tilde{\sch P}$ be the trappified scheme obtained through the compiler of Definition~\ref{def:rvbqc_bqp_compiler} applied to $\sch P$ with parameters $n, d, s, w$.
  
\noindent Let $\mathcal E_\epsilon, \mathcal E_\delta, \mathcal{E}_\nu$ be subsets of Pauli deviations such that:
  \begin{itemize}
  \item $\mathcal{E}_\nu \subseteq \{\cptp E \in \mathcal{P}_V \mid \forall \cptp C \in \mathfrak{C}, \forall T \in \sch P, \Cdev{\cptp E} = \Cdev{\cptp \Id}\}$;
  \item $\mathcal{P}_V \setminus \mathcal E_\epsilon \subset \mathcal{E}_\nu$;
  \item $\sch P$ $\epsilon$-detects $\mathcal E_\epsilon$, is $\delta$-insensitive to $\mathcal E_\delta$ and perfectly insensitive to $\Id$;
  \item The honest Server's noise is modelled by sampling for each computation or test round an error $\cptp E \in \mathcal E_\delta$ with probability $p_\delta$ and $\cptp E = \Id$ with probability $1 - p_\delta$. 
  \end{itemize}
  Let $k_\epsilon, k_\delta, k_\nu \in \mathbb{N}$ and $c$ is the bounded error of $\mathsf{BQP}$. If the following conditions are satisfied:
  \begin{itemize}
  \item $\frac{w}{s(1 - \epsilon)} < \frac{k_\epsilon}{n} \leq \frac{k_\nu}{n} < \frac{1 - 2c}{2 - 2c}$;
  \item $p_\delta < \frac{k_\delta}{n} < \frac{w}{s\delta}$;
  \item $k_\delta \leq k_\nu$;
  \end{itemize}
  then Protocol \ref{proto:dev_detect} for computing CPTP maps $\cptp C$ in $\mathfrak C$ using $\tilde{\sch P}$ is $p_\delta' + \delta' + \nu'$-correct and $\max(\epsilon', \nu')$-secure in the Abstract Cryptography framework for
  \begin{align}
  	\epsilon' &= \min_{\chi \in \left[ 0, \frac{k_\epsilon}{n} - \frac{w}{s(1-\epsilon)} \right]} \exp \left( -2 \chi^2 s \right) + \exp \left( -2 \frac{\left(\left( \frac{k_\epsilon}{n} - \chi \right)(1-\epsilon) - \frac{w}{s}\right)^2}{\left( \frac{k_\epsilon}{n} - \chi \right)} s\right),\\
  	\delta' &= \min_{\chi \in \left[ 0, \frac{w}{s\delta} - \frac{k_\delta}{n} \right]} \exp \left( -2 \chi^2 s \right) + \exp \left( -2 \frac{\left(\left( \frac{k_\delta}{n} + \chi \right)\delta - \frac{w}{s}\right)^2}{\left( \frac{k_\delta}{n} + \chi \right)} s\right),\\
  	\nu' &= \min_{\chi \in \left[ 0, \frac{1 - 2c}{2 - 2c} - \frac{k_\nu}{n} \right]} \exp(-2\chi^2 d) + \exp(- 2\frac{\left(\left(1 - \frac{k_\nu}{n} - \chi\right)(1 - c) - \frac{1}{2}\right)^2}{\left(1 - \frac{k_\nu}{n} - \chi\right)}d),\\
  	p_\delta' &= \exp\left(- 2\left(p_\delta - \frac{k_\delta}{n}\right)^2 n\right).
  \end{align}
\end{theorem}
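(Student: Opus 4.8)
The plan is to assemble this statement entirely from the three amplification results established in this section, namely Theorems~\ref{thm:boost} and~\ref{thm:exp-cor}, fed into the generic security and robustness Theorems~\ref{thm:verif} and~\ref{thm:robust_verif}. The only genuinely new work lies in checking that the combinatorial deviation sets produced by the compilers nest correctly so that the hypotheses of Theorem~\ref{thm:verif} are met, and in controlling the honest-but-noisy error distribution.

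First I would observe that the BQP compiler of Definition~\ref{def:rvbqc_bqp_compiler} modifies $\sch P'$ only through its input subspace, output subspace and decoding map, all of which act on the computation rounds and leave the test rounds untouched. Hence the detection and insensitivity properties of $\tilde{\sch P}$ coincide with those of $\sch P'$, and Theorem~\ref{thm:boost} applies verbatim: under $\frac{w}{s(1-\epsilon)} < \frac{k_\epsilon}{n}$ and $\frac{k_\delta}{n} < \frac{w}{s\delta}$ it yields that $\tilde{\sch P}$ $\epsilon'$-detects $\mathcal{E}_\epsilon' = \mathcal{M}_{\geq k_\epsilon}(\mathcal E_\epsilon)$ and is $\delta'$-insensitive to $\mathcal{E}_\delta' = \mathcal{H}_{<k_\delta}(\mathcal E_\delta)$, with the stated $\epsilon', \delta'$. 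Theorem~\ref{thm:exp-cor}, whose hypothesis $k_\nu < \frac{1-2c}{2-2c}n$ is exactly the remaining inequality, then gives that $\tilde{\sch P}$ is $\nu'$-correct on $\mathcal{E}_\nu' = \mathcal{M}_{\geq n-k_\nu}(\mathcal E_\nu)$. Since $\tilde{\sch P}$ inherits perfect insensitivity to the global identity (no test fails on $\Id$), it is in particular insensitive to $\{\Id\}$.

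The crux is then verifying the two set inclusions required to invoke Theorem~\ref{thm:verif} at the level of $\mathcal{P}_{V^n}$. Writing $\mathcal F = \mathcal{P}_V \setminus \mathcal E_\nu$, I would use the identity $\mathcal{E}_\nu' = \mathcal{M}_{\geq n-k_\nu}(\mathcal E_\nu) = \mathcal{M}_{<k_\nu}(\mathcal F)$ already noted in the proof of Theorem~\ref{thm:exp-cor}. From the hypothesis $\mathcal{P}_V\setminus\mathcal E_\epsilon \subset \mathcal E_\nu$ we get $\mathcal F \subseteq \mathcal E_\epsilon$, so any copy on which a global deviation acts with an element of $\mathcal F$ also counts towards $\operatorname{wt}_{\mathcal E_\epsilon}$; consequently any $\cptp E \notin \mathcal{E}_\epsilon'$ satisfies $\operatorname{wt}_{\mathcal F}(\cptp E) \leq \operatorname{wt}_{\mathcal E_\epsilon}(\cptp E) < k_\epsilon \leq k_\nu$, i.e.\ $\cptp E \in \mathcal{E}_\nu'$, giving $\mathcal{P}_{V^n}\setminus\mathcal{E}_\epsilon' \subseteq \mathcal{E}_\nu'$; and the global identity lies in $\mathcal{E}_\nu'$ since $\Id \in \mathcal E_\nu$. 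These are precisely the premises of Theorem~\ref{thm:verif}, which then yields $(\delta'+\nu')$-correctness and $\max(\hat{\epsilon}',\nu')\leq\max(\epsilon',\nu')$-security for Protocol~\ref{proto:dev_detect} using $\tilde{\sch P}$.

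Finally, for robustness I would replace the idealised correctness error by the honest-but-noisy one through Theorem~\ref{thm:robust_verif}. The per-round noise makes the number $N$ of rounds carrying a non-trivial $\mathcal E_\delta$-error a binomial variable with mean $p_\delta n$; a Hoeffding tail bound, valid because $p_\delta < \frac{k_\delta}{n}$, gives $\Pr[N \geq k_\delta] \leq \exp(-2(p_\delta - k_\delta/n)^2 n) = p_\delta'$, so with probability at least $1-p_\delta'$ the global error lies in $\mathcal{E}_\delta' = \mathcal{H}_{<k_\delta}(\mathcal E_\delta)$. Because every non-identity factor of such an error lies in $\mathcal E_\delta$ while the identity factors never contribute to $\operatorname{wt}_{\mathcal F}$ (as $\Id \in \mathcal E_\nu$ forces $\Id \notin \mathcal F$), one gets $\operatorname{wt}_{\mathcal F}(\cptp E) \leq \operatorname{wt}_{\mathcal E_\delta}(\cptp E) < k_\delta \leq k_\nu$, hence $\mathcal{E}_\delta' \subseteq \mathcal{E}_\nu'$; Theorem~\ref{thm:robust_verif} then upgrades the correctness error to $p_\delta' + \delta' + \nu'$. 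The main obstacle is purely bookkeeping: keeping the three weight thresholds $k_\epsilon, k_\delta, k_\nu$ consistently ordered so that all three amplified sets interlock as demanded by Theorems~\ref{thm:verif} and~\ref{thm:robust_verif}; the analytic content (the Hoeffding and hypergeometric tail bounds) has already been discharged inside Theorems~\ref{thm:boost} and~\ref{thm:exp-cor}.
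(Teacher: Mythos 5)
Your proposal is correct and takes essentially the same route as the paper: it invokes Theorems~\ref{thm:boost} and~\ref{thm:exp-cor} to amplify detection, insensitivity and correctness, verifies the nesting of the compiled deviation sets (your weight-based inequalities $\operatorname{wt}_{\mathcal F}(\cptp E) \leq \operatorname{wt}_{\mathcal E_\epsilon}(\cptp E) < k_\epsilon \leq k_\nu$ and $\operatorname{wt}_{\mathcal F}(\cptp E) \leq \operatorname{wt}_{\mathcal E_\delta}(\cptp E) < k_\delta \leq k_\nu$ encode exactly the paper's chains of $\mathcal{M}$-set inclusions), feeds the result into Theorems~\ref{thm:verif} and~\ref{thm:robust_verif}, and bounds $p_\delta'$ by Hoeffding's inequality on the binomial count of noisy rounds. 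The only item you skip is the explicit check that $\tilde{\sch P}$ has a \emph{proper} embedding as required by Theorem~\ref{thm:verif}, which the paper dispatches in one sentence by noting that computation and test rounds occupy disjoint copies of $G$ with no connecting edges, so no flow dependencies or information can pass between them.
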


Hence, as long as $p_\delta\delta < \frac{1 - 2c}{2 - 2c}(1 - \epsilon)$, it is possible to choose parameters $s, d, w$ which scale linearly with $n$ and yield correctness and security bounds which are negligible in $n$.

\begin{proof}
On one hand we have the following prerequisites for applying Theorems~\ref{thm:boost} and \ref{thm:exp-cor}:
\begin{enumerate}
\item $\sch P$ $\epsilon$-detects $\mathcal{E}_\epsilon$, is $\delta$-insensitive to $\mathcal{E}_\delta$ and perfectly insensitive to at least $\{\Id\}$;
\item $\mathcal{E}_\nu \subset \{\cptp E \in \mathcal{P}_V | \forall \cptp C \in \mathfrak{C}, \Cdev{\cptp E} = \Cdev{\cptp \Id}\}$;
\item $k_\epsilon > nw/(s(1-\epsilon))$, $k_\delta < nw/(s\delta)$ and $k_\nu < \frac{1 - 2c}{2 - 2c}n$, for $c$ the $\mathsf{BQP}$ error.
\end{enumerate}
We have assumed that these three conditions are satisfied, therefore the compiled scheme $\tilde{\sch P}$:
\begin{enumerate}
\item $\epsilon'$-detects $\mathcal{E}_\epsilon' = \mathcal{M}_{\geq k_\epsilon}(\mathcal E_\epsilon)$;
\item is $\delta'$-insensitive to $\mathcal{E}_\delta' = \mathcal{H}_{< k_\delta}(\mathcal E_\delta)$;
\item is $\nu'$-correct for computations in $\mathfrak{C}$ on $\mathcal{E}_\nu' = \mathcal{M}_{\geq n - k_\nu}(\mathcal E_\nu)$.
\end{enumerate}

On the other hand, the application of Theorems \ref{thm:verif} and \ref{thm:robust_verif} require the following conditions on $\tilde{\sch P}$:
\begin{enumerate}
\item The embedding is proper, as specified in Definition \ref{def:prop-embed};
\item $\Id^{\otimes n} \in \mathcal E_\delta'$ and $\Id^{\otimes n} \in \mathcal{E}_\nu'$, where $\Id^{\otimes}$ acts as $\Id$ on all $n$ graphs;
\item $\mathcal P_{V^n} \setminus \mathcal E_\epsilon' \subset \mathcal E_\nu'$;
\item $\mathcal{E}_\delta' \subset \mathcal{E}_\nu'$;
\item The global noise model over all $n$ graphs of the Server samples an error $\cptp E \in \mathcal E_\delta'$ with probability $(1-p_\delta')$ and $\cptp E \in \mathcal P_{V^n} \setminus \mathcal E_\delta'$ with probability $p_\delta'$.
\end{enumerate}
If these conditions are satisfied, then Protocol~\ref{proto:dev_detect} using $\tilde{\sch P}$ is $p_\delta' + \delta' + \nu'$-correct and $\max(\epsilon', \nu')$-secure in the Abstract Cryptography framework. We now show that these four conditions are satisfied in order.

Since the graph common to all trappified canvases in $\tilde{\sch P}$ are has no edges between computation and test rounds, no information from one round can leak into the others and the embedding is necessarily proper.

We have $\mathcal{H}_{< k_1}(\mathcal F) \subset \mathcal{H}_{< k_2}(\mathcal F)$ if $k_1 < k_2$, therefore since $0 < k_\delta$:
\begin{equation}
\{\Id^{\otimes n}\} = \mathcal{H}_{< 1}(\mathcal E_\delta) \subset \mathcal{H}_{< k_\delta}(\mathcal E_\delta) = \mathcal{E}_\delta'.
\end{equation}

By definition of $\mathcal{E}_\nu$ we have that $\Id \in \mathcal{E}_\nu$. Then since $\mathcal{E}_\nu' = \mathcal{M}_{\geq n - k_\nu}(\mathcal E_\nu)$, we have as well that $\Id^{\otimes n} \in \mathcal{E}_\nu'$.

Then we must show that $\mathcal P_{V^n} \setminus \mathcal E_\epsilon' \subset \mathcal E_\nu'$. We have that
\begin{equation}
\mathcal P_{V^n} \setminus \mathcal E_\epsilon' = \mathcal P_{V^n} \setminus \mathcal{M}_{\geq k_\epsilon}(\mathcal E_\epsilon) = \mathcal{M}_{< k_\epsilon}(\mathcal E_\epsilon) = \mathcal{M}_{\geq n - k_\epsilon}(\mathcal{P}_V \setminus \mathcal E_\epsilon).
\end{equation}
We have $\mathcal{M}_{\geq k_1}(\mathcal F) \subset \mathcal{M}_{\geq k_2}(\mathcal F)$ if $k_1 \geq k_2$. Since we assumed that $k_\epsilon \leq  k_\nu$ then
\begin{equation}
\mathcal P_{V^n} \setminus \mathcal E_\epsilon' \subset \mathcal{M}_{\geq n - k_\nu}(\mathcal{P}_V \setminus \mathcal E_\epsilon).
\end{equation}
Finally, we have assumed that $\mathcal{P}_V \setminus \mathcal E_\epsilon \subset \mathcal E_\nu$ and we have that $\mathcal{M}_{\geq k}(\mathcal F) \subset \mathcal{M}_{\geq k}(\mathcal G)$ if $\mathcal{F} \subset \mathcal{G}$, meaning that
\begin{equation}
\mathcal P_{V^n} \setminus \mathcal E_\epsilon' \subset \mathcal{M}_{\geq n - k_\nu}(\mathcal E_\nu) = \mathcal E_\nu'.
\end{equation}

We then show that $\mathcal{E}_\delta' \subset \mathcal{E}_\nu'$. Since $k_\delta \leq k_\nu$ and $\Id \in \mathcal E_\nu$, we have
\begin{equation}
\mathcal{E}_\delta' = \mathcal{H}_{< k_\delta}(\mathcal E_\delta) \subset \mathcal{M}_{\geq n - k_\delta}(\{\Id\}) \subset \mathcal{M}_{\geq n - k_\nu}(\{\Id\}) \subset \mathcal{M}_{\geq n - k_\nu}(\mathcal E_\nu) = \mathcal E_\nu'.
\end{equation}

Finally, we have to bound the probability $p_\delta'$ that the honest-but-noisy Server's error is not in $\mathcal{E}_\delta'$. Let $X$ be a random variable counting the number of rounds in which the error comes from $\mathcal E_\delta$. We then have $p_\delta' = \Pr[X \geq k_\delta]$. We know that the honest Server's noise is sampled independently each round from $\mathcal E_\delta$ with probability $p_\delta$ and is equal to $\Id$ with probability $1 - p_\delta$. $X$ then follows a binomial distribution of parameter $(n, p_\delta)$. Since we assumed that $k_\delta > p_\delta n$, we can bound $p_\delta'$ with Hoeffding's inequality for binomial distributions:
\begin{equation}
p_\delta' = \Pr[X \geq k_\delta] \leq \exp\left(- 2 \left(p_\delta - \frac{k_\delta}{n}\right)^2 n\right).
\end{equation}
This in turn completes the proof.
\end{proof}

\section{New Optimised Trappified Schemes from Stabiliser Testing}
\label{sec:new_traps}

In this section we demonstrate how the various tools and
techniques introduced earlier can be combined to design trappified
schemes that provide efficient and robust verifiability. To achieve
this, we use Remark~\ref{rem:tr-can-sch} and Lemma \ref{lem:compos-sch} to
construct a trappified scheme $\sch T$ based on stabiliser testing 
with a constant detection error.
Here we again focus on classical-input
classical-output computations.
Theorem~\ref{thm:comb} show that it is 
sufficient in this case to focus on designing test rounds, with the 
compiler from Definition \ref{def:rvbqc_bqp_compiler} then 
boosting the detection, insensitivity and correctness.

In the process, we show a close correspondence between prepare-and-send
protocols derived from \cite{FK17:unconditionally}, and protocols based on stabiliser tests
following \cite{M13:interactive}. This broadens noticeably the possibilities for designing new types of 
trappified patterns beyond those which are used by existing prepare-and-send protocols. 
It also allows to transfer existing protocols based on stabiliser testing from the 
non-communicating multi-server setting to the prepare-and-send model, thus lowering 
the assumptions of these protocols and making them more readily implementable and practical.
We show in later subsections how to use the compiler results together with these 
new possibilities to optimise the current state-of-the-art protocol of \cite{LMKO21:verifying}.

\subsection{Trappified Schemes from Subset Stabiliser Testing}

Given $G=(V,E)$ and a partial order $\preceq_G$ on $V$, the first step for constructing a verification protocol for computations on $G$ is to detect deviations from the server.
To this end, we recall that any action from the server can be always be viewed as first performing the unitary part of Protocol~\ref{proto:ubqc} followed by a pure deviation that is independent from the computation delegated to the server (see Section~\ref{sec:verif}).
To be constructive and build traps that can be easily computed and checked by the client, we impose in this section that the outcomes of trappified canvases are deterministic and that they accept with probability $1$ for honest executions of the protocol. This means that the condition that all tests will be perfectly insensitive to $\Id$, as required for Theorem~\ref{thm:comb}.

We first focus on the simplest case of deterministic functions,
where the decision algorithm $\tau$ for the trappified canvas
is such that $\tau(t) = t_i$ where $t_i$ is
measurement outcome of qubit $i$. In other words the test
round accepts if the outcome $t_i = 0$, which corresponds to obtaining
outcome $\ket 0$ for qubit $i$, while all other measurements
outcomes $t_j$ for $j \neq i$ are ignored.\footnote{Recall that throughout
the paper, our convention is to view rotated $\{\ket{\pm_\theta}\}$
measurements as $\Z$ rotations followed by a Hadamard gate and a
measurement in the computational basis.}

For the outcome of the trappified canvas to be deterministic, qubit $i$ must be equal to
$\ket 0$ in absence of deviations before the computational basis
measurement. In other words, the state of $i$ is an eigenstate
of $\Z_i$. By commuting $\Z_i$
towards the initialisation of the qubits -- through the Hadamard gate and the entangling
operations defined by the graph $G$, 
we conclude that determinism and acceptance of
deviation-less test rounds implies that the initial state of the
qubits before running the protocol is an eigenstate of
$\X_i \prod_{j \in N_G(i)} \Z_j = \cptp S^{\mathtt{i}}$.\footnote{We use the upper index to avoid confusion between the operator applied to qubit $i$ and the stabiliser associated to qubit $i$.}

The following lemma explains how to prepare a single-qubit tensor
product state stabilised by such given Pauli operator.
\begin{lemma}[Tensor Product Preparation of a State in a Stabiliser Subspace]\label{lem:prep}
 Let $\cptp P$ be an element of the Pauli group over $N$ qubits, such
 that $\cptp P^2 \neq -\Id$. Then, there exists $\ket \psi
 = \bigotimes_{i=1}^N \ket{\psi_i}$ such that $\ket \psi = \cptp
 P \ket \psi$, and $\forall
 i, \ket{\psi_i} \in \{\ket{0}, \ket{+}, \ket{+_{\pi/2}}\}$.
\end{lemma}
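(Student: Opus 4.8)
The plan is to read off a stabilised product state qubit by qubit, exploiting the observation that the three admissible single-qubit states are exactly the $+1$-eigenstates of the three single-qubit Paulis: $\Z\ket0=\ket0$, $\X\ket+=\ket+$, and, since $\ket{+_{\pi/2}}=\tfrac1{\sqrt2}(\ket0+i\ket1)$, also $\Y\ket{+_{\pi/2}}=\ket{+_{\pi/2}}$. Hence, if $\cptp P$ factorises as a tensor product of single-qubit Paulis with trivial global phase, assigning to each qubit the eigenstate matching its Pauli factor immediately produces a product $+1$-eigenstate.

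First I would bring $\cptp P$ into its canonical form $\cptp P = c\,\bigotimes_{i=1}^N\sigma_i$ with $\sigma_i\in\{\Id,\X,\Y,\Z\}$ and global phase $c\in\{\pm1,\pm i\}$, as holds for every element of $\mathcal P_N$ in the convention $\mathcal P_1=\langle\X,\Y,\Z\rangle$. Because $\sigma_i^2=\Id$ for each $i$, one gets $\cptp P^2=c^2\Id$, so the hypothesis $\cptp P^2\neq-\Id$ is precisely $c^2\neq-1$, i.e.~$c\in\{+1,-1\}$. In particular $\cptp P$ is Hermitian with $\cptp P^2=\Id$, so its spectrum lies in $\{+1,-1\}$ and asking for a $+1$-eigenstate is meaningful. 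I would emphasise that the phase reduction is what forces $c$ into $\{+1,-1\}$ and rules out the non-Hermitian case $c=\pm i$.

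For the positive case $c=+1$ I would set, for each $i$, $\ket{\psi_i}=\ket0$ when $\sigma_i\in\{\Id,\Z\}$, $\ket{\psi_i}=\ket+$ when $\sigma_i=\X$, and $\ket{\psi_i}=\ket{+_{\pi/2}}$ when $\sigma_i=\Y$; every such choice lies in $\{\ket0,\ket+,\ket{+_{\pi/2}}\}$ and satisfies $\sigma_i\ket{\psi_i}=\ket{\psi_i}$ (the $\sigma_i=\Id$ factors contributing trivially). Tensoring gives $\cptp P\bigotimes_i\ket{\psi_i}=\bigotimes_i\sigma_i\ket{\psi_i}=\bigotimes_i\ket{\psi_i}$, the required state. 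This already covers the stabiliser generators $\X_i\prod_{j\in N_G(i)}\Z_j$ used throughout the section, since each of these carries global phase $+1$.

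The hard part will be the case $c=-1$, which the hypothesis $\cptp P^2\neq-\Id$ also permits. Any product of the three admissible states is automatically a $+1$-eigenstate of the bare Pauli $\bigotimes_i\sigma_i$, hence a $-1$-eigenstate of $\cptp P=-\bigotimes_i\sigma_i$; to land in the $+1$-eigenspace of $\cptp P$ one must flip exactly one single-qubit eigenvalue. Such an index $i_0$ with $\sigma_{i_0}\neq\Id$ exists precisely because $\cptp P\neq-\Id$ — the operator $-\Id$ being the unique Hermitian Pauli whose $+1$-eigenspace is empty, and which therefore stabilises no state. The obstruction I would flag is that performing this flip replaces $\ket{\psi_{i_0}}$ by a $-1$-eigenstate of $\sigma_{i_0}$ (one of $\ket1$, $\ket-$, or $\tfrac1{\sqrt2}(\ket0-i\ket1)$), none of which belongs to $\{\ket0,\ket+,\ket{+_{\pi/2}}\}$. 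Consequently the three-state statement as worded is exactly the positive-phase regime $c=+1$ — the only regime arising in the paper — whereas the fully general Hermitian case is closed by the identical qubit-by-qubit recipe at the cost of a mild one-state enlargement of the admissible single-qubit alphabet on the single qubit $i_0$.
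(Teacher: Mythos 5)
Your proof is correct and follows essentially the same route as the paper's: both reduce $\cptp P$ to the canonical form $s\bigotimes_i \cptp P(i)$ with $s=\pm 1$ (the hypothesis $\cptp P^2\neq-\Id$ ruling out $s=\pm i$) and then build the stabilised state qubit by qubit; the paper packages this by absorbing the sign into the first non-identity factor $s\,\cptp P(i_0)$ and taking common $+1$-eigenstates of the resulting local generators. Your extra observation is also right, and it pinpoints a genuine imprecision in the paper's own argument: when $s=-1$, the local generator $s\,\cptp P(i_0)$ is a negatively-signed single-qubit Pauli whose $+1$-eigenstate is $\ket{1}$, $\ket{-}$ or $\tfrac{1}{\sqrt2}(\ket{0}-i\ket{1})$, none of which lies in the advertised set $\{\ket0,\ket+,\ket{+_{\pi/2}}\}$, and $\cptp P=-\Id$ (not excluded by the hypothesis) stabilises nothing at all — there the paper's index $i_0$ does not even exist. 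So the three-state claim is exactly the $s=+1$ regime, and the general Hermitian case needs the six Pauli eigenstates, as you say. One correction to your closing remark, though: the negative-sign case does arise in the paper. The generators $\cptp S^{\mathtt{i}}$ are positively signed, so standard traps are safe, but the general traps of Section~\ref{subsec:gen-trap} use products $\prod_{i\in H}\cptp S^{\mathtt{i}}$, which can carry sign $-1$; for instance, on a path $1$--$2$--$3$ one computes $\cptp S^{\mathtt{1}}\cptp S^{\mathtt{2}}\cptp S^{\mathtt{3}} = -\Y_1\X_2\Y_3$. Hence the one-state enlargement you describe is actually needed in the paper's applications, not merely a hypothetical completion.
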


\begin{proof}
 Without loss of generality, one can write $\cptp P =
 s \bigotimes_i \cptp P(i)$ with $s= \pm 1$ and where $\cptp
 P(i) \in \{\Id, \X, \Y, \Z\}$ is the restriction of $\cptp P$ to
 qubit $i$. Then by construction, $\cptp P \in \langle S \rangle$,
 where $\langle \mathcal S \rangle$ denotes the multiplicative group generated
 by the set $\mathcal S = \{s\cptp P(i_0) \bigotimes_{j\neq
 i_0} \Id\} \cup \{\cptp P(i) \bigotimes_{j\neq i} \Id \}_{i\neq
 i_0}$, where $i_0$ is the smallest index $i$ for which
 $\cptp P(i) \neq \Id$. Now, consider the state that is obtained by taking
 the tensor product of single qubit states that are the common $+1$
 eigenstates of the operators in set $\mathcal S$. The above shows that it is
 a $+1$ eigenstate of all operators in $\langle \mathcal S \rangle$, and in
 particular of $\cptp P$, which concludes the proof as eigenstates of
 single-qubit Pauli operators are precisely the desired set.
\end{proof}
One can further note that the above lemma also holds for a set $\mathcal R$ of
Pauli operators if,
\begin{equation}
\label{eq:overlap}
\forall \cptp P, \cptp Q \in \mathcal R, \ \forall i \in V, \ \cptp P(i) = \cptp Q(i) \mbox{ or } \cptp P(i) = \Id \mbox{ or } \cptp Q(i) = \Id.
\end{equation}

Now take $\mathcal R$ a set of Pauli operators generating the stabiliser group
of $\ket G$, and $\{\mathcal R^{(k)}\}_j$ a collection of subsets of $\mathcal R$ that
such that each $\mathcal R^{(k)}$ satisfies the condition of
Equation~\ref{eq:overlap} and $\cup_k \mathcal R^{(k)} = \mathcal R$ -- note that $\mathcal R$
need not be a minimal set of generators. We then construct a set of
trappified canvases $T^{(k)}$ which have $V$ as their input set and
for which all qubits are measured in the $\X$ basis. They only differ
in the prepared input states, each being prescribed by
Lemma~\ref{lem:prep} for the stabilisers in $\mathcal R^{(k)}$ -- that is qubits
are prepared in an $\X$, $\Y$ or $\Z$ eigenstate each time one of the
Pauli operator in $\mathcal R^{(k)}$ is respectively $\X$, $\Y$ or $\Z$ for
this qubit, and chosen arbitrarily to be $\X$ eigenstates elsewhere. As above, the
computation defined by the pattern where all qubits are measured in the $\X$ basis amounts to measuring the stabiliser generators $\cptp S^{\mathtt{i}}$. The
output distribution $\pd T^{(k)}$ can be computed given the prepared
input state for $T^{(k)}$ using elementary properties of stabiliser
states. But for our purposes, it is sufficient to construct the
decision function $\tau^{(k)}$. This can be done by noting that for
all $\cptp P \in \mathcal R^{(k)}$, there is a unique binary vector $\{p_i\}_i$ such
that $\cptp P = \prod_i \mathcal S_i^{p_i}$. This, in turn, implies that $\pd
T^{(k)}$ is such that $\bigoplus_i p_i t_i = 0$ where $t_i$ is the
outcome of the measurement of the $i$-th qubit in the $\X$
basis. Therefore, we define
\begin{equation}
\label{eq:general-tau}
\tau^{(k)}(t) = \bigwedge_{\cptp P \in \mathcal R^{(k)}} \left(\bigoplus_i p_i t_i = 0\right),
\end{equation}
which reconstructs the measurement outcomes of stabilisers in
$\mathcal R^{(k)}$ from the measurements outcomes of operators
$\cptp S^{\mathtt{i}}$. The function $\tau^{(k)}(t)$ will accept whenever the measurement outcomes of
all stabilisers in $\mathcal R^{(k)}$ are zero. We denote by $\mathcal
E_\epsilon^{(k)}$ the set of Pauli deviations that are perfectly detected by
$T^{(k)}$ and $\mathcal E_\delta^{(k)} = \mathcal P_V \setminus \mathcal
E_\epsilon^{(k)}$ the set of deviations to which $T^{(k)}$ is perfectly
insensitive.

Now, using Remark~\ref{rem:tr-can-sch} and Lemma \ref{lem:compos-sch}, the
trappified canvases $T^{(k)}$ can be composed with equal probability
$p$ to obtain a trappified scheme $\sch T$. We then consider the sets
of all Pauli deviations $\mathcal E_\epsilon = \bigcup_k \mathcal E_\epsilon^{(k)}$
and $\mathcal E_\delta = \bigcup_k \mathcal E_\delta^{(k)} = \mathcal P_V$. We
conclude that the scheme $\sch T$ then $(1-p)$-detects $\mathcal E_\epsilon$
and is $(1-p)$-insensitive to $\mathcal P_V$. Note that these values are upper-bounds,
with equality being achieved if there is no overlap in the set of errors which
each canvas can detect.

The scheme $\sch T$ therefore detects all possibly harmful deviations with finite
probability, and is partly insensitive to all deviations -- i.e.~both
harmless and harmful -- that can affect computations in $\mathfrak C$.

\subsubsection{A Linear Programming Problem for Trap Optimisation}
\label{app:lin-prog}
At first glance, the main goal to optimise 
such schemes seems to be to lower as much as possible the number of subsets of stabilisers $\mathcal R^{(k)}$ which
cannot be tested at the same time. Each such subset of stabilisers needs a 
different canvas $T^{(k)}$ to test for it, and the probability $p$ increases with a lower number of canvases.
An increase in $p$ automatically decreases the detection and insensitivity errors. These in turn
appear in the exponential bounds from Theorem \ref{thm:boost}, meaning that even a slight decrease
greatly influences the total security for a given number of repetitions, or equivalently the number of 
repetitions required to achieve a given security level.

However this is the case only if each test detects a set of errors disjoint from those detected by the other sets. Another way to increase the probability of detection
is to increase the coverage of each canvas by increasing the number of stabiliser errors which each can detect.
In this case, the sets can be made to overlap and the detection probability can be lowered below the upper-bound of $1-p$. We explore both approaches in the next two subsections. We now give a general process for systematising this optimisation with different constraints.

In particular situations, it might be useful to have more granular control of the design and error-detecting capabilities of the test rounds.
For instance, because of hardware constraints or ease of implementation, it might be favourable to restrict the set of tests one is willing to perform to only a subset of the tests resulting from generalised traps. As one example, one might desire to avoid the preparation of dummy states and therefore restrict the set of feasible tests to those requiring the preparation of quantum states in the $\X-\Y$-plane only. It might also not be necessary for the employed tests to detect all possible Pauli errors because of inherent robustness of the target computation.

In such cases, we can expect better error-detection rates if we (i) allow for more types of tests, or (ii) remove deviations from the set of errors that are required to be detected. To this end, we present a linear programming formulation of the search for more efficient tests in Problem~\ref{prob:optimization_distribution}.

\begin{problem}
\caption{Optimisation of the Distribution of Tests}\label{prob:optimization_distribution}
	\textbf{Given}
	\begin{itemize}
		\item a set of errors $\mathcal{E}$ to be detected,
		\item a set of feasible tests $\mathcal{H}$,
		\item a relation between tests and errors describing whether a test detects an error, $R : \mathcal{H} \times \mathcal{E} \to \bin$,
	\end{itemize}
	\textbf{find} an optimal distribution $p: \mathcal{H} \to [0,1]$ \textbf{maximising} the detection rate $\epsilon \in [0,1]$ \textbf{subject to} the following conditions:
	\begin{itemize}
		\item $p$ describes a probability distribution, i.e. $\sum_{H\in\mathcal{H}} p(H) \leq 1$,
		\item all concerned errors are detected at least with the target detection rate, i.e.
			\begin{align}
				\forall E \in \mathcal{E} : \quad \sum_{\substack{H\in\mathcal{H}\\R(H,E)=1}} p(H) \geq \epsilon.
			\end{align}
	\end{itemize}
\end{problem}

\begin{remark}
	While efficient algorithms exist to find solutions to such real-valued constrained linear problems, in this case the number of constraints grows linearly with the number of errors that need to be detected, and therefore generally exponentially in the size of the graph.
\end{remark}

\begin{remark}
	Solutions to the dual problem of Problem~\ref{prob:optimization_distribution} are distributions of deviations applied to the test rounds. An optimal solution to the dual gives therefore an optimal attack, i.e. a distribution of deviations that achieves a minimal detection rate with the tests at hand.
\end{remark}

\subsection{Standard Traps}

The simplest application of Lemma \ref{lem:prep} is to prepare qubit $i_0$ as an
eigenstate of $\X$, while its neighbours in the graph
are prepared as an eigenstate of $\Z$.
This setup can detect all deviations which do not
commute with the $\Z_{i_0}$ measurements of $i_0$.
Here, the reader familiar with the line of work
following~\cite{FK17:unconditionally} note that we have
recovered their single-qubit traps: single qubits prepared in the $\X-\Y$ plane
and surrounded by dummy $\ket 0$ or $\ket 1$ qubits.

Additionally, within each test round, it is possible to include
several such atomic traps as long as their initial states can be
prepared simultaneously -- i.e.~they can at most overlap on qubits
that need to be prepared as eigenstates of $\Z$. More precisely, take
$H$ to be an independent set of vertices from
$G$ (see Definition~\ref{def:indep_set}).
We define the set of stabilisers associated to $H$ as 
$\mathcal{R}_H = \{\cptp S^{\mathtt{i}}\}_{i \in H}$. Such sets naturally follow the 
condition of Equation \ref{eq:overlap} since $H$ is an independent set and 
therefore if $i \neq j$, $\cptp S^{\mathtt{i}}(j) = \cptp S_j(i) = \Id$ and 
both stabilisers are equal to either $\Z$ or $\Id$ for all qubits different from $i$ or $j$.
This is the extreme case where all stabilisers in $\mathcal{R}_H$ have
a single component when decomposed in the generator set $\{\cptp S^{\mathtt{i}}\}$.

Following the same line of argument as above, in absence of deviation, 
the state of qubit $i$ must be $\ket 0$ for
all $i\in H$ before the measurement, or equivalently, is an eigenstate
of $\Z_i$. Commuting these operators towards the initialisation of
the qubits shows that the qubits in $H$ must be prepared in the state
$\ket +$, and $\ket 0$ for qubits in $N_G(H)$. These qubits form the 
input set $I_T$ of the trappified canvas $T_H$ associated to the 
independent set $H$.
Other qubits can be prepared in any allowed state.
Its output locations $O_T$ are the independent
set $H$.

Using the formula from Equation \ref{eq:general-tau} for set $\mathcal{R}_H$,
we get $\tau(t) = \bigwedge_{i\in H} t_i$ for the decision algorithm. That is, the trappified canvas
accepts whenever all outcomes $\Z$ measurements for qubits $i\in
H$ are $0$.

A trappified canvas $T_H$ generated in this way depends only on the choice of
independent set $H$. Such trappified canvases will be called \emph{standard trap} in the remaining of this work.

Let $\{H^{(j)}\}_j$ be a set of independent sets. 
Since $\mathcal{R}_{H^{(j)}}$ contains all stabilisers $\cptp S^{\mathtt{i}}$ for $i \in H^{(j)}$,
the sets $\mathcal{R}_{H^{(j)}}$ cover the generating set of stabiliser
$\{\cptp S^{\mathtt{i}}\}_{i \in V}$ entirely if and only if
each qubit $i \in V$ is in at least one of the
independent sets $H^{(j)}$. Then one can conclude that all $\X$ and $\Y$
deviations have a non-zero probability of being detected, while $\Id$
and $\Z$ deviations are never detected, but are harmless 
for classical output computations.

\subsubsection{Optimising Standard Traps.}
The background in graph theory and graph colourings necessary for this section can be found in~\ref{sec:graph_colourings}.

The crucial parameter to optimise is the detection probability of individual test rounds with respect to $\X$ deviations.
In other words, the performance of the scheme will vary depending on the choice of probability distribution over the independent set $\mathcal I(G)$ and the detection capability of each individual test round.

A test round, and therefore its corresponding trappified canvas, will detect a Pauli error if and only if at least one of the $\ket +$-states is hit by a local $\X$ or $\Y$ deviation.

\begin{lemma}[Detection Rate]
  Let $G=(V,E)$ be an undirected graph. Let $\pd D$ be a probability distribution over $\mathcal I(G)$, giving rise to the trappified scheme $\sch P$ where every element of $\mathcal{I}(G)$ describes one trappified canvas. We define the \emph{detection rate} of $\pd D$ over $G$ as
  \begin{align}
    p_\text{det}(\pd D) = 1 - \epsilon(\pd D) = \min_{\substack{M \subseteq V \\ M \neq \emptyset}} \;\Pr_{H \sim \pd D} \left[ M \cap H  \neq \emptyset \right].
  \end{align}
  Then $\sch P$ $\epsilon(\pd D)$-detects the error set $\mathcal{E}_\epsilon = \{\cptp I,\X,\Y,\Z \}^{\otimes V} \setminus \{ \cptp I,\Z \}^{\otimes V}$.
\end{lemma}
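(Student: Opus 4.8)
The plan is to reduce the claim to the purely combinatorial quantity $p_\text{det}(\pd D)$, using the fact that standard traps detect deviations deterministically. For a Pauli deviation $\cptp E \in \mathcal P_V$, write its $\X\Y$-support as $M(\cptp E) = \{\, i \in V \mid \cptp E \text{ acts as } \X \text{ or } \Y \text{ on qubit } i \,\}$. Since the scheme is indexed by independent sets, with each $H \in \mathcal I(G)$ giving the standard trap $T_H$ whose $\ket +$-qubits are exactly the vertices of $H$, the characterisation established just above the lemma (a standard trap detects a deviation iff one of its $\ket +$-qubits is hit by a local $\X$ or $\Y$) reads here as: $T_H$ rejects $\cptp E$ if and only if $H \cap M(\cptp E) \neq \emptyset$.

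First I would record that this rejection event is deterministic once $\cptp E$ is fixed. Working in the computational-basis frame of Remark~\ref{rem:dev-struct}, the component of $\cptp E$ on a tested qubit $i \in H$ flips the honest outcome $t_i = 0$ to $t_i = 1$ exactly when it is $\X$ or $\Y$, whereas $\Id$ and $\Z$ components (and all components on qubits outside $H$) leave the tested outcomes $\{t_i\}_{i \in H}$ unchanged; the canvas rejects ($\tau(t) = 1$) precisely when at least one such $t_i$ equals $1$. Hence $\pd T_{\cptp E}$ for $T = T_H$ is a point distribution and $\Pr_{t \sim \pd T_{\cptp E}}[\tau(t) = 1] = \mathds{1}[H \cap M(\cptp E) \neq \emptyset]$. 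Averaging over the choice of canvas according to $\pd D$ then gives, for the scheme $\sch P$,
\[
\sum_{T \in \sch P} \Pr_{\substack{T \sim \pd D \\ t \sim \pd T_{\cptp E}}}[\tau(t) = 1, T] \; = \; \Pr_{H \sim \pd D}\bigl[\, H \cap M(\cptp E) \neq \emptyset \,\bigr].
\]

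To conclude, I would note that any $\cptp E \in \mathcal E_\epsilon = \{\Id,\X,\Y,\Z\}^{\otimes V} \setminus \{\Id,\Z\}^{\otimes V}$ has at least one $\X$ or $\Y$ component, i.e.\ $M(\cptp E)$ is a nonempty subset of $V$. The right-hand side above is therefore bounded below by the minimum of $\Pr_{H \sim \pd D}[H \cap M \neq \emptyset]$ over all nonempty $M \subseteq V$, which is exactly $p_\text{det}(\pd D) = 1 - \epsilon(\pd D)$. Feeding this bound into the scheme version of Definition~\ref{def:detect_tp} shows that $\sch P$ $\epsilon(\pd D)$-detects $\mathcal E_\epsilon$. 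I would add that the rate is tight: taking $\cptp E$ to act as $\X$ on a minimising set $M$ and as $\Id$ elsewhere gives $\cptp E \in \mathcal E_\epsilon$ with $M(\cptp E) = M$, so no smaller detection error is achievable for this error set.

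The one genuinely delicate point, which I would state carefully, is the detection characterisation itself — that it is the $\X,\Y$ (and not the $\Z$) components on $H$ that trigger the trap, and that the tested outcome of qubit $i$ is deterministic. Both hinge on the convention that deviations are expressed in the computational-basis frame after the honest Hadamard and rotation layer, so that measuring the stabilisers $\cptp S^{\mathtt i}$ reduces to a deterministic computational-basis readout of the qubits $i \in H$; this is precisely the setup of the preceding Standard Traps discussion together with Remark~\ref{rem:dev-struct}. Everything else is elementary bookkeeping over the distribution $\pd D$.
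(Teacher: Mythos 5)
Your proof is correct and follows essentially the same route as the paper's: both rest on the characterisation that the canvas $T_H$ detects $\cptp E$ deterministically if and only if $H$ intersects the $\X\Y$-support $M(\cptp E)$, and then pass to the scheme level. The only difference is presentational — the paper delegates the averaging over $H \sim \pd D$ to the composition result (Lemma~\ref{lem:compos-sch}, with each canvas $0$-detecting its own error set), whereas you inline that computation explicitly and add a tightness observation.
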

\begin{proof}
	The trappified canvas induced by the independent set $H \in \mathcal{I}(G)$ detects an error $\cptp E$ if and only if $M \cap H \neq \emptyset$, where $M$ is the set of all vertices on which $\cptp E$ reduces to the Pauli-$\X$ or $\Y$. The claim is then implied by Lemma~\ref{lem:compos-sch}.
\end{proof}
In the definition above, $H$ corresponds to a choice of test round, while $M$ is the set of qubits that are affected by to-be-detected $\X$ and $\Y$ deviations.

To obtain the lowest overhead, the distribution $\pd D$ should be chosen such that it maximises the detection probability $1 - \epsilon(\pd D)$ for a given graph $G$. 
The following characterisation of the detection rate is going to be useful to determine upper bounds on $p_\text{det}$.

\begin{remark}\label{remark:pdet_char}
  For any graph $G$ and any distribution $\pd D$ over $\mathcal I(G)$ it holds that
  \begin{align}
    p_\text{det}(\pd D) = \min_{\pd M} \;\Pr_{\substack{M \sim \pd M\\ H \sim D}} \left[ M \cap H \neq \emptyset \right],
  \end{align}
  where the minimum ranges over distributions $\pd M$ over $\wp(V) \setminus \{\emptyset\}$.
\end{remark}

It can be shown that the best achievable detection rate by standard traps for a graph $G$ lie in the interval $\left[ \frac{1}{\chi(G)} , \frac{1}{\omega(G)} \right]$, where $\chi(G)$ and $\omega(G)$ are respectively the chromatic number and the clique number of $G$. The protocol of \cite{LMKO21:verifying} in particular is designed with security bounds depending on the chromatic number of the underlying graph.
Note that the two graph invariants $\chi(G)$ and $\omega(G)$ are dual in the sense that they are integer solutions to dual linear programs and the gap between these two values can be large (see Lemma \ref{lem:gap}).
It turns out that both bounds can be improved to depend on the solutions of the relaxations of the respective linear programs.
This closes the integrality gap between the chromatic number and the clique number.

\begin{lemma}\label{lemma:lower_bound_fractional_chromatic}
  For every (non-null) graph $G$ there exists a distribution $\pd D$ over $\mathcal I(G)$ such that $p_\text{det}(\pd D) \geq \frac{1}{\chi_f(G)}$, where $\chi_f(G)$ is the fractional chromatic number of $G$ (see Definition~\ref{def:fractional_chromatic_number}).
\end{lemma}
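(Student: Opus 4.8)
The plan is to exploit the linear-programming characterisation of the fractional chromatic number. Recall that $\chi_f(G)$ is the optimal value of the covering LP which assigns nonnegative weights $\{x_H\}_{H \in \mathcal I(G)}$ to the independent sets of $G$, minimising $\sum_{H \in \mathcal I(G)} x_H$ subject to the covering constraints $\sum_{H \ni v} x_H \geq 1$ for every $v \in V$. Let $\{x_H^\star\}$ be an optimal solution, so that $\sum_H x_H^\star = \chi_f(G)$. Since $G$ is non-null we have $\chi_f(G) > 0$, and I would define a candidate distribution $\pd D$ over $\mathcal I(G)$ by normalising, $\pd D(H) = x_H^\star / \chi_f(G)$; this is a bona fide probability distribution because the weights sum to $\chi_f(G)$, and it is supported on genuine independent sets, hence on legitimate standard-trap test rounds.

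The key quantitative step is to bound the single-vertex coverage probability from below. For any fixed $v \in V$, the covering constraint gives
\begin{equation}
\Pr_{H \sim \pd D}[v \in H] = \sum_{H \ni v} \pd D(H) = \frac{1}{\chi_f(G)} \sum_{H \ni v} x_H^\star \geq \frac{1}{\chi_f(G)}.
\end{equation}
Thus every vertex lands in the sampled independent set with probability at least $1/\chi_f(G)$.

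Finally I would reduce the minimisation defining $p_\text{det}(\pd D)$ to singletons. The quantity $\Pr_{H \sim \pd D}[M \cap H \neq \emptyset]$ is monotone in $M$: for any nonempty $M$ and any chosen $v \in M$, the event $\{v \in H\}$ implies $\{M \cap H \neq \emptyset\}$, so $\Pr_{H \sim \pd D}[M \cap H \neq \emptyset] \geq \Pr_{H \sim \pd D}[v \in H] \geq 1/\chi_f(G)$. Taking the minimum over all nonempty $M \subseteq V$ therefore preserves the bound, giving $p_\text{det}(\pd D) \geq 1/\chi_f(G)$, which is the claim.

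I do not expect a serious obstacle here: the content lies entirely in matching the covering constraints of the LP to the hitting event and in observing that enlarging $M$ can only raise the hitting probability, so the worst case is a single vertex. The only point warranting care is confirming that the form of $\chi_f(G)$ fixed in Definition~\ref{def:fractional_chromatic_number} is indeed the fractional-cover value (rather than its fractional-clique dual), so that the normalised optimal weights directly yield the required per-vertex coverage; under strong LP duality the two values coincide, but it is cleanest to argue from the covering formulation, where the inequality $p_\text{det}(\pd D) \geq 1/\chi_f(G)$ falls out immediately.
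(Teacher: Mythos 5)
Your proof is correct and follows essentially the same route as the paper's: produce a distribution over $\mathcal I(G)$ whose per-vertex coverage probability is at least $1/\chi_f(G)$, then note that for any nonempty $M$ the hitting probability is at least the coverage probability of any single $v \in M$. The only difference is that the paper sources this distribution by citing its Lemma~\ref{lemma:fractional_chromatic_number_probabilistic}, whereas you construct it explicitly by normalising an optimal solution of the fractional covering LP; that construction is in effect a proof of the cited lemma (modulo the standard equivalence between the paper's $b$-fold-colouring definition of $\chi_f(G)$ and the LP optimum, which you correctly flag), so the two arguments coincide in substance.
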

\begin{proof}
  Let $\pd D$ be a distribution over $\mathcal I(G)$ such that for all $v \in V$ it holds that
$\Pr_{H \sim \pd D} \left[ v \in H \right] \geq \frac{1}{k}.$
For all $M \subseteq V, M \neq \emptyset$, then
$\Pr_{H \sim \pd D} \left[ M \cap H \neq \emptyset \right] \geq \frac{1}{k}$
and therefore
$p_\text{det}(\pd D) \geq \frac{1}{k}.$
By Lemma~\ref{lemma:fractional_chromatic_number_probabilistic}, we can find such a distribution $\pd D$ for any $k \geq \chi_f(G)$.
\end{proof}

We can also improve the upper bound using fractional cliques.

\begin{lemma} \label{lemma:upper_bound_fractional_clique}
  For every (non-null) graph $G$ and every distribution $\pd D$ over $\mathcal I(G)$ it holds that $p_\text{det}(\pd D) \leq \frac{1}{\omega_f(G)}$, where $\omega_f(G)$ is the fractional clique number of $G$ (see Definition~\ref{def:fractional_clique_number}).
\end{lemma}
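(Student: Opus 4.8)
The plan is to exploit the linear-programming duality underlying the fractional clique number together with the dual characterisation of the detection rate in Remark~\ref{remark:pdet_char}. Recall that $\omega_f(G)$ is the optimum of the linear program maximising $\sum_{v \in V} x_v$ over weight vectors $x : V \to \mathbb{R}_{\geq 0}$ subject to $\sum_{v \in I} x_v \leq 1$ for every independent set $I \in \mathcal I(G)$. Let $x^\star$ be an optimal such fractional clique, so that $\sum_{v \in V} x^\star_v = \omega_f(G)$ and $\sum_{v \in I} x^\star_v \leq 1$ for all $I \in \mathcal I(G)$. Since $G$ is non-null, placing weight $1$ on a single vertex is feasible, so $\omega_f(G) \geq 1 > 0$ and we may divide by it.

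First I would turn $x^\star$ into a distribution over nonempty vertex subsets. Define $\pd M$ to be supported on the singletons $\{v\}$, with $\pd M(\{v\}) = x^\star_v / \omega_f(G)$; this is a genuine probability distribution since the weights sum to $\omega_f(G)$, and each $\{v\}$ is nonempty, hence admissible in Remark~\ref{remark:pdet_char}. Because $M = \{v\}$ meets $H$ precisely when $v \in H$, and because Remark~\ref{remark:pdet_char} expresses $p_\text{det}(\pd D)$ as an infimum over distributions $\pd M$, this concrete choice yields the upper bound
\begin{equation}
p_\text{det}(\pd D) \leq \Pr_{\substack{M \sim \pd M \\ H \sim \pd D}}\left[ M \cap H \neq \emptyset \right] = \frac{1}{\omega_f(G)} \sum_{v \in V} x^\star_v \, \Pr_{H \sim \pd D}\left[ v \in H \right].
\end{equation}

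The final step is to interchange the order of summation and invoke the fractional clique constraint. Writing $\Pr_{H \sim \pd D}[v \in H] = \sum_{H \ni v} \pd D(H)$ and swapping the two sums gives $\sum_{v \in V} x^\star_v \Pr_{H}[v \in H] = \sum_{H \in \mathcal I(G)} \pd D(H) \sum_{v \in H} x^\star_v$. Since every $H$ in the support of $\pd D$ is an independent set, the inner sum is at most $1$ by feasibility of $x^\star$, so the whole expression is bounded by $\sum_{H} \pd D(H) = 1$, which delivers $p_\text{det}(\pd D) \leq 1/\omega_f(G)$. I do not anticipate a genuine obstacle: the only points needing care are the normalisation of $\pd M$ and the direction of the inequality from Remark~\ref{remark:pdet_char} (an infimum, so any feasible $\pd M$ is a valid upper bound). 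The essential structural input is simply that the fractional clique constraints $\sum_{v \in H} x^\star_v \le 1$ hold for exactly the independent sets that the trappified scheme uses as test rounds.
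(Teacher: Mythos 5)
Your proof is correct and takes essentially the same route as the paper, which disposes of the lemma in one line by citing Lemma~\ref{lemma:fractional_clique_number_probabilistic}: the singleton distribution $\pd M(\{v\}) = x^\star_v/\omega_f(G)$ you build from an optimal fractional clique is exactly the vertex distribution that lemma asserts to exist, and feeding it into Remark~\ref{remark:pdet_char} is the intended argument. In effect you have just inlined the proof of the relevant direction of that characterisation (via LP feasibility and the sum swap) instead of invoking it.
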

\begin{proof}
  This statement is a direct consequence of Lemma~\ref{lemma:fractional_clique_number_probabilistic}.
\end{proof}

As a consequence, this shows that the protocol described in~\cite{LMKO21:verifying}, which is the current state-of-the-art, can sometimes be improved by constructing additional test rounds that would allow to have a probability of detection greater than the reported $1/\chi(G)$. In fact, this proves that the best possible detection rate by standard traps is equal to $1/\chi_f(G)$ since $\chi_f(G) = \omega_f(G)$ by Lemma~\ref{lemma:fractional_equality}. This is achieved precisely by choosing the set of possible tests to be a fractional colouring of the graph.

\begin{example}\label{example:pentagon_trap_distribution}
	Let $G=(V,E)$ be the cycle graph on $5$ nodes with $V = \{0,1,2,3,4\}$. An optimal proper $3$-colouring of $G$ is given by $\left(\{0,2\},\{1,3\},\{4\}\right)$, which gives rise to a standard trap with detection rate $1/3$.
	However, this may be further improved using Lemma~\ref{lemma:lower_bound_fractional_chromatic} and the fact that $\chi_f(G) = 5/2$. A standard trap with the optimal detection rate of $2/5$ is given by the uniform distribution over the set $\left\{ \{0,2\}, \{1,3\}, \{2,4\}, \{0,3\}, \{1,4\} \right\}$.
\end{example}

\begin{figure}
	\centering
	\begin{subfigure}[b]{0.36\textwidth}
		\centering
		\includegraphics[height=1.4cm]{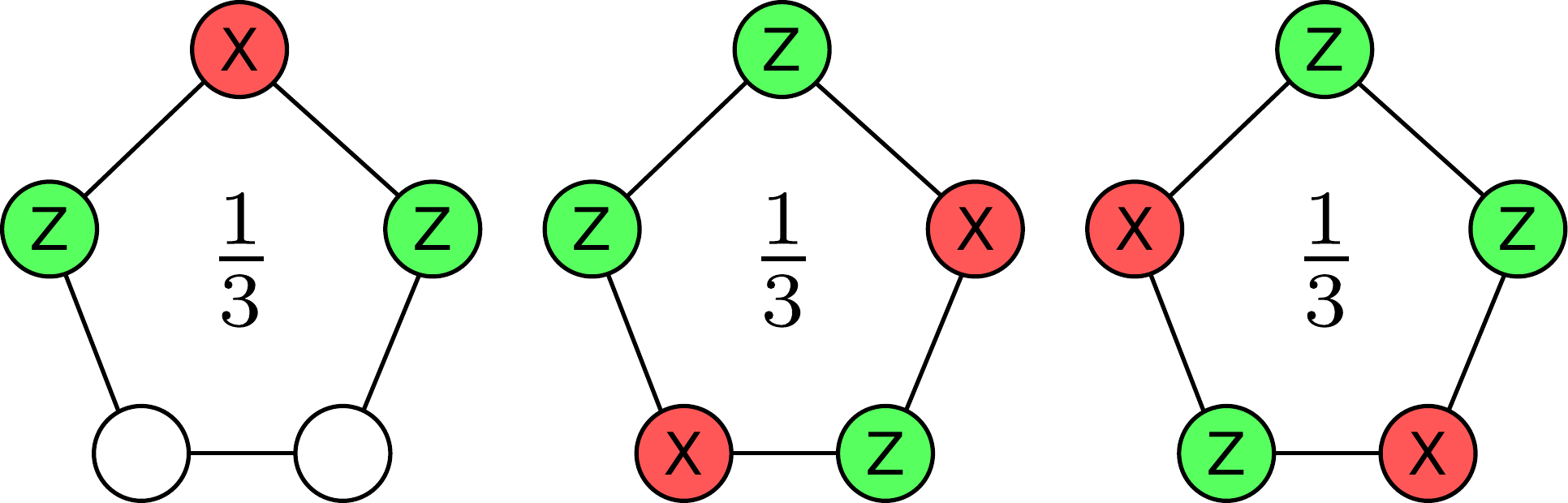}
		\caption{Trap distribution based on an optimal colouring of $G$.}
	\end{subfigure}
	\hfill
	\begin{subfigure}[b]{0.6\textwidth}
		\centering
		\includegraphics[height=1.4cm]{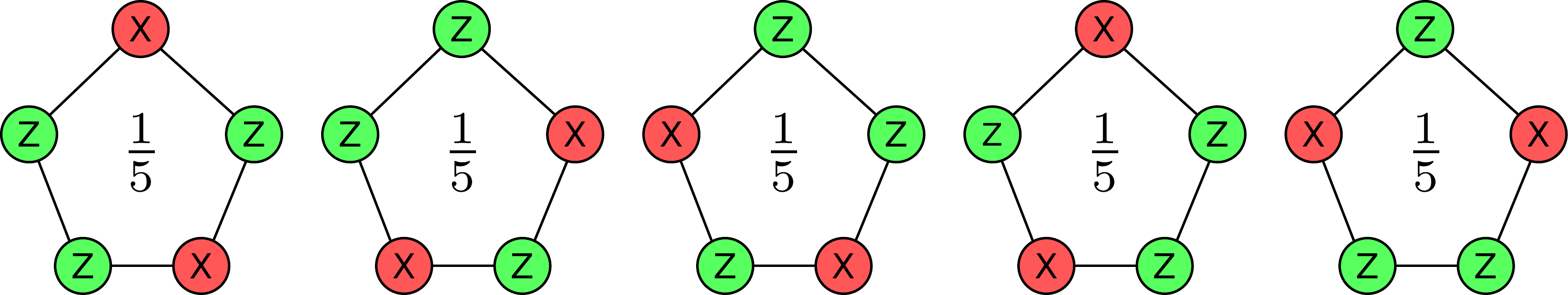}
		\caption{Trap distribution based on an optimal fractional colouring of $G$.}
	\end{subfigure}
	\caption{Traps on the cycle graph $G$ with $5$ nodes from Example~\ref{example:pentagon_trap_distribution}.}
	\label{fig:pentagon_trap_distribution}
\end{figure}

Yet, this leaves a dependency of the protocol's efficiency on graph invariants, meaning that depending on the chosen computation, the protocol could perform poorly.
The next section shows how to overcome this obstacle, as long as the client is willing to use more generalised traps.

\subsection{General Traps}
\label{subsec:gen-trap}

Above, the trappified canvases we obtained are a consequence of determinism,
insensitivity to harmless deviations and a restriction on the subsets $H$,
constrained to be independent. To
construct \emph{general traps}, we simply remove this last requirement and 
define instead $\mathcal{R}_H = \{\prod_{i \in H} \cptp S^{\mathtt{i}}\}$.
Using Equation \ref{eq:general-tau}, $\tau$ is then the parity of measurement outcomes
for qubits from $H$, i.e.~$\tau(t) = \bigoplus_{i\in H}
t_i$. This means that to accept the execution of such trappified canvas, the state of
the qubits $i \in H$ needs to be in the $+1$ eigenspace of the operator
$\prod_{i \in H} \Z_i$. This is the other extreme case since there is only a 
single stabiliser in the set $\mathcal{R}_H$.

Commuting this operator to the
initialisation imposes to prepare a $+1$ eigenstate of
$\prod_{i \in H_{even}}\X_i \prod_{j\in
H_{odd}} \Y_j \prod_{k \in N_G^{odd}(H)} \Z_k$, where $H_{even}$
(resp. $H_{odd}$) are the qubits of even (resp. odd) degree within
$H$, and $k \in N_G^{odd}(H)$ means $k$ is in the odd neighbourhood of
$H$. Again, applying Lemma~\ref{lem:prep} allows us to find in the
eigenspace of this operator a state that can be obtained as a tensor
product of single-qubit states, simply by looking at the individual
Paulis from the operator $\prod_{i \in H} \cptp S^{\mathtt{i}}$.

It is easy to see that this trappified canvas detects all deviations that
anti-commute with $\prod_{i \in H} \Z_i$, that is deviations that
have an odd number of $\X$ or $\Y$ for qubits in $H$.
Varying the sets $H$ allows to construct a trappified scheme which
detects all possible deviations containing any number of $\X$ or $\Y$ with a constant probability.

\subsubsection{Optimising General Traps.}
General traps are based on test rounds defined by a set $H \subseteq V$ of qubit locations.
It accepts whenever the parity of outcomes of $\Z$-measurements on the qubits of $H$ is even.
Here the testing set $H$ can be chosen freely and does not need to be independent as in the construction of standard traps.

\begin{lemma}[General Stabiliser-Based Trappified Scheme]
\label{lem:gen-trap-sch}
Let $\sch P$ be the trappified scheme defined by sampling uniformly at random a non-empty set $H \subseteq V$ and preparing the trappified canvas associated to $\mathcal{R}_H = \{\prod_{i \in H} \cptp S^{\mathtt{i}}\}$. Then $\sch P$ $1/2$-detects the error set $\mathcal{E}_\epsilon = \{\cptp I,\X,\Y\}^{\otimes V} \setminus \{ \cptp I^{\otimes V} \}$.
\end{lemma}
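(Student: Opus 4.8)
The plan is to reduce the claim to a purely combinatorial count over the random choice of the test set $H$. Fix a deviation $\cptp E \in \{\cptp I, \X, \Y\}^{\otimes V} \setminus \{\cptp I^{\otimes V}\}$ and let $M \subseteq V$ denote the set of qubits on which $\cptp E$ acts as $\X$ or $\Y$; since $\cptp E$ is not the global identity, $M \neq \emptyset$. The first step is to recall from the construction preceding the lemma that the trappified canvas associated to $H$ accepts exactly when $\bigoplus_{i \in H} t_i = 0$, and that an $\X$ or $\Y$ deviation on a qubit deterministically flips the outcome of its $\Z$-measurement whereas $\cptp I$ and $\Z$ leave it unchanged. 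Consequently, under $\cptp E$ the parity $\bigoplus_{i \in H} t_i$ is flipped exactly $|H \cap M|$ times; since the honest parity is $0$, the canvas rejects, i.e.~detects $\cptp E$, precisely when $|H \cap M|$ is odd, and this event is deterministic given $H$ and $\cptp E$.

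With this characterisation, $\epsilon$-detection in the sense of Definition~\ref{def:detect_tp} reduces to lower-bounding $\Pr_{H}[\,|H \cap M| \text{ odd}\,]$ by $1/2$, where $H$ is uniform over the non-empty subsets of $V$. The second step is the counting: writing $n = |V|$ and $m = |M| \geq 1$, every $H \subseteq V$ restricts to a subset $H \cap M$ of $M$ and is otherwise free on $V \setminus M$, so the number of $H$ with $|H \cap M|$ odd equals $\big(\sum_{k \text{ odd}} \binom{m}{k}\big)\, 2^{n-m} = 2^{m-1}\, 2^{n-m} = 2^{n-1}$, using the elementary identity $\sum_{k \text{ odd}} \binom{m}{k} = 2^{m-1}$ valid for $m \geq 1$. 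The empty set has $|H \cap M| = 0$, which is even, and is therefore excluded from this count, so all $2^{n-1}$ detecting sets are already non-empty.

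The final step assembles these. Since $H$ ranges uniformly over the $2^n - 1$ non-empty subsets and detection is deterministic,
\begin{equation}
\sum_{T \in \sch P} \Pr_{\substack{T \sim \pd P \\ t \sim \pd T_{\cptp E}}}[\tau(t) = 1,\, T] = \Pr_{H}[\,|H \cap M| \text{ odd}\,] = \frac{2^{n-1}}{2^n - 1} \geq \frac{1}{2},
\end{equation}
and as this holds for every $\cptp E \in \mathcal E_\epsilon$, the scheme $\sch P$ $\tfrac12$-detects $\mathcal E_\epsilon$. I do not anticipate a genuine obstacle here: the only point needing care is the justification that detection is deterministic and depends on $H$ solely through the parity of $|H \cap M|$, which rests on the deterministic outcome-flip behaviour of $\X$ and $\Y$ against a $\Z$-basis measurement already established in the construction of general traps. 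The counting itself is nothing more than the fact that a non-empty set has as many odd subsets as even ones, which even gives a bound strictly better than $1/2$.
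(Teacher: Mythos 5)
Your proof is correct and takes essentially the same route as the paper's: a test defined by $H$ detects $\cptp E$ exactly when $\lvert H \cap M\rvert$ is odd, and a uniformly random $H$ makes this happen with probability at least $1/2$. The only difference is that you carefully respect the lemma's restriction to non-empty $H$, obtaining the slightly sharper bound $2^{n-1}/(2^n-1) > 1/2$, whereas the paper's proof samples from the full powerset $\wp(V)$ (empty set included) and gets exactly $1/2$.
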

\begin{proof}
Looking at a given deviation $\cptp E$, we conclude that a test-round defined by $H$ detects $\cptp E$ if and only if $|E \cap H|$ is odd -- here $E$ denotes the set of qubits where $\cptp E$ is equal to $\X$ or $\Y$.
If $H$ is sampled uniformly at random from $\wp(G)$, then $\Pr_{H\sim \pd U(\wp(G))}\left[ |E \cap H| \equiv 1 \mod 2 \right] = 1/2$, and this is valid for any $\cptp E\neq \cptp I$.
\end{proof}

As a conclusion, we obtain that the probability of detection for this scheme is equal to $1/2$, which is independent of the graph $G$, and generally will beat the upper bound obtained in the previous section through standard traps.

\section{Discussion and Future Work}
\label{sec:concl}
We uncovered a profound correspondence between error-detection and verification that applies and unifies all previous trap-based blind verification schemes in the prepare-and-send MBQC model, which covers the majority of proposed protocols from the literature.  In addition, all results mentioned here also apply to receive-and-measure MBQC protocols via the recent equivalence result from \cite{WEP22:equivalence}.
On the theoretical side, it provides a direct and generic composable security proof of these protocols in the AC framework, which also gives the first direct and explicit proof of composability of the original VBQC protocol~\cite{FK17:unconditionally}.
We also formally showed that error-correction is required if one hopes to have negligible correctness and security errors with polynomial overhead when comparing unprotected and unverified computations to their secure counterparts.
On a practical side, this correspondence can be used to increases the tools available to design, prove the composable security, and optimise the performance of new protocols.
To exemplify these new possibilities, we described new protocols that improve the overhead of state-of-the-art verification protocols, thus making them more appealing for experimental realisation and possibly for integration into future quantum computing platforms.

The uncovered connection between error-detection and verification raises new questions such as the extent to which it is possible to infer from the failed traps what the server has been performing.
Additionally, Theorem~\ref{thm:encoding} implies that some form of error-correction is necessary to obtain exponential correctness.
Yet, the protocol of~\cite{LMKO21:verifying} and the result presented here shows that sometimes classical error-correction is enough, thereby raising the question of understanding what are the optimal error-correction schemes for given classes of computations that are to be verified.

Finally, we the link between error detection and verification can be further developed and yield new trappified schemes with not only more efficient implementations but also additional capabilities. A recent results~\cite{KKLM+23:quantum} presents a novel trappified scheme which does not require the client to prepare dummy qubits. This further not only reduces the capabilities required of the client to perform secure delegated quantum computations, but also allows them to construct a highly optimised implementation of quantum secure multi-party computation where mutually distrustful clients delegate a joint computation to a quantum server. Their protocol only requires each client to send exactly as many qubits as in one instance of the single-client protocol, there is no additional overhead. This result demonstrates the power of the construction presented here, in particular the trappified pattern compiler and generalised stabiliser traps.

\bibliographystyle{iopart-num}
\section*{References}
\bibliography{../qubib/qubib.bib}

\appendix

\section{Additional Preliminaries on Flow}
\label{app:flow}
In MBQC, it can be shown that measurements in the $\X - \Y$ plane on non-output qubits will always yield $0$ or $1$ with probability $1/2$. We consider the outcome $0$ as the ``correct'' outcome. In that case, obtaining $1$ is equivalent to applying a $\Z$ operation right before the measurement and obtaining $0$. The goal of the flow function is to describe how to remove the effect of this additional $\Z$ operation and thus recover the same computation as if the measurement outcome had been $0$.

This is done by relying on the stabilisers of the graph state $\ket{G}$. These are unitary operators $\cptp S$ such that $\cptp S\ket{G} = \ket{G}$, their set is a group generated by the Pauli operators $\cptp S^{\mathtt{i}} = X_i \prod_{j \in N_G(i)} \Z_j$. The vertex $f(i)$ is then chosen such that the operator $\cptp S^{\mathtt{f(i)}}$ acts as $\Z$ at vertex $i$ -- this is the one applied by obtaining measurement outcome $1$. This means that $f(i)$ must be a neighbour of $i$, i.e.~$(i, f(i)) \in E$.

Furthermore, in order to cancel out the effect of this $\Z$, we need to be able to apply the rest of the Paulis in the operator $\cptp S^{\mathtt{f(i)}}$. This implies first that $f(i)$ should not be measured by the time $i$ is measured, which imposes the condition $i \preceq f(i)$ on the order in which vertices are measured. Finally, it also implies that all neighbours of $f(i)$ should not be measured before $i$, which adds the conditions $i \preceq j$ for all $j \in N_G(f(i)) \setminus \{i\}$.

To summarise, the flow is an injective function from non-output vertices to non-input vertices $f : V \setminus O \rightarrow V \setminus I$.  Together with the ordering $\preceq$, it must satisfy the following conditions:
\begin{enumerate}
\item $(i, f(i)) \in E$;
\item $i \preceq f(i)$;
\item $\forall j \in N_G(f(i)) \setminus \{i\}, i \preceq j$.
\end{enumerate}

The existence of a flow and associated ordering is a sufficient condition for pattern determinism, meaning that MBQC computations with flow always implement the same unitary transformation from the input to the output layers regardless of the measurement outcomes. A generalisation called the g-flow which uses multiple stabilisers $\cptp S_i$ for each vertex provides a sufficient and necessary condition for determinism. 
\section{Graph Colourings}
\label{sec:graph_colourings}
In this section, we introduce graph colourings and recall some known related results that are useful to our theory.

\begin{definition}[Independent Set]\label{def:indep_set}
  Let $G=(V,E)$ be a graph. Then a set of vertices $t \subseteq V$ is called an \emph{independent set} of $G$ if
  \begin{align}
    \forall v_1, v_2 \in t: \{v_1, v_2 \} \not\in E.
  \end{align}
  The size of the largest independent set of $G$ is called the \emph{independence number} of $G$ and denoted by $\alpha(G)$. The set of all independent sets of $G$ is denoted $I(G)$.
\end{definition}

\begin{definition}[Graph Colouring]
  Let $G=(V,E)$ be a graph. Then a collection of $k$ pairwise disjoint independent sets $H_1, \dots, H_k \subseteq V$ such that $\bigcup_{j=1}^k H_j = V$ is called a \emph{(proper) $k$-colouring} of $G$. The smallest number $k\in\mathbb{N}_0$ such that $G$ admits a $k$-colouring is called the chromatic number of $G$ and denoted by $\chi(G)$.
\end{definition}

\begin{definition}[Clique]
  Let $G=(V,E)$ be a graph.
  Then a complete subgraph $C \subseteq V$ of size $k$ is called a $k$-\emph{clique} of $G$. The largest number $k\in\mathbb{N}_0$ such that $G$ admits a $k$-clique is called the clique number of $G$ and denoted by $\omega(G)$.
\end{definition}

\begin{lemma} \label{lem:gap}
  For any graph $G$ it holds that $\omega(G) \leq \chi(G)$.
  For any $n\in\mathbb{N}$, there exists a graph $G_n$ such that $\chi(G_n) - \omega(G_n) \geq n$.
\end{lemma}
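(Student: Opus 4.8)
The plan is to handle the two claims independently. For the inequality $\omega(G) \le \chi(G)$, I would argue that a clique forces distinct colours. Let $C \subseteq V$ be a clique of size $\omega(G)$ and fix any proper colouring of $G$ that uses $\chi(G)$ colours. Since every pair of vertices in $C$ is adjacent, the colouring assigns pairwise distinct colours to the vertices of $C$, so at least $\omega(G)$ colours already appear on $C$; hence $\chi(G) \ge \omega(G)$. This part is immediate and needs no construction.

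For the separation $\chi(G_n) - \omega(G_n) \ge n$, the idea is to exhibit triangle-free graphs (so that $\omega = 2$) whose chromatic number is arbitrarily large, via the Mycielski construction. Given $G=(V,E)$ with $V=\{v_1,\dots,v_k\}$, I would define its Mycielskian $\mu(G)$ on vertex set $V \cup \{u_1,\dots,u_k\} \cup \{w\}$, keeping all edges of $G$, adding the edge $\{u_i,v_j\}$ whenever $\{v_i,v_j\}\in E$, and joining $w$ to every $u_i$. Setting $M_2 = K_2$ and $M_{j+1}=\mu(M_j)$, I would take $G_n = M_{n+2}$. The claim then reduces to two properties of $\mu$: it preserves triangle-freeness, and it raises the chromatic number by exactly one. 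Triangle-freeness follows from a short case analysis: $w$ lies in no triangle because its neighbourhood $\{u_i\}$ is independent; the $u_i$ are pairwise non-adjacent, so a triangle contains at most one of them, and a triangle $\{u_i,v_j,v_\ell\}$ would force $\{v_i,v_j\},\{v_i,v_\ell\},\{v_j,v_\ell\}\in E$, i.e. a triangle in $G$. Thus every $M_j$ is triangle-free, giving $\omega(M_j)=2$ for $j \ge 2$.

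The heart of the argument, and the main obstacle, is $\chi(\mu(G)) = \chi(G)+1$. The upper bound is easy: extend an optimal colouring of $G$ by colouring each $u_i$ with the colour of $v_i$ and giving $w$ a fresh colour. For the lower bound I would suppose, for contradiction, that $\mu(G)$ admits a proper colouring with $k=\chi(G)$ colours, normalised so that $w$ receives colour $k$. Then no $u_i$ uses colour $k$, so the $u_i$ are coloured from $\{1,\dots,k-1\}$. I would recolour $G$ by leaving each $v_i$ of colour $\neq k$ unchanged and recolouring every $v_i$ of colour $k$ with the colour of $u_i$. Checking an edge $\{v_i,v_j\}\in E$ shows this colouring is proper and uses only $k-1$ colours: the only nontrivial case is when $v_i$ had colour $k$, and there the construction of $\mu(G)$ guarantees $\{u_i,v_j\}$ is an edge, so the colour of $u_i$ differs from that of $v_j$; moreover two adjacent vertices cannot both have carried colour $k$. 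This yields a proper $(k-1)$-colouring of $G$, contradicting $\chi(G)=k$, so $\chi(\mu(G))\ge k+1$.

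Combining the two properties by induction gives $\chi(M_j)=j$ (base case $\chi(K_2)=2$) and $\omega(M_j)=2$, so $\chi(G_n)-\omega(G_n)=(n+2)-2=n$, which establishes the claim. The delicate point to get right is the recolouring step in the chromatic lower bound, where one must verify propriety edge-by-edge; everything else is routine bookkeeping.
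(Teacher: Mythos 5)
Your proof is correct and complete. Note, however, that the paper itself offers no proof of this lemma at all: it is stated in the appendix as a known graph-theoretic fact, so there is nothing to compare your argument against line by line. What you have supplied is the classical proof. The first inequality $\omega(G) \leq \chi(G)$ is handled exactly as one would expect (a clique forces pairwise distinct colours), and for the separation you use the Mycielski construction, which is the standard textbook route: triangle-freeness is preserved (so $\omega$ stays at $2$), and the chromatic number increases by one at each step, with the only delicate point -- the recolouring argument in the lower bound $\chi(\mu(G)) \geq \chi(G)+1$ -- verified correctly, including the observation that two adjacent vertices cannot both carry the colour of the apex vertex $w$. Iterating from $K_2$ then gives $\chi(M_{n+2}) - \omega(M_{n+2}) = n$ as claimed. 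Alternative constructions exist (Kneser graphs, shift graphs, or Erd\H{o}s's probabilistic high-girth graphs, which would even replace triangle-freeness by arbitrarily large girth), but for the purpose of this lemma the Mycielski argument is the most elementary and is exactly what a self-contained version of the paper would include.
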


\begin{definition}[Fractional Graph Colouring]\label{def:fractional_chromatic_number}
  Let $G=(V,E)$ be a graph. For $b \in \mathbb{N}$, a collection of independent sets $H_1, \dots, H_k \subseteq V$, such that for all $v \in V : |\{ 1 \leq j \leq k \; | \; v \in H_j \}| = b$, is called a \emph{$k$:$b$-colouring} of $G$.
  The smallest number $k \in \mathbb{N}_0$ such that $G$ admits a \emph{$k$:$b$-colouring} is called the $b$-fold chromatic number of $G$ and denoted by $\chi_b(G)$.
  Since $\chi_b(G)$ is subadditive we can define the \emph{fractional chromatic number} of $G$ as
  \begin{align}
    \chi_f(G) = \lim_{b \to \infty} \frac{\chi_b(G)}{b} = \inf_{b \in \mathbb{N}} \frac{\chi_b(G)}{b}.
  \end{align}
  Note that $k$:$1$-colourings are $k$-colourings and therefore $\chi_1(G) = \chi(G)$ which in turn implies that for all $b \in \mathbb{N}$ it holds that
  \begin{align}
    \chi_f(G) \leq \chi_b(G) \leq \chi(G).
  \end{align}
\end{definition}

\begin{lemma}\label{lemma:fractional_chromatic_number_probabilistic} Let $G=(V,E)$ be a graph.
  Then $\chi_f(G)$ equals the smallest number $k \in \mathbb{R}^+_0$ such that there exists a probability distribution $\pd D$ over the independent sets $\mathcal I(G)$ such that for all $v \in V$ it holds that
  \begin{align}
    \Pr_{H \gets \pd D} \left[ v \in t \right] \geq \frac{1}{k}.
  \end{align}
\end{lemma}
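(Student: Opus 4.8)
Write $k^\star$ for the smallest $k \in \mathbb{R}_{\ge 0}$ for which there is a distribution $\pd D$ over $\mathcal I(G)$ satisfying $\Pr_{H \sim \pd D}[v \in H] \ge 1/k$ for every $v \in V$ (assuming $V \neq \emptyset$); the goal is to prove $k^\star = \chi_f(G)$ and that this smallest value is genuinely attained. The plan is to route both quantities through the standard covering linear program whose optimum I denote $\chi_f^{\mathrm{LP}}(G)$:
\begin{equation}
\chi_f^{\mathrm{LP}}(G) = \min \Bigl\{ \textstyle\sum_{H \in \mathcal I(G)} x_H \ : \ x_H \ge 0 \text{ for all } H, \ \textstyle\sum_{H \ni v} x_H \ge 1 \text{ for all } v \in V \Bigr\}.
\end{equation}
First I would show $\chi_f(G) = \chi_f^{\mathrm{LP}}(G)$ directly from the definition $\chi_f(G) = \inf_b \chi_b(G)/b$. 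Any $b$-fold colouring using $m = \chi_b(G)$ classes gives a feasible point by setting $x_H$ to the number of classes equal to $H$ divided by $b$, with objective $m/b$; hence $\chi_f^{\mathrm{LP}}(G) \le \chi_b(G)/b$ for every $b$, so $\chi_f^{\mathrm{LP}}(G) \le \chi_f(G)$. Conversely, the program has rational data, so its optimum is attained at a rational point $x^\star$; clearing a common denominator $q$ (writing $x^\star_H = a_H/q$ with $a_H \in \mathbb{N}$) and taking each $H$ with multiplicity $a_H$ gives $\sum_{H \ni v} a_H \ge q$ for all $v$ and $\sum_H a_H = q\,\chi_f^{\mathrm{LP}}(G)$ classes in total. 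Deleting vertices from over-covered classes (an independent set stays independent under vertex deletion) turns this into a genuine $q$-fold colouring with $q\,\chi_f^{\mathrm{LP}}(G)$ colours, so $\chi_q(G) \le q\,\chi_f^{\mathrm{LP}}(G)$ and $\chi_f(G) \le \chi_q(G)/q \le \chi_f^{\mathrm{LP}}(G)$.

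It then remains to identify $k^\star$ with $\chi_f^{\mathrm{LP}}(G)$, which I would do through the rescaling bijection between probability distributions on $\mathcal I(G)$ and nonnegative weight vectors. Given any feasible $\pd D$ with rate at least $1/k$, the weights $x_H = k\,\Pr[\pd D = H]$ are LP-feasible with objective $\sum_H x_H = k$ and $\sum_{H \ni v} x_H = k\,\Pr[v \in H] \ge 1$; hence $\chi_f^{\mathrm{LP}}(G) \le k$ for every admissible $k$, giving $\chi_f^{\mathrm{LP}}(G) \le k^\star$. Conversely, take an optimal LP point $x^\star$ of value $\chi_f^{\mathrm{LP}}(G)$, which is positive since each vertex must be covered; normalising to $\Pr[\pd D = H] = x^\star_H/\chi_f^{\mathrm{LP}}(G)$ defines a probability distribution with $\Pr[v \in H] = \bigl(\sum_{H \ni v} x^\star_H\bigr)/\chi_f^{\mathrm{LP}}(G) \ge 1/\chi_f^{\mathrm{LP}}(G)$, so $k = \chi_f^{\mathrm{LP}}(G)$ is admissible and $k^\star \le \chi_f^{\mathrm{LP}}(G)$. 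This last construction also exhibits a distribution meeting the bound, so the smallest admissible $k$ is attained and equals $\chi_f^{\mathrm{LP}}(G) = \chi_f(G)$.

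The only genuinely delicate point is the inequality $\chi_f(G) \le \chi_f^{\mathrm{LP}}(G)$: one must invoke that the LP attains its optimum at a rational vertex and then reconcile the ``$\ge$'' covering constraints of the LP with the ``exactly $b$ times'' requirement in the definition of $\chi_b(G)$, which is exactly why the vertex-deletion trimming step is needed; every other move is a straightforward rescaling or normalisation. As an alternative to the LP, one could argue $\chi_f(G) \le k^\star$ directly by rationally approximating the finitely supported distribution $\pd D$ and then clearing denominators, but the LP formulation makes the two rescaling directions symmetric and keeps the bookkeeping clean. I would also dispatch the null-graph edge case separately under the usual convention $\chi_f(G) = 0$.
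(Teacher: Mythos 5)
The paper never proves this lemma: it sits in the appendix on graph colourings, which explicitly only ``recalls some known related results'', so there is no in-paper argument to compare yours against — your proposal fills a gap the authors delegated to the literature. Your proof is correct, and it is essentially the standard argument from fractional graph theory: both $\chi_f(G)$ and the probabilistic quantity $k^\star$ are identified with the optimum of the fractional covering LP, the former by scaling between $b$-fold colourings and rational LP points, the latter by the rescaling bijection $x_H = k\,\Pr[\pd D = H]$ between admissible distributions and feasible weight vectors. You also correctly isolate the one genuinely delicate step: the paper's definition of a $k$:$b$-colouring requires each vertex to lie in \emph{exactly} $b$ classes, whereas LP feasibility only gives coverage at least $q$, so the vertex-deletion trimming (legitimate because independence is preserved under deletion, and it leaves the number of classes unchanged) is indispensable for $\chi_f(G) \leq \chi_f^{\mathrm{LP}}(G)$. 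The normalisation direction moreover shows the infimum over admissible $k$ is attained, which the word ``smallest'' in the statement implicitly requires. Two minor points to tighten: justify that the LP is feasible (the singletons $\{v\}$ cover $V$) and bounded below by $1$ when $V \neq \emptyset$, so an optimal rational vertex exists and the normalisation $x^\star_H/\chi_f^{\mathrm{LP}}(G)$ is well defined; and note that the lemma as printed contains a typo — $\Pr_{H \gets \pd D}[v \in t]$ should read $\Pr_{H \gets \pd D}[v \in H]$ — which your proof silently and correctly repairs.
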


\begin{definition}[Fractional Clique]\label{def:fractional_clique_number}
  Let $G=(V,E)$ be a graph. For $b \in \mathbb{N}$, a function $f : V \to \mathbb{N}_0$, such that for all $H \in \mathcal I(G) : \sum_{v \in H} f(v) \leq b$ and $\sum_{v \in V} = k$, is called a \emph{$k$:$b$-clique} of $G$.
  The biggest number $k \in \mathbb{N}_0$ such that $G$ admits a \emph{$k$:$b$-clique} is called the $b$-fold clique number of $G$ and denoted by $\omega_b(G)$.
  Since $\chi_b(G)$ is superadditive we can define the \emph{fractional clique number} of $G$ as
  \begin{align}
    \omega_f(G) = \lim_{b \to \infty} \frac{\omega_b(G)}{b} = \sup_{b \in \mathbb{N}} \frac{\omega_b(G)}{b}.
  \end{align}
  Note that $k$:$1$-cliques are $k$-cliques and therefore $\omega_1(G) = \omega(G)$ which in turn implies that for all $b \in \mathbb{N}$ it holds that
  \begin{align}
    \omega(G) \leq \omega_b(G) \leq\omega_f(G).
  \end{align}
\end{definition}

\begin{lemma}\label{lemma:fractional_clique_number_probabilistic} Let $G=(V,E)$ be a graph.
  Then $\omega_f(G)$ equals the biggest number $k \in \mathbb{R}^+_0$ such that there exists a probability distribution $\pd D$ over the vertices $V$ such that for all $H \in \mathcal I(G)$ it holds that
  \begin{align}
    \Pr_{v \gets \pd D} \left[ v \in H \right] \leq \frac{1}{k}.
	\end{align}
\end{lemma}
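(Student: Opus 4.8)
The plan is to recognise the defining quantity $\omega_f(G)$ as the optimal value of a linear program and then to observe that the probabilistic condition in the statement is exactly a rescaled feasibility condition for that same program.

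First I would make explicit the linear-programming characterisation of the fractional clique number. Writing $x_v = f(v)/b$ for a $k$:$b$-clique $f$, the constraints $\sum_{v \in H} f(v) \leq b$ for all $H \in \mathcal I(G)$ become $\sum_{v \in H} x_v \leq 1$, while $\sum_{v \in V} f(v) = k$ becomes $\sum_{v \in V} x_v = k/b$. Taking the supremum over $b$ of $\omega_b(G)/b$ should therefore yield
\begin{equation}
\omega_f(G) = \max\Big\{ \textstyle\sum_{v \in V} x_v \ \Big| \ x_v \geq 0, \ \sum_{v \in H} x_v \leq 1 \ \forall H \in \mathcal I(G) \Big\}.
\end{equation}
The identification of the limit with this real optimum is the one point that needs care: every rational feasible point of the program arises as $f/b$ for a suitable $b$ after clearing denominators, so $\sup_b \omega_b(G)/b$ dominates every rational feasible value, whereas conversely each $\omega_b(G)/b$ is itself feasible. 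Since the feasible polytope is bounded (the singleton sets $\{v\} \in \mathcal I(G)$ force $x_v \leq 1$), the optimum is attained at a rational vertex, and the two bounds coincide.

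Next I would prove the two inequalities by a normalisation argument. To see that $k = \omega_f(G)$ is admissible, I would take an optimal fractional clique $\{x_v^\star\}$ with $\sum_v x_v^\star = \omega_f(G)$ and define the distribution $\pd D$ by $\Pr_{v\sim\pd D}[v] = x_v^\star/\omega_f(G)$; this is a genuine probability distribution, and for every $H \in \mathcal I(G)$ one gets $\Pr_{v\sim\pd D}[v \in H] = \omega_f(G)^{-1}\sum_{v\in H}x_v^\star \leq \omega_f(G)^{-1}$, so the largest admissible $k$ is at least $\omega_f(G)$. For the reverse direction, given any distribution $\pd D$ satisfying $\Pr_{v\sim\pd D}[v \in H] \leq 1/k$ for all $H$, I would set $x_v = k\,\Pr_{v\sim\pd D}[v]$; then $\sum_v x_v = k$ and $\sum_{v \in H} x_v = k\,\Pr_{v\sim\pd D}[v \in H] \leq 1$, so $\{x_v\}$ is feasible and hence $k \leq \omega_f(G)$. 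Combining the two bounds shows the largest admissible $k$ equals $\omega_f(G)$.

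The argument is essentially the normalisation map $x \mapsto x/\sum_v x_v$ between feasible fractional cliques and admissible distributions, together with its inverse $\Pr \mapsto k\Pr$; the only genuine obstacle is the first step, namely confirming that the limit defining $\omega_f(G)$ really equals the continuous optimum rather than merely bounding it. This mirrors the proof of the companion Lemma~\ref{lemma:fractional_chromatic_number_probabilistic} for the fractional chromatic number, with cliques and vertices playing the role there taken by colourings and independent sets, the two statements being linear-programming duals of one another.
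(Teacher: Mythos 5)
Your proof is correct. Note that the paper itself states this lemma without proof -- it is quoted as a standard fact from fractional graph theory (alongside Lemmas~\ref{lemma:fractional_chromatic_number_probabilistic} and \ref{lemma:fractional_equality}), so there is no in-paper argument to compare against; your write-up supplies the missing details. The route you take is the standard one and all the key steps check out: the identification $\omega_f(G) = \max\{\sum_v x_v \mid x \geq 0,\ \sum_{v\in H} x_v \leq 1\ \forall H \in \mathcal I(G)\}$ is handled properly (each $\omega_b(G)/b$ is a rational feasible value, each rational feasible value is dominated by some $\omega_b(G)/b$ after clearing denominators, and the optimum is attained at a rational vertex of the polytope, which is bounded because singletons are independent sets); and the two normalisation maps $x \mapsto x/\sum_v x_v$ and $\Pr \mapsto k\Pr$ give exactly the two inequalities needed. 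Two small points worth flagging. First, the normalisation in the forward direction divides by $\omega_f(G)$, so you should note that $\omega_f(G) \geq 1 > 0$ for any non-null graph (take $f$ concentrated on a single vertex); this is presumably why the paper inserts the qualifier ``(non-null)'' in the neighbouring Lemmas~\ref{lemma:lower_bound_fractional_chromatic} and \ref{lemma:upper_bound_fractional_clique}, and the present lemma implicitly needs it too, since an empty vertex set admits no probability distribution at all. Second, your argument in fact shows slightly more than ``supremum'': the set of admissible $k$ contains $\omega_f(G)$ and is bounded above by it, so the maximum is genuinely attained, which is what the word ``biggest'' in the statement requires.
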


Both the fractional clique number $\omega_f(G)$ and the fractional chromatic number $\chi_f(G)$ are rational-valued solutions to dual linear programs.
By the strong duality theorem, the two numbers must be equal.

\begin{lemma}\label{lemma:fractional_equality} For any graph $G$ it holds that $\omega_f(G) = \chi_f(G)$.
\end{lemma}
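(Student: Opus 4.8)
The plan is to realise both fractional invariants as the optimal values of a pair of mutually dual linear programs and then invoke strong duality. First I would write $\chi_f(G)$ as the optimum of the covering LP that minimises $\sum_{H \in \mathcal{I}(G)} x_H$ over $x_H \geq 0$ subject to $\sum_{H \ni v} x_H \geq 1$ for every $v \in V$. This identification follows from Definition~\ref{def:fractional_chromatic_number}: the $b$-fold chromatic number $\chi_b(G)$ is exactly the minimum total multiplicity of independent sets covering each vertex at least $b$ times, so $\chi_b(G)/b$ ranges over feasible rational points of the covering LP, and $\inf_b \chi_b(G)/b$ equals the LP optimum. Dually, I would write $\omega_f(G)$ as the optimum of the packing LP that maximises $\sum_{v \in V} y_v$ over $y_v \geq 0$ subject to $\sum_{v \in H} y_v \leq 1$ for every $H \in \mathcal{I}(G)$, which matches Definition~\ref{def:fractional_clique_number} by the same scaling argument.

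Second, I would observe that these two programs form a primal--dual pair: the covering constraints have coefficient matrix $M_{v,H} = \mathbf{1}[v \in H]$, and the packing LP is built from its transpose $M^{\mathsf{T}}$. Both are feasible---$x_H = 1$ for all $H$ covers every vertex, and $y \equiv 0$ trivially packs---and both optima are finite (the covering minimum is bounded below by $0$, while each singleton $\{v\}\in\mathcal{I}(G)$ forces $y_v \leq 1$, bounding the packing maximum by $|V|$). The strong duality theorem for linear programming then gives equal optima, i.e.~$\chi_f(G) = \omega_f(G)$. Equivalently, and perhaps more transparently given the material already at hand, one can apply von~Neumann's minimax theorem to the finite zero-sum game with payoff matrix $A_{H,v} = \mathbf{1}[v \in H]$: its value $V$ satisfies $V = \max_{\pd D} \min_{v} \Pr_{H \sim \pd D}[v \in H] = \min_{\pd D} \max_{H} \Pr_{v \sim \pd D}[v \in H]$, where the left expression equals $1/\chi_f(G)$ by Lemma~\ref{lemma:fractional_chromatic_number_probabilistic} and the right expression equals $1/\omega_f(G)$ by Lemma~\ref{lemma:fractional_clique_number_probabilistic}, yielding the claim upon taking reciprocals.

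The one point requiring care---and the main obstacle---is justifying that the limits $\inf_b \chi_b(G)/b$ and $\sup_b \omega_b(G)/b$ in the definitions genuinely equal the LP optima rather than merely bracketing them. This rests on rationality: since every constraint coefficient lies in $\{0,1\}$, both optima are rational, so there is a finite $b$ at which an optimal fractional colouring scales to an integral $b$-fold colouring achieving $\chi_b(G)/b = \chi_f(G)$, and symmetrically for the clique side. Once this identification is secured, the finiteness of $V$ and of $\mathcal{I}(G)$ removes all questions of attainment and compactness, so strong duality (equivalently the minimax theorem) closes the argument directly.
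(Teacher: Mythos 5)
Your proposal is correct and takes essentially the same route as the paper: the paper's justification for Lemma~\ref{lemma:fractional_equality} is precisely that $\chi_f(G)$ and $\omega_f(G)$ are the optimal values of a primal--dual pair of linear programs (the covering LP over independent sets and the corresponding packing LP over vertices) and hence equal by strong duality. The details you supply --- the explicit LP formulations, feasibility and finiteness, the rationality argument identifying $\inf_b \chi_b(G)/b$ and $\sup_b \omega_b(G)/b$ with the LP optima, and the equivalent minimax phrasing --- are all left implicit in the paper but fill out the same argument rather than departing from it.
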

 
\section{General Parallel Repetition}
\label{app:gen-comp}
We here show an alternative method for performing the same decomposition as in Section~\ref{sec:rvbqc}, by focusing solely on the error-detection amplification of classical input computations. We then recover the results from that section as a consequence of this generic amplification. We start as before by defining a compiler taking as input a trappified scheme and running it several times in parallel before thresholding over the outcomes of the individual decision functions.

\begin{definition}[Parallel Repetition Compiler]\label{def:par_compiler}
Let $(\sch P, \preceq_G, \pd P, E_{\mathfrak{C}})$ be trappified scheme over a graph $G$ for computation class $\mathfrak{C}$ with classical inputs, and let $n \in \mathbb{N}$ and $w \in [n-1]$. We define the Parallel Repetition Compiler that turns $\sch P$ into a trappified scheme $\sch P_{\parallel n}$ on $G^n$ for computation class $\mathfrak{C}$ as follows:

\begin{itemize}
	\item The set of trappified canvases is defined as $\{T_{\parallel n}\} = \sch P_{\parallel n} = \sch P^n$, the distribution $\pd P_{\parallel n}$ samples $n$ times independently from $\pd P$;
	\item For each trappified canvas $T'$ defined above and an output $t = (t_{j})_{j \in n}$, we have:
	\begin{equation}
		\tau'(t) = 
0 \text{ if } \sum_{j = 1}^n \tau_{j}(t_j) < w, \ \text{and } 1 \text{ otherwise;}
\end{equation}
	\item The partial ordering of vertices of $G^n$ in $\sch P_{\parallel n}$ is given by the ordering $\preceq_G$ on every copy of $G$;
	\item Let $\cptp C \in \mathfrak{C}$. Given a trappified canvas $T_{\parallel n} = \{T_j\}_{j \in [n]}$, the embedding algorithm $E_\mathfrak{C, \parallel n}$ applies $E_\mathfrak{C}$ to embed $\cptp C$ in each $T_j$.
\end{itemize}
\end{definition}

The next theorem relates the parameters above to the detection and insensitivity of the compiled scheme.

\begin{theorem}[Exponential Detection and Insensitivity from Parallel Repetitions]\label{thm:par-boost}
  Let $\sch P$ be a trappified scheme on graph $G$ which $\epsilon$-detects the error set $\mathcal{E}_\epsilon$, is $\delta$-insensitive to $\mathcal{E}_\delta$ and perfectly insensitive to $\{\Id\}$.
  For $n \in \mathbb{N}$ and $w \in [n-1]$, let $\sch P_{\parallel n}$ be the trappified scheme resulting from the compilation defined in Definition~\ref{def:par_compiler}.
  
  Let $k_\epsilon > w/(1-\epsilon)$ and $k_\delta < w/\delta$. Then, $\sch P_{\parallel n}$ $\epsilon_{\parallel n}$-detects $\mathcal{M}_{\geq k_\epsilon}(\mathcal{E}_\epsilon)$ and is $\delta_{\parallel n}$-insensitive to $\mathcal{H}_{\leq k_\delta}(\mathcal{E}_\delta)$ where:
  
  \begin{align}
  	\epsilon_{\parallel n} &= \exp \left( -2 \frac{(k_\epsilon(1-\epsilon) - w)^2}{k_\epsilon} \right),\\
  	\delta_{\parallel n} &= \exp \left( -2 \frac{(k_\delta\delta - w)^2}{k_\delta} \right).
  \end{align}
\end{theorem}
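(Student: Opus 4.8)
The plan is to treat the detection and insensitivity statements separately, reducing the behaviour of the thresholded decision function $\tau'$ in each case to a single binomial tail bound. The key structural simplification compared with Theorem~\ref{thm:boost} is that in the parallel repetition compiler (Definition~\ref{def:par_compiler}) every one of the $n$ rounds is a test round drawn independently from $\pd P$, so there is no random selection of test positions and hence no hypergeometric step: the number of rejecting rounds is directly a sum of independent Bernoulli variables, one per round.

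For detection, I would fix $\cptp E \in \mathcal{M}_{\geq k_\epsilon}(\mathcal{E}_\epsilon)$, so that on at least $k_\epsilon$ of the $n$ copies of $G$ the local component of $\cptp E$ lies in $\mathcal{E}_\epsilon$. Since $\sch P$ $\epsilon$-detects $\mathcal{E}_\epsilon$ and the rounds are sampled independently, each such affected round rejects (its local $\tau_j$ returns $1$) with probability at least $1-\epsilon$, independently of the others. Letting $Y$ count the rejecting rounds, $Y$ is lower-bounded in the usual stochastic order by a Binomial$(k_\epsilon, 1-\epsilon)$ variable $\tilde Y$, restricting attention to exactly $k_\epsilon$ affected rounds since additional ones only increase $Y$. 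The compiled scheme fails to detect precisely when $Y<w$, so $\Pr[Y<w]\le\Pr[\tilde Y<w]$; as $k_\epsilon > w/(1-\epsilon)$ places $w$ strictly below the mean $k_\epsilon(1-\epsilon)$, Hoeffding's inequality for the binomial distribution yields exactly $\epsilon_{\parallel n}=\exp(-2(k_\epsilon(1-\epsilon)-w)^2/k_\epsilon)$.

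For insensitivity, I would fix $\cptp E\in\mathcal{H}_{\le k_\delta}(\mathcal{E}_\delta)$, so each local component is either in $\mathcal{E}_\delta$ or equal to $\Id$, with at most $k_\delta$ of them in $\mathcal{E}_\delta$. The hypothesis of perfect insensitivity to $\{\Id\}$ is crucial here: the identity rounds reject with probability $0$, so only the at most $k_\delta$ rounds with component in $\mathcal{E}_\delta$ can trigger a rejection, and each does so with probability at most $\delta$ by $\delta$-insensitivity of $\sch P$. Hence $Y$ is upper-bounded in the usual stochastic order by a Binomial$(k_\delta,\delta)$ variable $\tilde Y$. The scheme wrongly rejects only when $Y\ge w$, and since $k_\delta < w/\delta$ places $w$ strictly above the mean $k_\delta\delta$, Hoeffding gives $\Pr[Y\ge w]\le\Pr[\tilde Y\ge w]\le\exp(-2(w-k_\delta\delta)^2/k_\delta)=\delta_{\parallel n}$, using $(w-k_\delta\delta)^2=(k_\delta\delta-w)^2$.

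The calculations are routine; the only points requiring care are the two stochastic-domination arguments. The main subtlety I expect is making precise that a weight constraint that is merely an inequality (at least $k_\epsilon$ affected rounds for detection, at most $k_\delta$ for insensitivity) still yields a clean binomial comparison: in each case one must verify that varying the number of affected rounds in the permitted direction monotonically strengthens the bound, so that the extremal values $k_\epsilon$ and $k_\delta$ are genuinely the worst cases. Together with the independence of the per-round samples guaranteed by Definition~\ref{def:par_compiler}, this is exactly what licenses replacing $Y$ by its binomial surrogate before applying Hoeffding.
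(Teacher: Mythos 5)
Your proposal is correct and follows essentially the same route as the paper's own proof: in both arguments the number of rejecting rounds $Y$ is bounded in the usual stochastic order by a $(k_\epsilon, 1-\epsilon)$- resp.\ $(k_\delta, \delta)$-binomial variable (the latter using perfect insensitivity to $\Id$ exactly as you note), and Hoeffding's inequality then gives the two stated exponential bounds. Your added remarks — that the all-test-round structure of Definition~\ref{def:par_compiler} removes the hypergeometric step present in Theorem~\ref{thm:boost}, and that monotonicity in the number of affected rounds makes $k_\epsilon$ and $k_\delta$ the worst cases — are correct elaborations of the stochastic-domination steps the paper states without detail.
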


\begin{proof}

We denote $Y$ a random variable counting the number of trappified canvases whose decision function rejects.

Let $\cptp E \in \mathcal{M}_{\geq k_\epsilon}(\mathcal{E}_\epsilon)$. We can lower-bound $Y$ in the usual stochastic order by a $(k_\epsilon, 1 - \epsilon)$-binomially distributed random variable $\tilde{Y}$. Since $k_\epsilon(1-\epsilon) > w$, Hoeffding's inequality yields directly that:

\begin{align}
	\Pr[Y < w] \leq \Pr[\tilde{Y} < w] \leq \exp \left( -2 \frac{(k_\epsilon (1 - \epsilon) - w)^2}{k_\epsilon} \right).
\end{align}

Similarly, let $\cptp E \in \mathcal{H}_{\leq k_\delta}(\mathcal{E}_\delta)$. Due to the perfect insensitivity of $\sch P$ to $\Id$, we can now upper-bound $Y$ in the usual stochastic order by a $(k_\delta, \delta)$-binomially distributed random variable $\tilde{Y}$. For $k_\delta\delta < w$, Hoeffding's inequality yields directly that:

\begin{align}
	\Pr[Y \geq w] \leq \Pr[\tilde{Y} \geq w] \leq \exp \left( -2 \frac{(k_\delta\delta - w)^2}{k_\delta} \right).
\end{align}

\end{proof}

We can also obtain a similar boosting result for correctness if we restrict the computations to $\mathsf{BQP}$. In that case, the correctness of a canvas is a bound on the probability that the classical result is not correct. If we bound the probability that more than a certain number of computations fail and this number is chosen such that the decoder is capable of correcting this amount of failure with overwhelming probability, then the overall failure probability will be negligible. This is captured by the following theorem.

\begin{theorem}[Exponential Correctness for $\mathsf{BQP}$ from Parallel Repetitions]\label{thm:par-boost-cor}
  Let $\sch P$ be a trappified scheme on graph $G$ which is $\nu$-correct for $\mathfrak{C} = \mathsf{BQP} \cap \mathsf{MBQC}_{G, \preceq_{\sch P}}$ on an error set $\mathcal{E}_\nu \subseteq \{\cptp E \in \mathcal{P}_V \mid \forall \cptp C \in \mathfrak{C}, \forall T \in \sch P, \Cdev{\cptp E} = \Cdev{\cptp \Id}\}$.
  
  For $n \in \mathbb{N}$ and $w \in [n-1]$, let $\sch P_{\parallel n}$ be the trappified scheme resulting from the compilation defined in Definition~\ref{def:par_compiler}.
  Assume there exists a decoding algorithm $\Deco^{\parallel n}$ which succeeds if there are more than $f_{\parallel n}$ correct results among the $n$ parallel runs.
  
  Let $k_\nu < \frac{1 - \nu - f_{\parallel n}/n}{1 - \nu}  n$. Then, $\sch P_{\parallel n}$ is $\nu_{\parallel n}$-correct on $\mathcal{M}_{\geq n - k_\nu}(\mathcal{E}_\nu)$ where:
  
  \begin{align}
  	\nu_{\parallel n} = \exp \left( -2 \frac{((n - k_\nu)(1 - \nu) - f_{\parallel n})^2}{n - k_\nu} \right).
  \end{align}
\end{theorem}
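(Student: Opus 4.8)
The plan is to fix an arbitrary deviation $\cptp E \in \mathcal{M}_{\geq n - k_\nu}(\mathcal E_\nu)$ and to bound the probability that the decoder $\Deco^{\parallel n}$ fails to output the correct classical result. In the classical-output case this failure probability upper-bounds the trace distance appearing in the definition of $\nu$-correctness (Definition~\ref{def:trap-cor}) — whenever the decoder succeeds the output matches the ideal computation, so any deviation from $\cptp C$ occurs only on the failure event — hence it suffices to control this single quantity. By definition of $\mathcal{M}_{\geq n - k_\nu}(\mathcal E_\nu)$, the deviation $\cptp E$ restricts to an element of $\mathcal E_\nu$ on at least $n - k_\nu$ of the $n$ copies of $G$; I call these the \emph{harmless} rounds, and on the remaining (at most $k_\nu$) rounds I make no assumption, pessimistically treating their results as wrong.

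First I would establish the per-round guarantee. On a harmless round $j$, the restriction $\cptp E_j$ lies in $\mathcal E_\nu$, so $\Cdev{\cptp E_j} = \Cdev{\Id}$ by the defining property of $\mathcal E_\nu$, and the $\nu$-correctness of $\sch P$ on $\mathcal E_\nu$ bounds the trace distance between the round's decoded output and the ideal $\cptp C$ by $\nu$. For classical outputs this means the round returns the correct bit string with probability at least $1 - \nu$. Because the Parallel Repetition Compiler (Definition~\ref{def:par_compiler}) samples each copy independently and runs them on disjoint copies of $G$ with fresh secret randomness, these success events are mutually independent across rounds.

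Next I would carry out the counting step. Let $W$ be the number of harmless rounds that return the correct result; by the previous paragraph $W$ dominates, in the usual stochastic order, a binomially distributed variable $\tilde W$ of parameters $(n - k_\nu, 1 - \nu)$. The decoder succeeds whenever more than $f_{\parallel n}$ of the $n$ rounds are correct, and since every correct harmless round already counts towards this total, the event $\tilde W > f_{\parallel n}$ suffices. The hypothesis $k_\nu < \frac{1 - \nu - f_{\parallel n}/n}{1 - \nu}\,n$ rearranges precisely to $(n - k_\nu)(1 - \nu) > f_{\parallel n}$, i.e.~the mean of $\tilde W$ strictly exceeds $f_{\parallel n}$, so Hoeffding's inequality applied with deviation $t = (n - k_\nu)(1 - \nu) - f_{\parallel n} > 0$ yields
\begin{equation}
\Pr[\Deco^{\parallel n} \text{ fails}] \leq \Pr[\tilde W \leq f_{\parallel n}] \leq \exp\left(-2\frac{((n - k_\nu)(1 - \nu) - f_{\parallel n})^2}{n - k_\nu}\right) = \nu_{\parallel n},
\end{equation}
which is the claimed bound.

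The main obstacle I anticipate is the bookkeeping that links the trace-distance formulation of $\nu$-correctness to an independent per-round Bernoulli success, and in particular checking that the inherent bounded error of the $\mathsf{BQP}$ computation is consistently absorbed into the single parameter $\nu$: unlike Theorem~\ref{thm:exp-cor}, where the intrinsic failure probability $c$ is tracked separately from the deviation count, here one must verify that $\nu$ already accounts both for the harmlessness of the deviation and for the computation's intrinsic error rate. Once this is settled, the remainder is a routine stochastic-domination argument followed by a single Hoeffding tail bound, mirroring the second half of the proof of Theorem~\ref{thm:exp-cor}.
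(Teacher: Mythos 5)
Your proof is correct and follows essentially the same route as the paper's: fix a deviation in $\mathcal{M}_{\geq n - k_\nu}(\mathcal{E}_\nu)$, stochastically dominate the count of correct rounds by a $(n-k_\nu, 1-\nu)$-binomial variable (pessimistically discarding the at most $k_\nu$ affected rounds), and apply Hoeffding's inequality, noting that the hypothesis on $k_\nu$ is exactly $(n-k_\nu)(1-\nu) > f_{\parallel n}$. Your additional bookkeeping — relating the trace-distance formulation of $\nu$-correctness to a per-round Bernoulli success and observing that $\nu$ already absorbs the intrinsic $\mathsf{BQP}$ error — makes explicit what the paper leaves implicit, but it is the same argument.
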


\begin{proof}
In the case of $\mathsf{BQP}$ computations, $\nu_{\parallel n}$ can be seen as a bound on the probability that the computation's classical result is incorrect. The failure can come from two places: either the number of correct computations is below $f_{\parallel n}$, or it is above and the decoder fails. This second option, by assumption, happens with probability at most $p_{\parallel n}$. We now bound the probability of first option.

Let $Z$ be a random variable counting the number of correct runs and let $\cptp E \in \mathcal{M}_{\geq n - k_\nu}(\mathcal{E}_\nu)$. We can lower-bound $Z$ in the usual stochastic order by a $(n - k_\nu, 1 - \nu)$-binomially distributed random variable $\tilde{Z}$. Then, for $(n - k_\nu)(1 - \nu) > f_{\parallel n}$, Hoeffding's inequality yields directly that:

\begin{align}
	\Pr[Z < f_{\parallel n}] \leq \Pr[\tilde{Z} < f_{\parallel n}] \leq \exp \left( -2 \frac{((n - k_\nu)(1 - \nu) - f_{\parallel n})^2}{n - k_\nu} \right).
\end{align} 

\end{proof}

\paragraph{Test and Computation Separation from Parallel Repetitions.}
We can now recover the case where some runs contain tests only while others consist only of the client's computation. This will be based on the following remark

\begin{remark}[Pure Computation]
\label{rem:pure_comp}
A trappified scheme $\sch P$ may also only contain a single trappified canvas on graph $G = (V, E)$ such that $V_T = \emptyset$. This is the opposite case from Remark \ref{rem:pure_traps} above in the sense that all vertices serve to embed a computation of interest and none are devoted to detecting deviations. The decision function always accepts, therefore the detection and insensitivity are $\epsilon = 1$ and $\delta = 0$ respectively for any set. Let $c$ be the bounded-error of computation on $\mathfrak{C} = \mathsf{BQP} \cap \mathsf{MBQC}_{G, \preceq_{\sch P}}$, then $\sch P$ is $c$-correct on $\mathcal{E}_\nu \subseteq \{\cptp E \in \mathcal{P}_V \mid \forall \cptp C \in \mathfrak{C}, \Cdev{\cptp E} = \Cdev{\cptp \Id}\}$, i.e.~the set of harmless deviations.

\end{remark}

We then use Remarks \ref{rem:pure_traps} and \ref{rem:pure_comp} to define trappified schemes $\sch P_C$ and $\sch P_T$ on a graph $G$. $\sch P_C$ contains a single empty trappified canvas (with no trap) which can then be used to embed any computation on graph $G$, with $1$-detection and $0$-insensitivity to all Paulis. On the other hand, $\sch P_T$ only contains pure traps with no space for embedding any computation, which $\epsilon$-detects a set of errors $\mathcal{E}_\epsilon$, is $\delta$-insensitive to $\mathcal{E}_\delta$ and perfectly insensitive to $\{\Id\}$.

Then, Lemma \ref{lem:compos-sch} allows us to compose these two schemes via a probabilistic mixture noted $\sch P_M$. For parameters $d,s\in \mathbb{N}$ and $n=d+s$, $\sch P_M$ chooses schemes $\sch P_C$ and $\sch P_T$ with probabilities $d/n$ and $s/n$ respectively. The parameters for $\sch P_M$ are $\epsilon_M = (d + s\epsilon)/n = 1 - (1-\epsilon)s/n$, $\delta_M = s\delta/n$ and $\nu_M = (s + dc)/n = 1 - (1-c)d/n$. It is also perfectly insensitive to $\{\Id\}$.

For the correctness, we set $f_{\parallel n} = d/2$. Instead of fixing the total number $n$, we sample a new canvas from $\sch P_M$ so long as the number of computations is not equal to $d$. The decoder outputs the majority outcome over the $d$ computation runs, which satisfies the correctness requirement from Theorem \ref{thm:par-boost-cor}.

Then the parallel repetition of Theorems \ref{thm:par-boost} and \ref{thm:par-boost-cor} can be applied to $\sch P_M$ with parameter $w < s$ to yield $\sch P_{\parallel n}$ with the following boosted values:
\begin{align}
	\epsilon_{\parallel n} &= \exp \left( -2 \left(\frac{k_\epsilon (1-\epsilon)}{n} - \frac{w}{s}\right)^2 \frac{s^2}{k_\epsilon} \right),\\
	\delta_{\parallel n} &= \exp \left( -2 \left(\frac{k_\delta \delta}{n} - \frac{w}{s}\right)^2 \frac{s^2}{k_\delta} \right),\\
	\nu_{\parallel n} &= \exp \left( -2 \left(\left(1 - \frac{k_\nu}{n}\right)(1 - c) - \frac{1}{2}\right)^2 \frac{d^2}{n - k_\nu} \right),
\end{align}
for values $k_\epsilon > \frac{w}{(1-\epsilon)s}n$, $k_\delta < \frac{w}{s\delta}n$ and $k_\nu < \frac{1 - 2c}{2 - 2c}n$.

Notice that the bounds on $k_\epsilon, k_\delta, k_\nu$ are identical to those from Theorems \ref{thm:boost} and \ref{thm:exp-cor}, hence they can be recombined as in Theorem \ref{thm:comb}. The values for $\epsilon_{\parallel n}, \delta_{\parallel n}, \nu_{\parallel n}$ obtained here are also simpler than those from Theorems~\ref{thm:boost} since they do not require an optimisation over the parameter $\chi$, while still being exponential. This parallel repetition compiler has the advantage of also working in cases where each trappified pattern in the scheme can also embed computations, making it more general than the one in Definition~\ref{def:rvbqc_bqp_compiler}.  
\end{document}